\DeclareMathAlphabet{\monbb}{U}{bbold}{m}{n}
\newcommand\along{along}
\renewcommand{\epsilon}{\varepsilon}
\newcommand{\N}{\mathbb{N}} 
\newcommand{\Z}{\mathbb{Z}} \newcommand{\Q}{\mathbb{Q}}
\newcommand{\R}{\mathbb{R}}
\newcommand{\K}{\mathbb{K}} 
\newcommand{\U}{\mathbb{U}} 
 \newcommand{\A}{\mathcal{A}}
\newcommand{\B}{\mathcal{B}}
\newcommand{\az}{\mathcal{A}^{\mathbb{Z}}}
\newcommand{\bz}{\mathcal{B}^{\mathbb{Z}}}
\newcommand{\id}{\mathrm{Id}} 
 \newcommand{\s}{\sigma}
 \newcommand{\La}{\mathcal{L}}
\newcommand{\dd}{\delta} 
\newcommand{\e}{\epsilon}
\newcommand{\la}{\Lambda}
\newcommand{\gs}{\Sigma}
\newcommand{\tube}{D}
\newcommand{\aalk}{\textbf{A}_{\K}'}
\newcommand{\aaln}{\textbf{A}_{\N}'}
\newcommand{\alk}{\textbf{A}_{\K}} \newcommand{\aln}{\textbf{A}_{\N}}
 \newcommand{\bl}{\textbf{B}}
\newcommand{\blk}{\textbf{B}_{\K}} \newcommand{\bln}{\textbf{B}_{\N}}
\newcommand{\blz}{\textbf{B}_{\Z}}
\newcommand{\daalk}{\widetilde{\textbf{A}}_{\K}'}
\newcommand{\dalk}{\widetilde{\textbf{A}}_{\K}}
\newcommand{\daln}{\widetilde{\textbf{A}}_{\N}}
\newcommand{\dblk}{\widetilde{\textbf{B}}_{\K}}
\newcommand{\Fon}{\mathcal{F}} \newcommand{\Dir}{\mathcal{D}}
\newcommand{\Ap}{\mathcal{A}_P} 
\newcommand{\apz}{\mathcal{A}_P^{\Z}} 
\newcommand{\Up}{\U_P}
\newcommand{\Fp}{F_P} 
\newcommand{\Fc}{F_C} 
\newcommand\paraset{\mathcal{P}_0}
\newcommand{\siteFs}[2]{\langle #1,#2\rangle_{\s,F}} 
\newcommand{\site}[2]{\langle #1,#2\rangle} 
\newcommand{\cone}[2]{\mathfrak{C}_{#1}(#2)} 
\newcommand\subsubsubsection[1]{\paragraph{#1}}
\newcommand{\rmk}{\textbf{Note. }} 
\newcommand{\st}{\ |\ }
\theoremstyle{plain} \newtheorem{theorem}{Theorem}[section]
\newtheorem{prop}[theorem]{Proposition}
\newtheorem{corollary}[theorem]{Corollary}
\newtheorem{lemma}[theorem]{Lemma}
\theoremstyle{definition} 
\newtheorem{definition}{Definition}[section]
\theoremstyle{remark} 
\newtheorem{remark}{Remark}[section]
\newtheorem{ex}{Example}[section] 
\newcounter{factcount}[theorem]
\newcommand{\THMfont}[1]{{\sl #1}}
\newcommand{\fact}[1]{\refstepcounter{factcount} \vspace{0.1em}  \noindent {\sc Fact \thefactcount: \ }\THMfont{ #1}}
\newcommand{\bprf}[1][Proof:]{\begin{list}{}    {\setlength{\leftmargin}{0.5em} \setlength{\rightmargin}{0em}  \setlength{\listparindent}{1em}}   \item {\em \hspace{-1em}  #1  }}
\newcommand{\eprf}{\end{list}}
\newcommand{\bfactprf}{\bprf}
\newcommand{\efactprf}{ \hfill $\Diamond$~{\scriptsize {\tt Fact~\thefactcount}}\eprf} 
\begin{document}
\begin{frontmatter} \title{Directional Dynamics along Arbitrary
Curves\\ in Cellular Automata}

  \author[lif]{M. Delacourt} \author[lif]{V. Poupet}
\author[latp]{M. Sablik\corref{cora}} \author[lama]{G. Theyssier}
\cortext[cora]{Corresponding author
(\url{sablik@latp.univ-mrs.fr})}

  \address[lif]{LIF, Aix-Marseille Universit\'e, CNRS, 39 rue
Joliot-Curie, 13\hspace{0.2em}013 Marseille, \rlap{France}}
\address[latp]{LATP, Universit\'e de Provence, CNRS, 39, rue Joliot
Curie, 13\hspace{0.2em}453 Marseille Cedex 13, \rlap{France}}
\address[lama]{LAMA, Universit\'e de Savoie, CNRS, 73\hspace{0.2em}376
Le Bourget-du-Lac Cedex, France}

  \begin{abstract} 
    This paper studies directional dynamics on one-dimensional cellular automata, a formalism
    previously introduced by the third author. The central idea is to study the
    dynamical behaviour of a cellular automaton through the conjoint action of
    its global rule (temporal action) and the shift map (spacial action):
    qualitative behaviours inherited from topological dynamics (equicontinuity,
    sensitivity, expansivity) are thus considered along arbitrary curves in
    space-time. The main contributions of the paper concern equicontinuous
    dynamics which can be connected to the notion of consequences of a word.  We
    show that there is a cellular automaton with an equicontinuous dynamics
    along a parabola, but which is sensitive along any linear direction.  We
    also show that real numbers that occur as the slope of a limit linear
    direction with equicontinuous dynamics in some cellular automaton are
    exactly the computably enumerable numbers.
  \end{abstract}

  \begin{keyword} cellular automata, topological dynamics, directional dynamics
  \end{keyword}
\end{frontmatter}

\section*{Introduction}

Introduced by J. von Neumann as a computational device, cellular
automata (CA) were also studied as a model of dynamical systems
\cite{Hedlund-1969}. G. A. Hedlund \textit{et al.} gave a
characterization of CA through their global action on configurations:
they are exactly the continuous and shift-commuting maps acting on the
(compact) space of configurations. Since then, CA were extensively
studied as discrete time dynamical systems for their remarkable
general properties (e.g., injectivity implies surjectivity) but also
through the lens of topological dynamics and deterministic chaos. With
this latter point of view, P. K\r{u}rka \cite{Kurka-1997} has proposed
a classification of 1D CA according to their equicontinuous properties
(see \cite{Sablik-Theyssier-2008} for a similar classification in
higher dimensions). As often remarked in the literature, the
limitation of this approach is to not take into account the
shift-invariance of CA: information flow is rigidly measured with
respect to a particular reference cell which does not vary with time
and, for instance, the shift map is considered as sensitive to initial
configurations.

One significant step to overcome this limitation was accomplished with the
formalism of \emph{directional dynamics} recently proposed by M. Sablik
\cite{Sablik-2008}. The key idea is to consider the action of the rule and that
of the shift simultaneously. CA are thus seen as $\Z^2$-actions (or
$\N\times\Z$-actions for irreversible rules). In \cite{Sablik-2008}, each
qualitative behaviour of K\r{u}rka's classification (equicontinuity, sensitivity,
expansivity) is considered for different linear correlations between the two
components of the $\Z^2$-action corresponding to different linear directions in
space-time. For a fixed direction the situation is similar to K\r{u}rka's
classification, but in \cite{Sablik-2008}, the classification scheme consists in
discussing what sets of directions support each qualitative behaviour.

The restriction to linear directions is natural, but \cite{Sablik-2008} asks
whether considering non-linear directions can be useful. One of the main points
of the present paper is to give a positive response to this question. We are
going to study each qualitative behaviour along arbitrary curves in space-time
and show that, in some CA, a given behaviour appears along some non-linear curve
but not along any linear direction. Another contribution of the paper is to give
a complete characterization of real numbers that can occur as (limit) linear
directions for equicontinuous dynamics.

Properties inherited from classical topological dynamics may have a concrete
interpretation when applied to CA. In particular, as remarked by P. K\r{u}rka,
the existence of equicontinuity points is equivalent to the existence of a
'wall', that is a word whose presence in the initial configuration implies an
infinite strip of \emph{consequences} in space-time (a portion of the lattice
has a determined value at each time step whatever the value of the configuration
outside the 'wall'). In our context, the connection between equicontinuous
dynamics and consequences of a word still apply but in a broader sense since we
consider arbitrary curves in space-time. The examples of dynamic behaviour along
non-trivial curves built in this paper will often rely on particular words whose
set of consequences have the desired shape.

Another way of looking at the notion of {consequences} of a word is to use the
analogy of information propagation and \emph{signals} already developed in the
field of classical algorithmics in CA \cite{Mazoyer-Terrier-1999}. From that
point of view, a word whose consequences follow a given curve in space-time can
be seen as a signal which is \emph{robust} to any pertubations from the
context. Thus, many of our results can be seen as constructions in a
non-standard algorithmic framework where information propagation must be robust
to any context. To achieve our results, we have developed general mechanisms to
introduce a form of robustness (counter technique, section~\ref{sec:examples}). We
believe that, besides the results we obtain, this technique is of some interest on its own.\\

After the next section, aimed at recalling useful definitions, the
paper is organized in four parts as follows:
\begin{itemize}
\item in section \ref{sec:theory}, we extend the theory of directional
  dynamics to arbitrary curves and prove a classification theorem
  analogue to that of \cite{Sablik-2008};
\item in section \ref{sec:examples}, we focus on equicontinuous
  dynamics and give constructions and construction tools; the main
  result is the existence of various CA where equicontinuous dynamics
  occur along some curve but not along others and particularly not
  along any linear direction;
\item in section \ref{sec:linear}, we focus on linear directions corresponding
  to equicontinuous dynamics; \cite{Sablik-2008} showed that the set of slopes
  of such linear directions is an interval (if not empty): we give a
  characterisation of real numbers that can occur as bounds of such intervals.
\item in section \ref{sec:limitations}, we give some negative results
  concerning possible sets of consequences of a word in CA; in
  particular, we show how the set of curves admitting equicontinuous
  dynamics is constrained in reversible CA.
\end{itemize}



\section{Some definitions}

\subsection{Space considerations}

\paragraph{Configuration space} Let $\A$ be a finite set and $\az$ the
{\em configuration space} of $\Z$-indexed sequences in $\A$. If $\A$
is endowed with the discrete topology, $\az$ is metrizable, compact
and totally disconnected in the product topology.  A compatible metric
is given by:
$$\forall x,y\in\az,\quad d_C(x,y)=2^{-\min\{|i|: x_i\ne y_i \ i\in\Z \}}.$$

Consider a not necessarily convex subset $\U\subset\Z$. For $x\in\az$,
denote $x_{\U}\in\A^{\U}$ the restriction of $x$ to $\U$. Given
$w\in\A^{\U}$, one defines the cylinder centered at $w$ by
$[w]_{\U}=\{x\in\az : x_{\U}=w\}$. Denote by $\A^{\ast}$ the set of
all finite sequences or {\em finite words} $w=w_0...w_{n-1}$ with
letters in $\A$; $|w|=n$ is the {\em length} of $w$. When there is no
ambiguity, denote $[w]_i=[w]_{\llbracket i,i+|w|-1\rrbracket}$.

\paragraph{Shift action} The {\em shift} map $\s:\az\rightarrow\az$ is
defined by $\s(x)_i=x_{i+1}$ for $x=(x_m)_{m\in\Z}\in\az$ and
$i\in\Z$. It is a homeomorphism of $\az$.

A closed and $\s$-invariant subset $\gs$ of $\az$ is called a {\em
  subshift}.  For $\U\subset\Z$ denote $\La_{\gs}(\U)=\{ x_{\U} :
x\in\gs \}$ the set of patterns centered at $\U$.  Since $\gs$ is
$\s$-invariant, it is sufficient to consider the words of length
$n\in\N$ for a suitable $n$. We denote $\La_{\gs}(n)=\{x_{\llbracket 0,n-1\rrbracket} :
x\in\gs\}$. The {\em language} of a subshift $\gs$ is defined by
$\La_{\gs}=\cup_{n\in\N}\La_{\gs}(n)$. By compactness, the language
characterizes the subshift.

A subshift $\gs \subseteq \az$ is {\em transitive} if given words $u,v
\in \La_{\gs}$ there is $w \in \La_{\gs}$ such that $uwv \in
\La_{\gs}$. It is {\em mixing} if given $u,v \in \La_{\gs}$ there is
$N\in \N$ such that $uwv \in \La_{\gs}$ for any $n \geq N$ and some $w
\in \La_{\gs}(n)$.

A subshift $\gs\subset\az$ is {\em specified } if there exists
$N\in\N$ such that for all $u,v\in\La_{\gs}$ and for all $n\geq N$
there exists a $\s$-periodic point $x\in\gs$ such that
$x_{\llbracket 0,|u|-1\rrbracket}=u$ and $x_{\llbracket n+|u|,n+|u|+|v|-1\rrbracket}=v$ (see
\cite{Denker-Grillenberger-Sigmund-1976l} for more details).

A subshift $\gs\subset\az$ is {\em weakly-specified } if there exists
$N\in\N$ such that for all $u,v\in\La_{\gs}$ there exist $n\leq N$ and
a $\s$-periodic point $x\in\gs$ such that $x_{\llbracket 0,|u|-1\rrbracket}=u$ and $x_{\llbracket n+|u|,n+|u|+|v|-1\rrbracket}=v$.

Specification (resp. weakly-specification) implies mixing
(resp. transitivity) and density of $\s$-periodic points. Let $\gs$ be
a weakly-specified mixing subshift. By compactness there exists $N \in
\N$ such that for any $x, y \in \gs$ and $i\in\N$ there exist $w \in
\La_{\gs}$, $|w|\leq N$, and $j\in\Z$ such that
$x_{\rrbracket-\infty,i\rrbracket}w\s^j(y)_{\llbracket i+|w|,\infty\llbracket} \in \gs$. If $\gs$ is
specified this property is true with $|w|=n$ and $n\geq N$.

\paragraph{Subshifts of finite type and sofic subshifts} A subshift
$\gs$ is of {\em finite type} if there exist a finite subset
$\U\subset\Z$ and $\mathcal{F}\subset\A^{\U}$ such that $x\in\gs$ if
and only if $\s^m(x)_{\U}\in\mathcal{F}$ for all $m\in\Z$. The
diameter of $\U$ is called an {\em order} of $\gs$.

A subshift $\gs'\subset\bz$ is {\em sofic} if it is the image of a
subshift of finite type $\gs\subset\az$ by a map $\pi: \az \to \bz$,
$\pi((x_i)_{i\in{\Z}})=(\pi(x_i))_{i\in \Z}$, where $\pi:\A\to\B$.

A transitive sofic subshift is weakly-specified and a mixing sofic subshift is
specified. For precise statements and proofs concerning sofic subshifts
and subshifts of finite type see~\cite{Lind-Marcus-1995l}
or~\cite{Kitchens-1998l}.

\subsection{Time considerations}

\subsubsection*{Cellular automata}

%
%

A {\em cellular automaton} (CA) is a dynamical system $(\A^\Z, F)$ defined by a local rule which acts uniformly and
synchronously on the configuration space. That is, there are a finite
segment or {\em neighborhood} $\U\subset\Z$ and a {\em local rule}
$\overline{F}:\A^{\U}\rightarrow\A$ such that
$F(x)_m=\overline{F}((x_{m+u})_{u\in\U})$ for all $x\in\az$ and
$m\in\Z$. The {\em radius} of $F$ is $r(F)=\max\{|u|: u\in\U\}$. By
Hedlund's theorem \cite{Hedlund-1969}, a {\em cellular automaton} is
equivalently defined as a pair $(\az,F)$ where $F:\az\to\az$ is a
continuous function which commutes with the shift.

\paragraph{Considering the past: bijective CA} When the CA is bijective, since
$\az$ is compact, $F^{-1}$ is also a continuous function which commutes with
$\s$. By Hedlund's theorem, $(\az,F^{-1})$ is then also a CA (however the radius
of $F^{-1}$ can be much larger than that of $F$). In this
case one can study the $\Z$-action $F$ on $\az$ and not only $F$ as an
$\N$-action. This means that we can consider positive (future) and negative
(past) iterates of a configuration.

Thus, if the CA is bijective, we can study the dynamic of the CA as an
$\N$-action or a $\Z$-action. In the general case, we consider the
$\K$-action of a CA where $\K$ can be $\N$ or $\Z$.

\subsection{A CA as a $\Z\times\K$-action}

\paragraph{Space-time diagrams}Let $(\az,F)$ be a CA, since $F$
commutes with the shift $\s$, we can consider the $\Z\times\K$-action
$(\s,F)$. For $x\in\az$, we denote by
${\siteFs{m}{n}(x)=\bigl(\s^m\circ F^n(x)\bigr)_{0}}$ the {\em color of
  the site} $\site{m}{n}\in\Z\times\K$ generated by $x$.

Adopting a more geometrical point of view, we also refer to this
coloring of $\Z\times\K$ as the \emph{space-time diagram} generated by
$x$. 


\paragraph{Region of consequences}Let $X \subset\az$ be any set of
configurations. We define the {\em region of consequences} of $X$ by:
$$\cone{F}{X}=\left\{\site{m}{n} \in\Z\times\K: \forall x,y\in X \textrm{ one has }\siteFs{m}{n}(x)=\siteFs{m}{n}(y)\right\}.$$  

This set corresponds to the sites that are fixed by all $x\in X$ under
the $\Z\times\K$-action $(\s,F)$, or equivalently, sites which are
identically colored in all space-time diagrams generated by some $x\in
X$. The main purpose of this article is to study this set and make links
with notions from topological dynamics.

Let $(\az,F)$ be a CA of neighborhood $\U=\llbracket r,s\rrbracket$ and let
$u\in\A^{+}$. An example of such set $X$ that will be used throughout
the paper is $[u]_0$. Trivially, one has (see figure~\ref{fig:wordcons})
\[
	\left\{ \site{m}{n}: nr\leq m < |u| - ns \right\} \subseteq
	\cone{F}{[u]_0} \subseteq
	\left\{\site{m}{n} : -nr \leq m < |u| + ns \right\}
\]
In the sequel, we often call $\cone{F}{[u]_0}$ the
\emph{cone of consequences of $u$}. Note that the inclusions above do
not tell whether $\cone{F}{[u]_0}$ is finite or infinite.

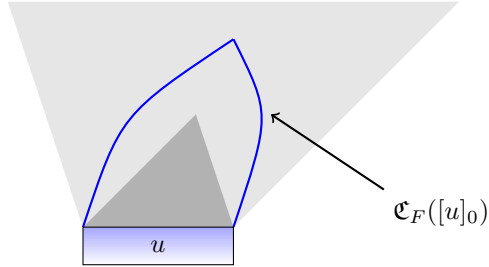
\begin{figure}
  \centering
  \begin{tikzpicture}
    \fill[gray!20!white] (0,.5)--(-1,3.5)--(5,3.5)--(2,.5)--cycle;
    \fill[fill=gray!60!white] (0,.5)--(2,.5)--(1.5,2)--cycle;
    \draw[blue, thick] (0,.5)..controls (.5, 2)..(2,3);
    \draw[blue, thick] (2,.5)..controls (2.5, 2)..(2,3);
    \draw (4,1) node[below right] {$\cone{F}{[u]_0}$};
    \draw[->, thick] (4,1)--(2.5,2);
    \draw[bottom color=white, top color=blue!40!white] (0,0)--(2,0)--(2,.5)--(0,.5)--cycle;
    \draw (1,.25) node {$u$};
  \end{tikzpicture}
  \caption{Consequences of a word $u$.}
  \label{fig:wordcons}
\end{figure}


\section{Dynamics along an arbitrary curve}
\label{sec:theory}

In this section, we define sensitivity to initial conditions along a
curve and we establish a connection with cones of consequences.  What
we call a curve is simply a map $h:\K\rightarrow\Z$ giving a position
in space for each time step. Such $h$ can be arbitrary in the
following definitions, but later in the paper we will put restrictions
on them to adapt to the local nature of cellular automata.

\subsection{Sensitivity to initial conditions along a curve}

Let $\gs$ be a subshift of $\az$ and assume $\K=\N$ or $\Z$.

Let $x\in\gs$, $\e>0$ and $h:\K\to\Z$. The {\em ball (relative to
  $\gs$) centered at $x$ of radius $\e$} is given by
$B_{\gs}(x,\e)=\{y\in\gs : d_C(x,y)<\e\}$ and the {\em tube \along{}
  $h$ centered at $x$ of radius $\e$} is (see figure~\ref{fig:tubr}):
$$
\tube^{h}_{\gs}(x,\e,\K)=\{y\in\gs : d_C(\s^{h(n)}\circ
F^n(x),\s^{h(n)}\circ F^n(y))<\e , \forall n\in\K \}.
$$

Notice that one can define a distance
$D(x,y)=\sup(\{d_C(\s^{h(n)}\circ F^n(x),\s^{h(n)}\circ F^n(y)) :
\forall n\in\N\})$ for all $x,y\in\gs$. The tube
$\tube^{h}_{\gs}(x,\e,\K)$ is then nothing else than the open ball of
radius $\e$ centered at $x$.

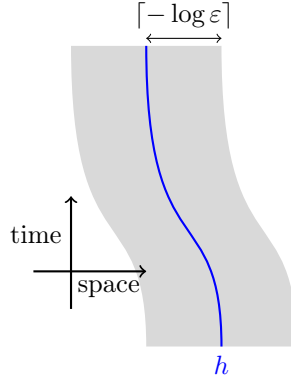
\begin{figure}
  \centering
  \begin{tikzpicture}
    \fill[fill=gray!30] (1,-1)..controls (1, 1) and (0,0)..(0,3)--(2,3) .. controls (2,0) and (3,1) .. (3,-1)--cycle;
    \draw[->, thick] (-.5,0)--(1,0);
    \draw[->, thick] (0,-.5)--(0,1);
    \draw (.5,0) node[below] {space};
    \draw (0,.5) node[left] {time};
    \draw[blue, thick] (2,-1)..controls (2, 1) and (1,0)..(1,3);
    \draw[blue] (2,-1) node[below] {$h$};
    \draw[<->] (1,3.1)--(2,3.1);
    \draw (1.5,3.1) node[above] {$\lceil-\log\e\rceil$};
  \end{tikzpicture}
  \caption{Tube along $h$ of width $\e$ centered at $x$. The gray
    region is where $F^n(x)$ and $F^n(y)$ must match.}
  \label{fig:tubr}
\end{figure}

If the CA is bijective, one can assume that $\K=\Z$.

\begin{definition} Assume $\K=\N$ or $\Z$. Let $(\az,F)$ be a CA,
$\gs\subset\az$ be a subshift and $h:\K\to\Z$.
  \begin{itemize}
  \item The set $Eq^{h}_{\K}(\gs,F)$ of {\em $(\K,\gs)$-equicontinuous
points \along{} $h$} is defined by
    $$x\in Eq^{h}_{\K}(\gs,F) \Longleftrightarrow \forall \e>0, \exists\dd>0, \ B_{\gs}(x,\dd)\subset \tube^{h}_{\gs}(x,\e,\K).$$


  \item $(\az,F)$ is (uniformly) {\em $(\K,\gs)$-equicontinuous \along{} $h$} if
    $$\forall \e>0, \exists \dd>0, \forall x\in\gs, \ B_{\gs}(x,\dd)\subset \tube^{h}_{\gs}(x,\e,\K).$$

  \item $(\az,F)$ is {\em $(\K,\gs)$-sensitive \along{} $h$} if
    $$\exists \e>0, \forall \dd>0, \forall x\in\gs , \ \exists y\in B_{\gs}(x,\dd)\smallsetminus \tube^{h}_{\gs}(x,\e,\K).$$

  \item $(\az,F)$ is {\em $(\K,\gs)$-expansive \along{} $h$} if
    $$\exists \e>0, \forall x\in\gs, \ \tube^{h}_{\gs}(x,\e,\K)=\{x\}.$$
  \end{itemize}

  Since the domain of a CA is a two sided fullshift, it is possible to
break up the concept of expansivity into right-expansivity and
left-expansivity. The intuitive idea is that `information'' can move by
the action of a CA to the right and to the left.

  \begin{enumerate}
  \item[$\bullet$] $(\az,F)$ is {\em $(\K,\gs)$-right-expansive \along{} $h$} if there exists $\e>0$ such that
$\tube_{\gs}^{h}(x,\e,\K)\cap \tube_{\gs}^{h}(y,\e,\K)=\emptyset$ for
all $x,y\in\gs$ such that $x_{\llbracket 0,+\infty\llbracket}\ne y_{\llbracket 0,+\infty\llbracket}$.

  \item[$\bullet$] $(\az,F)$ is {\em $(\K,\gs)$-left-expansive \along{} $h$} if there exists $\e>0$ such that
$\tube_{\gs}^{h}(x,\e,\K)\cap \tube_{\gs}^{h}(y,\e,\K)=\emptyset$ for
all $x,y\in\gs$ such that $x_{\rrbracket-\infty,0\rrbracket}\ne y_{\rrbracket-\infty,0\rrbracket}$.
  \end{enumerate}

  Thus the CA $(\az,F)$ is $(\K,\gs)$-expansive \along{}~$h$ if it is
  both $(\K,\gs)$-left-expansive and $(\K,\gs)$-right-expansive
  \along{}~$h$.
\end{definition}

For $\alpha\in\R$, define:
$$\begin{array}{llll}
  h_{\alpha} : & \K & \longrightarrow & \Z \\ & n & \longmapsto &
\lfloor \alpha n\rfloor.
\end{array}$$ Thus, dynamics \along{} $\alpha$ introduced
in~\cite{Sablik-2008} correspond to dynamics along $h_{\alpha}$
defined in this paper.

\subsection{Blocking words for functions with bounded variation}

To translate equicontinuity concepts into space-time diagrams
properties, we need the notion of {\em blocking word} \along{}
$h$. The {\em wall} generated by a blocking word can be interpreted as
a particle which has the direction~$h$ and kills any information
coming from the right or the left. For that we need that the variation
of the function $h$ is bounded.

\begin{definition}
The {\em set of functions with bounded variation} is defined by:
  \[\Fon=\left\{h:\K\to\Z : \exists M>0,\ \forall n\in\K,\  |h(n+1)-h(n)|\leq M\right\}.\]
\end{definition}

Note that $\Fon$ depends on $\K$, but we will never make this explicit
and the context will always make this notation unambiguous in the
sequel.

\begin{definition} Assume $\K=\N$ or $\Z$. Let $(\az,F)$ be a CA with
  neighborhood $\U=\llbracket r,s\rrbracket$ (same neighborhood for $F^{-1}$ if $\K=\Z$).

  Let $\gs\subset\az$ be a subshift, $h\in\Fon$, $e\in\N$ such that
  \[
	e > \max_{n\in\K}(|h(n+1)-h(n)|+s,|h(n+1)-h(n)|-r)
	\]
	and $u\in\La_{\gs}$ with $|u|\geq e$. The
  word $u$ is a {\em $(\K,\gs)$-blocking word \along{} $h$ and width
$e$} if there exists a $p\in\Z$ such that (see figure~\ref{fig:blockw}):
  $$\cone{F}{\gs\cap[u]_p}\supset\left\{ \site{m}{n} \in\Z\times\K: h(n)\leq m < h(n)+e \right\}.$$
\label{def:blockingword}
\end{definition}
\begin{figure}
  \centering
  \begin{tikzpicture}[scale=.5]
    \foreach \i in {0,...,3}%
    \fill[fill=gray!30] (\i,\i)-- +(4,0)-- +(4,1)-- +(0,1)--cycle;%
    \foreach \i in {0,...,3}%
    \fill[fill=gray!30] (-\i,\i)++(4,4)--+(4,0)--+(4,1)--+(0,1)--cycle;%
    \draw[thick,fill=gray!30] (-2,-1)-- +(8,0)-- +(8,1)-- +(0,1)--cycle;%
    \draw[<->] (1,8)++(0,.2)--+(4,0);
    
    \newcommand\localrule[1]{%
      \draw[fill=black] #1++(.5,.5) circle (.2cm);
      \draw[fill=black] #1++(-1.5,-.5) circle (.2cm);
      \draw[fill=black] #1++(2.5,-.5) circle (.2cm);
      \draw[dotted] #1++(-1.5,-.5)--+(4,0);
      \draw[dotted] #1++(.5,.5)--+(0,-1);
      \path #1++(-1.5,-.5)--+(2,0) node[sloped,below,midway] {$-r$};
      \path #1++(.5,-.5)--+(2,0) node[sloped,below,midway] {$s$};
    }
    
    \localrule{(1,2)}
    \localrule{(6,6)}
    \path (1,8)++(0,.2) -- +(4,0) node[sloped,above,midway] {$\geq e$};
    \draw (2,-.5) node {$u$};
  \end{tikzpicture}
  \caption{Blocking word $u$ of width $e$ for a CA of neighborhood $\llbracket r,s\rrbracket$. The
    gray region represents the consequences of $u$.}
  \label{fig:blockw}
\end{figure}

The evolution of a cell $i\in\Z$ depends on the cells
$\llbracket i+r,i+s\rrbracket$. Thus, due to condition on $e$, it is easy to deduce that
if $u$ is a $(\K,\gs)$-blocking word \along{} $h$ and width $e$, then
for all $j\in\Z$, $x,y\in [u]_j\cap\gs$ such that
$x_{\rrbracket-\infty,j\rrbracket}=y_{\rrbracket-\infty,j\rrbracket}$ and $n\in\K$ one has
$F^n(x)_i=F^n(y)_i$ for $i\leq h(n)+p+e+j$.  Similarly for all $x,y\in
[u]_j\cap\gs$ such that $x_{\llbracket j,\infty\llbracket}=y_{\llbracket j,\infty\llbracket}$, one has
$F^n(x)_i=F^n(y)_i$ for all $i \geq h(n)+p$.  Intuitively, no
information can cross the wall \along{} $h$ and width $e$ generated by
the $(\K,\gs)$-blocking word.

The proof of the classification of CA given in \cite{Kurka-1997} can
be easily adapted to obtain a characterization of CA which have
equicontinuous points \along{} $h$.

\begin{prop}\label{bloc-equicont} Assume $\K=\N$ or $\Z$. Let
$(\az,F)$ be a CA, $\gs\subset\az$ be a transitive subshift and
$h\in\Fon$. The following properties are equivalent:
  \begin{enumerate}
  \item $(\az,F)$ is not $(\K,\gs)$-sensitive \along{} $h$;

  \item $(\az,F)$ has a $(\K,\gs)$-blocking word \along{} $h$;

  \item $Eq^{h}_{\K}(\gs,F)\ne\emptyset$ is a $\s$-invariant dense
$G_{\dd}$ set.
  \end{enumerate}
\end{prop}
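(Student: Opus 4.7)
The plan is to prove the cyclic chain $(3)\Rightarrow(1)\Rightarrow(2)\Rightarrow(3)$, mirroring K\r{u}rka's classical argument but adapting each step to the directional setting. The first link $(3)\Rightarrow(1)$ is automatic: a single point $x_0\in Eq^{h}_{\K}(\gs,F)$ supplies, for every $\e>0$, a $\dd>0$ with $B_\gs(x_0,\dd)\subseteq\tube^{h}_{\gs}(x_0,\e,\K)$, which falsifies the universal quantifier over $x$ in the definition of $(\K,\gs)$-sensitivity.

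For $(1)\Rightarrow(2)$, I would extract the blocking word directly from a non-sensitive configuration, without appealing to transitivity. Fix an integer $L$ large enough that $2L+1\geq e$, where $e$ is any width meeting the constraint of Definition~\ref{def:blockingword}; apply non-sensitivity at $\e=2^{-L}$ to obtain $x\in\gs$ and $N\geq L$ with $B_\gs(x,2^{-N})\subseteq\tube^{h}_{\gs}(x,\e,\K)$. Setting $u=x_{\llbracket -N,N\rrbracket}$, any two elements of $\gs\cap[u]_{-N}$ both agree with $x$, hence with each other, on the sites $\llbracket h(n)-L,h(n)+L\rrbracket$ at each time $n\in\K$, so $u$ is a $(\K,\gs)$-blocking word along $h$ with $p=-N$.

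For $(2)\Rightarrow(3)$, let $u$ be a blocking word of width $e$ at parameter $p$, and set
\[ E_M=\{x\in\gs : u \text{ occurs in } x \text{ at some position } i\leq -M \text{ and some position } j\geq M\},\qquad E=\bigcap_{M\in\N}E_M. \]
Each $E_M$ is open, so $E$ is $\s$-invariant and $G_\dd$; density in $\gs$ follows by iterating transitivity to splice $u$ arbitrarily far on both sides of any finite word of $\La_\gs$ and taking a compactness limit. The crucial inclusion is $E\subseteq Eq^{h}_{\K}(\gs,F)$: when $u$ occurs in $x$ at positions $p_1\ll 0\ll p_2$, shift-equivariance of $\cone{F}{\cdot}$ produces two walls $\{(m,n):h(n)+p_k-p\leq m<h(n)+p_k-p+e\}$ for $k=1,2$, whose horizontal separation $p_2-p_1$ is \emph{constant in $n$}. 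Choosing $\dd$ small enough to preserve both occurrences of $u$ in every $y\in B_\gs(x,\dd)$, the orbits of $x$ and $y$ agree throughout the entire strip between the two walls, which covers $\llbracket h(n)-L,h(n)+L\rrbracket$ uniformly in $n$ once $|p_1|,|p_2|$ are large enough.

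The delicate step is this last inclusion, and within it the information-blocking property of a wall: one must check that the lower bound on $e$ imposed by Definition~\ref{def:blockingword} — which is precisely where the hypothesis $h\in\Fon$ enters — really prevents information from crossing a wall despite the fact that the wall bends according to $h$. Once this is in hand, the constancy of the spacing $p_2-p_1$ between two translated walls is automatic from shift equivariance, and makes the directional version of the classification go through essentially unchanged.
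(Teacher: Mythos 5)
Your proof is correct and follows essentially the same route as the paper: $(3)\Rightarrow(1)$ is immediate from the definitions, $(1)\Rightarrow(2)$ extracts a blocking word directly from the negation of sensitivity (no transitivity needed there), and $(2)\Rightarrow(3)$ uses transitivity and a Baire-category argument to produce a dense $G_\dd$ of points containing occurrences of the blocking word arbitrarily far on both sides, the two resulting parallel walls trapping the central information. The only cosmetic difference is the orientation of the cycle of implications and the explicit $E_M$ notation, neither of which changes the substance.
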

\begin{proof} Let $\U=\llbracket r,s\rrbracket$ be a neighborhood of $F$ (and also of
$F^{-1}$ if $\K=\Z$).

  $(1)\Rightarrow(2)$ Let
$e\geq\max_{n\in\K}(|h(n+1)-h(n)|+1+s,|h(n+1)-h(n)|+1-r)$. If
$(\az,F)$ is not $(\K,\gs)$-sensitive \along{} $h$, then there exist
$x\in\gs$ and $k,p\in\N$ such that for all $y\in\gs$ verifying
$x_{\llbracket 0,k\rrbracket}=y_{\llbracket 0,k\rrbracket}$ one has:
  $$\forall n\in\K,\ \s^{h(n)}\circ F^n(x)_{\llbracket p,p+e-1\rrbracket}=\s^{h(n)}\circ F^n(y)_{\llbracket p,p+e-1\rrbracket}.$$
  Thus $x_{\llbracket 0,k\rrbracket}$ is a $(\K,\gs)$-blocking word \along{} $h$ and
width $e$.

  $(2)\Rightarrow(3)$ Let $u$ be a $(\K,\gs)$-blocking word \along{}
$h$. Since $(\gs,\s)$ is transitive, then there exists $x\in\gs$
containing an infinitely many occurrences of $u$ in positive and
negative coordinates. Let $k\in\N$. There exists $k_1\geq k$ and
$k_2\geq k$ such that
$x_{\llbracket-k_1,-k_1+|u|-1\rrbracket}=x_{\llbracket k_2,k_2+|u|-1\rrbracket}=u$. Since $u$ is a
$(\K,\gs)$-blocking word \along{} $h$, for all
$y\in\gs$ such that $y_{\llbracket -k_1,k_2+|u|-1\rrbracket}=x_{\llbracket -k_1,k_2+|u|-1\rrbracket}$ one
has
  $$\s^{h(n)}\circ F^n(x)_{\llbracket-k,k\rrbracket}=\s^{h(n)}\circ F^n(y)_{\llbracket-k,k\rrbracket} \quad \forall n\in\K.$$
  One deduces that $x\in Eq^{h}_{\K}(\gs,F)$.

  Moreover, since $\gs$ is transitive, the subset of points in $\gs$
containing infinitely many occurrences of $u$ in positive and negative
coordinates is a $\s$-invariant dense $G_{\dd}$ set of $\gs$.

  $(3)\Rightarrow(1)$ Follows directly from definitions.
\end{proof}

\begin{remark} When $\gs$ is not transitive one can show that any
$(\K,\gs)$-equicontinuous point \along{} $h$ contains a
$(\K,\gs)$-blocking word \along{} $h$. Reciprocally, a point $x\in\gs$
containing infinitely many occurrences of a $(\K,\gs)$-blocking word
\along{} $h$ in positive and negative coordinates is a
$(\K,\gs)$-equicontinuous point \along{} $h$. However, if $\gs$ is not
transitive, the existence of a $(\K,\gs)$-blocking word does not imply
that one can repeat it infinitely many times.
\end{remark}

\subsection{A classification following a curve}

Thanks to Proposition~\ref{bloc-equicont} it is possible to establish
a classification as in~\cite{Kurka-1997}, but following a given curve.

\begin{theorem}\label{classification} Assume $\K=\N$ or $\Z$. Let
$(\az,F)$ be a CA, $\gs\subset\az$ be a transitive subshift and
$h\in\Fon$. One of the following cases holds:
  \begin{enumerate}
  \item $Eq^{h}_{\K}(\gs,F)=\gs$ $\Longleftrightarrow$ $(\az,F)$ is
$(\K,\gs)$-equicontinuous \along{} $h$;
  \item $\emptyset\ne Eq^{h}_{\K}(\gs,F)\ne\gs$ $\Longleftrightarrow$
$(\az,F)$ is not $(\K,\gs)$-sensitive \along{} $h$
$\Longleftrightarrow$ $(\gs,F)$ has a $(\K,\gs)$-blocking word \along{} $h$;

  \item $(\az,F)$ is $(\K,\gs)$-sensitive \along{} $h$ but is not
$(\K,\gs)$-expansive \along{} $h$;

  \item $(\az,F)$ is $(\K,\gs)$-expansive \along{} $h$.
  \end{enumerate}
\end{theorem}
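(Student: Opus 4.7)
My plan is to verify that the four cases are mutually exclusive and exhaustive, and then prove the non-trivial equivalences announced inside cases (1) and (2). Mutual exclusion is essentially definitional: cases (1) and (2) are distinguished by whether $Eq^{h}_{\K}(\gs,F)$ exhausts $\gs$, and both force $Eq^{h}_{\K}(\gs,F) \ne \emptyset$, which by Proposition~\ref{bloc-equicont} rules out sensitivity and hence cases (3) and (4); cases (3) and (4) are in turn separated by the definition of expansivity. Exhaustivity is obtained by splitting on the cardinality of $Eq^{h}_{\K}(\gs,F)$: if this set is empty then Proposition~\ref{bloc-equicont} (contrapositive) forces sensitivity, and expansivity then decides between cases (3) and (4).

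Inside case (2), nothing new is needed: the three equivalences are precisely the content of Proposition~\ref{bloc-equicont}, together with the observation that $Eq^{h}_{\K}(\gs,F) \ne \gs$ prevents membership in case (1).

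The only real work lies in case (1): the implication ``$(\az,F)$ is uniformly $(\K,\gs)$-equicontinuous along $h$ $\Rightarrow$ $Eq^{h}_{\K}(\gs,F) = \gs$'' is immediate from the definitions, and for the converse I would invoke compactness of $\gs$ together with the supremum pseudometric $D(x,y) = \sup_{n \in \K} d_C(\s^{h(n)} \circ F^n(x), \s^{h(n)} \circ F^n(y))$. Given $\epsilon > 0$, the hypothesis furnishes for each $x \in \gs$ a radius $\delta(x) > 0$ with $B_{\gs}(x, \delta(x)) \subset \tube^{h}_{\gs}(x, \epsilon/2, \K)$; I would extract a finite subcover of $\gs$ by balls $B_{\gs}(x_i, \delta(x_i)/2)$ and set $\delta := \min_i \delta(x_i)/2$. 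A routine triangle-inequality argument in $D$ then yields $B_{\gs}(x, \delta) \subset \tube^{h}_{\gs}(x, \epsilon, \K)$ for every $x \in \gs$, which is uniform equicontinuity along $h$.

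The step I expect to be most delicate is exactly this upgrade from pointwise to uniform equicontinuity, because a priori the family $\{\s^{h(n)} \circ F^n\}_{n \in \K}$ is only assumed to be pointwise equicontinuous, and one must ensure the compactness argument passes uniformly over $n \in \K$ (not just for each fixed $n$, which would follow from Hedlund's theorem alone). Everything else is formal and insensitive to the concrete shape of the curve, the role of the bounded variation hypothesis $h \in \Fon$ having already been absorbed into Proposition~\ref{bloc-equicont}.
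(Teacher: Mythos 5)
Your proposal is correct and follows essentially the same route as the paper: the only substantive step is the implication $Eq^{h}_{\K}(\gs,F)=\gs\Rightarrow$ uniform equicontinuity, which the paper obtains by viewing $Eq^{h}_{\K}(\gs,F)$ as the set of continuity points of $\id:(\gs,d_C)\to(\gs,D)$ and invoking compactness (continuity on a compact space implies uniform continuity), while everything else is delegated to Proposition~\ref{bloc-equicont}. Your explicit finite-subcover and triangle-inequality argument in the pseudometric $D$ is just the spelled-out version of that same Heine--Cantor step, and it correctly handles the uniformity over $n\in\K$ since $D$ is already the supremum over all $n$.
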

\begin{proof}

  First we prove the first equivalence. From definitions we deduce
  that if $(\az,F)$ is $(\K,\gs)$-equicontinuous \along{} $h$ then
  $Eq^{h}_{\K}(\gs,F)=\gs$. In the other direction, consider the
  distance $D(x,y)=\sup(\{d_C(\s^{h(n)}\circ F^n(x),\s^{h(n)}\circ
  F^n(y)) : \forall n\in\N\})$ mentionned earlier.  $Eq^h_{\K}(\gs,F)$
  is the set of equicontinuous points of the function
  $\id:(\gs,d_C)\to(\gs,D)$. By compactness, if this function is
  continuous on $\gs$, then it is uniformly continuous. One deduces
  that $(\az,F)$ is $(\K,\gs)$-equicontinuous \along{} $h$.

  The second equivalence and the classification follow directly from
Proposition~\ref{bloc-equicont}.
\end{proof}


\subsection{Sets of curves with a certain kind of dynamics}


We are going to study the sets of curves along which a certain kind of
dynamics happens. We obtain a classification similar at the
classification obtained in~\cite{Sablik-2008} but not restricted to
linear directions.

\begin{definition}
  Assume $\K=\N$ or $\Z$. Let $(\az,F)$ be a CA and $\gs$ be a
  subshift. We define the following sets of curves.
  \begin{itemize}
  \item Sets corresponding to topological equicontinuous properties:
    \begin{eqnarray*} \alk(\gs,F)&=&\{h\in\Fon : Eq^{h}_{\K}(\gs,F)
      \ne\emptyset\},\\ \textrm{and }\aalk(\gs,F)&=&\{h\in\Fon :
      Eq^{h}_{\K}(\gs,F)=\gs\}.
    \end{eqnarray*} One has $\aalk(\gs,F)\subset\alk(\gs,F).$
  
  \item Sets corresponding to topological expansive properties:
    \begin{eqnarray*} \blk(\gs,F)&=&\{h\in\Fon : (\az,F) \textrm{ is
        $(\K,\gs)$-expansive \along{} }h \},\\
      \blk^{r}(\gs,F)&=&\{h\in\Fon :
      (\az,F) \textrm{ is $(\K,\gs)$-right-expansive \along{} }h\},\\
      \textrm{and }\quad \blk^{l}(\gs,F)&=&\{h\in\Fon : (\az,F)
      \textrm{ is $(\K,\gs)$-left-expansive \along{} }h \}.
    \end{eqnarray*} One has
    $\blk(\gs,F)=\blk^{r}(\gs,F)\cap\blk^{l}(\gs,F).$
  \end{itemize}
\end{definition}
\begin{remark} The set of directions which are $(\K,\gs)$-sensitive is
  $\Fon\setminus\alk(\gs,F)$, so it is not necesary to study this set.
\end{remark}



Let $\Dir=\{h_{\alpha}:\alpha\in\R\}$. In~\cite{Sablik-2008}, we
consider the sets $\dalk(\gs,F)=\alk(\gs,F)\cap\Dir$,
$\daalk(\gs,F)=\aalk(\gs,F)\cap\Dir$ and
$\dblk(\gs,F)=\blk(\gs,F)\cap\Dir$.

The remaining part of the section aims at generalizing this classification to
$\Fon$, the set of curves with bounded variation.

\subsection{Equivalence and order relation on $\Fon$}

\begin{definition} Let $h,k\in\Fon$.

  Put $h\precsim k$ if there exists $M>0$ such that $h(n)\leq k(n)+M$
for all $n\in\K$.

  Define $h\sim k$ if there exists $M>0$ such that $k(n)-M\leq
h(n)\leq k(n)+M$ for all $n\in\K$.

  Define $h\prec k$ if $h\precsim k$ and $h\nsim k$.
\end{definition}

It is easy to verify that $\precsim$ is an semi-order relation on
$\Fon$ and $\sim$ is the equivalence relation on $\Fon$ associated to
$\precsim$.

\begin{prop}
	\label{prop:order_on_F}
	Let $(\az,F)$ be a CA, $\gs$ be a transitive subshift and
	$h,k\in\Fon$.
	\begin{itemize}
		\item If $h\precsim k$ then $h\in\blk^{r}(\gs,F)$ implies
  	$k\in\blk^{r}(\gs,F)$ and $k\in\blk^{l}(\gs,F)$ implies
  	$h\in\blk^{l}(\gs,F)$.

		\item If $h\sim k$ then $h\in\aalk(\gs,F)$ (resp. in $\alk(\gs,F)$,
  	$\blk^{l}(\gs,F)$, $\blk^{r}(\gs,F)$, $\blk(\gs,F)$) implies
  	$k\in\aalk(\gs,F)$ (resp. in $\alk(\gs,F)$, $\blk^{l}(\gs,F)$,
  	$\blk^{r}(\gs,F)$, $\blk(\gs,F)$).
	\end{itemize}
\end{prop}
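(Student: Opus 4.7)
My plan is to prove the two bullets separately. The second (about $\sim$) is cleaner and I handle it first.

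For the second bullet, assume $h \sim k$ with $|h(n) - k(n)| \le M$ for all $n$. The strategy is to establish the mutual tube inclusions
\[
  \tube^h_\gs(x, 2^{-(N+M)}, \K) \subseteq \tube^k_\gs(x, 2^{-N}, \K)
  \quad \text{and} \quad
  \tube^k_\gs(x, 2^{-(N+M)}, \K) \subseteq \tube^h_\gs(x, 2^{-N}, \K),
\]
both of which follow from the elementary interval inclusions $[h(n) - N, h(n) + N] \subseteq [k(n) - (N+M), k(n) + (N+M)]$ and its symmetric counterpart, once translated into the $d_C$-metric. These symmetric inclusions let each of the five properties (pointwise equicontinuity, uniform equicontinuity, left-, right-, and two-sided expansivity) transfer between $h$ and $k$ by rescaling the $\epsilon$ and $\delta$ constants by factors of $2^{\pm M}$; the quantifier patterns in the definitions carry over verbatim.

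For the first bullet I treat right-expansivity; the left-expansive case is symmetric. Suppose $h \in \blk^r(\gs,F)$ with constant $2^{-N}$ and $h(n) \le k(n) + M$. Given $x,y \in \gs$ with $x_{\llbracket 0,\infty\llbracket} \ne y_{\llbracket 0,\infty\llbracket}$, right-expansivity along $h$ provides $(n,p)$ with $|p - h(n)| \le N$ and $F^n(x)_p \ne F^n(y)_p$. The inequality $h(n) \le k(n) + M$ gives $p \le k(n) + M + N$, so when additionally $p \ge k(n) - N'$ (for the $N'$ to be chosen), the pair $(n,p)$ itself witnesses separation along $k$. The delicate case is when $p$ lies far to the left of $k(n)$. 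Here the plan is to trace the difference backward through the local-rule dependency chain, obtaining $(p_m,m)_{m=0,\dots,n}$ with $p_n = p$, each $F^m(x)_{p_m} \ne F^m(y)_{p_m}$, and $|p_{m+1} - p_m| \le r(F)$, and then to use the bounded variation of $k$ to locate an intermediate time $m^* \le n$ at which $k(m^*)$ is close to $p_{m^*}$.

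The hard part is this last step. The plan is to track the scalar sequence $\delta(m) := k(m) - p_m$, whose increments are bounded by $M_k + r(F)$, where $M_k$ is the variation bound of $k$. Since $\delta(n) = k(n) - p$ is large positive by the delicate-case assumption, and $\delta(0)$ is controlled via $k(0)$ together with the chain anchor $p_0 \in [p + nr, p + ns]$, a discrete intermediate-value argument on $\{0,1,\dots,n\}$ produces $m^*$ with $|\delta(m^*)|$ bounded by the jump bound. This gives a witness $(m^*, p_{m^*})$ of separation along $k$, and the resulting $N'$ depends on $M$, $N$, $r(F)$, the variation bound of $k$, and $|k(0)|$. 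I expect the main obstacle to be the one-sided nature of $\precsim$: unlike for $\sim$, the window comparison is not direct, and one must combine the bounded variation of the curves with the finite radius of $F$ to transport witnesses of separation across tubes.
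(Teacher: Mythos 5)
The paper offers no argument for this proposition (its proof reads ``Straightforward''), so there is nothing to compare your route against; I can only assess it on its own terms. Your treatment of the second bullet is correct and is surely what the authors have in mind: when $|h(n)-k(n)|\leq M$ for all $n$, the mutual inclusions $\tube^{h}_{\gs}(x,2^{-(N+M)},\K)\subseteq \tube^{k}_{\gs}(x,2^{-N},\K)$ and its symmetric counterpart hold, and since all five properties are phrased purely in terms of balls and tubes, they transfer verbatim.

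The first bullet, however, has a genuine gap at the anchoring of your backward chain. After tracing the detected difference at the site $\site{p}{n}$ (with $p$ far to the left of $k(n)$) back through the dependency cone, you land at some $p_0$ with $x_{p_0}\neq y_{p_0}$; but $p_0$ is merely \emph{some} coordinate where $x$ and $y$ differ. The hypothesis $x_{\llbracket 0,+\infty\llbracket}\neq y_{\llbracket 0,+\infty\llbracket}$ guarantees a difference somewhere in $\llbracket 0,+\infty\llbracket$ and says nothing about $\rrbracket-\infty,0\llbracket$: the two configurations may also differ at arbitrarily negative coordinates, and the difference that right-expansivity along $h$ detects may descend entirely from those. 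In that case $\delta(0)=k(0)-p_0$ is large and \emph{positive}, of the same sign as $\delta(n)$, and the discrete intermediate-value argument yields no crossing time $m^*$. The ``control'' you invoke, $p_0\in[p+nr,p+ns]$, is vacuous for this purpose: that interval has length $n(s-r)$, which grows with $n$, and in any case provides no lower bound on $p_0$ relative to $k(0)$. To force a crossing you would need to reduce to pairs $x,y$ that agree on $\rrbracket-\infty,k(0)-N'\rrbracket$, so that every initial difference --- hence the starting point of every backward chain --- lies to the right of the $k$-tube; manufacturing such pairs inside $\gs$ is a splicing step that mere transitivity does not obviously license. This is exactly the one-sided difficulty you flag in your last sentence, but flagging it is not resolving it: as written, the proof of the first bullet does not go through.
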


\begin{proof}
  Straightforward.
\end{proof}

\subsection{Properties of $\alk(\gs,F)$}

The next proposition shows that $\alk(\gs,F)$ can be seen as a
``convex'' set of curves.

\begin{prop}
	\label{convexe_al}
		Let $(\az,F)$ be a CA and
		$\gs\subset\az$ be a transitive subshift. If
		$h',h''\in\aln(\gs,F)$ then for all $h\in\Fon$ which verifie
		$h'\precsim h \precsim h''$, one has $h\in\aln(\gs,F)$.
\end{prop}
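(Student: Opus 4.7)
The plan is to show $Eq^h_{\N}(\gs, F) \ne \emptyset$ directly, which immediately gives $h \in \aln(\gs, F)$. By Proposition~\ref{bloc-equicont} applied to $h'$ and $h''$, there exist a blocking word $u'$ along $h'$ (at offset $p'$, width $e'$) and a blocking word $u''$ along $h''$ (at offset $p''$, width $e''$). That same proposition, combined with transitivity of $\gs$, further ensures that the set of configurations containing infinitely many occurrences of $u'$ in both positive and negative positions is a $\s$-invariant dense $G_{\dd}$ subset of $\gs$, and likewise for $u''$; Baire category in $\gs$ yields a point $x$ belonging to both.

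Let $M_1, M_2 > 0$ witness $h' \precsim h \precsim h''$, that is, $h'(n) \leq h(n) + M_1$ and $h(n) \leq h''(n) + M_2$ for all $n \in \N$. Given $\e > 0$, fix $k$ with $2^{-k} < \e$, then select an occurrence of $u'$ in $x$ at some position $j^L \leq p' - M_1 - k$ and an occurrence of $u''$ in $x$ at some position $j^R \geq p'' + M_2 + k - e'' + 1$; both exist by the choice of $x$. Set $L_n = h'(n) + (j^L - p')$ and $R_n = h''(n) + (j^R - p'') + e'' - 1$, so that the wall generated by $u'$ at $j^L$ occupies $\llbracket L_n, L_n + e' - 1\rrbracket$ at time $n$, the wall generated by $u''$ at $j^R$ occupies $\llbracket R_n - e'' + 1, R_n\rrbracket$, and $\llbracket h(n) - k, h(n) + k\rrbracket \subseteq \llbracket L_n, R_n\rrbracket$ for every $n$. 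Take $\dd = 2^{-K}$ with $K$ large enough that $y \in B_{\gs}(x, \dd)$ forces $y$ to coincide with $x$ on $\llbracket j^L, j^R + |u''| - 1\rrbracket$, so that $y \in [u']_{j^L} \cap [u'']_{j^R}$.

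The key lemma, proved by induction on $n$, is that $F^n(x)_m = F^n(y)_m$ for every $m \in \llbracket L_n, R_n\rrbracket$. The base case $n = 0$ follows since the wall of a blocking word lies inside the support of that word at time $0$, giving $\llbracket L_0, R_0\rrbracket \subseteq \llbracket j^L, j^R + |u''| - 1\rrbracket$. For the inductive step at $n+1$: if $m$ lies in the left (resp.\ right) wall, the blocking property of $u'$ (resp.\ $u''$) forces the equality independently of anything outside that wall; if $m$ lies strictly between the two walls, then $F^{n+1}(\cdot)_m$ is computed from $F^n(\cdot)_{m+r}, \dots, F^n(\cdot)_{m+s}$, and the width conditions on $e'$ and $e''$ from Definition~\ref{def:blockingword}, applied to $h'$ and $h''$ respectively, ensure $\llbracket m+r, m+s\rrbracket \subseteq \llbracket L_n, R_n\rrbracket$, so the induction hypothesis applies. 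Together with the window inclusion of the previous paragraph, this yields $F^n(x)$ and $F^n(y)$ agreeing on $\llbracket h(n) - k, h(n) + k\rrbracket$ for every $n$, i.e.\ $y \in \tube^h_{\gs}(x, \e, \N)$, proving $x \in Eq^h_{\N}(\gs, F)$. The main obstacle is precisely this sealing induction, in which the two wall-propagation conditions must cooperate with the geometric sandwiching of $h$ between $h'$ and $h''$; by contrast, the Baire step producing $x$ and the choice of $j^L, j^R$ as functions of $\e$ are routine bookkeeping.
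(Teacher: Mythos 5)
Your proof is correct and follows essentially the same strategy as the paper's: sandwich the curve $h$ between a wall along $h'$ on the left and a wall along $h''$ on the right, and observe that the region between the two walls is sealed. The only packaging difference is that the paper uses transitivity to concatenate the two blocking words into a single word $u'wu''$ which it declares to be a blocking word along $h$ (leaving the sealing claim as a one-line assertion), whereas you locate both words around the origin of a single Baire-generic point and carry out the sealing induction explicitly before concluding $Eq^{h}_{\N}(\gs,F)\neq\emptyset$.
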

\begin{proof}
	If $h' \sim h''$, by Proposition \ref{prop:order_on_F}, there is nothing to prove. Assume that $h' \prec h''$, we can consider two
	$(\N,\gs)$-blocking words $u'$ and $u''$ \along{} $h'$ and $h''$
	respectively. So there exist
	$e',e''\geq\max_{n\in\N}(|h''(n+1)-h''(n)|+1+s,|h'(n+1)-h'(n)|+1-r)$,
	$p'\in \llbracket 0,|u'|-e'\rrbracket$ and $p''\in \llbracket 0,|u''|-e''\rrbracket$ such that for all
	$x',y'\in[u']_0\cap\gs$, for all $x'',y''\in[u'']_0\cap\gs$ and for
	all $n\in\N$:
  \begin{eqnarray*} \s^{h'(n)}\circ
		F^n(x')_{\llbracket p',p'+e'-1\rrbracket}&=&\s^{h'(n)}\circ F^n(y')_{\llbracket p',p'+e'-1\rrbracket}\\
		\textrm{ and }\quad \s^{h''(n)}\circ
		F^n(x'')_{\llbracket p'',p''+e''-1\rrbracket}&=&\s^{h''(n)}\circ
		F^n(y'')_{\llbracket p'',p''+e''-1\rrbracket}.
  \end{eqnarray*}

  Since $\gs$ is transitive, there exists $w\in\La_{\gs}$ such that
$u=u'wu''\in\La_{\gs}$. For all $x,y\in [u]_0\cap\gs$ and for all
$n\in\N$ one has:
  $$F^n(x)_{\llbracket p'+h'(n), |u'| + p'' + e'' - 1 + h''(n)\rrbracket} = F^n(y)_{\llbracket p'+h'(n),|u'|+p''+e''-1+h''(n)\rrbracket}.$$
  This implies that $u$ is a $(\N,\gs)$-blocking word \along{} $h$ for
all $h\in\Fon$ which verifies $h'\precsim h \precsim h''$.
\end{proof}

\begin{definition} Let $(\az,F)$ be a CA and $\gs$ be a
subshift. $(\az,F)$ is {\em $\gs$-nilpotent} if the {\em $\gs$-limit
set} defined by
  $$\Lambda_{\gs}(F)=\cap_{n\in\N}\overline{\cup_{m\geq n}F^m(\gs)},$$ is finite. By compactness, in this case there exists $n\in\N$ such that $F^n(\gs)=\Lambda_{\gs}(F)$.
\end{definition} We observe that in general $\gs$ is not
$F$-invariant.

\begin{prop}\label{voisinage_al} Let $(\az,F)$ be a CA of neighborhood
$\U=\llbracket r,s\rrbracket$ and $\gs\subset\az$ be a weakly-specified subshift. If
there exists $h\in\alk(\gs,F)$ such that $h\prec h_{-s}$ or
$h_{-r}\prec h$ then $(\az,F)$ is $\gs$-nilpotent, thus
$\aln(\gs,F)=\Fon$.
\end{prop}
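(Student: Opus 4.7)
The plan is to treat the case $h\prec h_{-s}$ (the case $h_{-r}\prec h$ is symmetric), first proving $\gs$-nilpotence of $(\az,F)$ and then deriving $\aln(\gs,F)=\Fon$ from it. The second step is routine: if $F^{n_0}(\gs)=\Lambda_{\gs}(F)$ is finite, the preimages of its elements under $F^{n_0}$ form a finite clopen partition of $\gs$ of positive minimum diameter $\delta_0$, so $d_C(x,y)<\delta_0$ implies $F^n(x)=F^n(y)$ for all $n\geq n_0$; combined with uniform continuity of $\s^{h(n)}\circ F^n$ on the compact space $\gs$ for each of the finitely many $n<n_0$, this yields uniform equicontinuity along any $h\in\Fon$, i.e.\ $\aln(\gs,F)=\Fon$.

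For $\gs$-nilpotence, Proposition~\ref{bloc-equicont} (applicable since weakly-specified implies transitive) provides a $(\K,\gs)$-blocking word $u$ \along{} $h$, of width $e$ and position $p$. The key quantity is $g(n):=h_{-s}(n)-h(n)$: by $h\prec h_{-s}$ it is bounded below and tends to $+\infty$ along a subsequence, and by $h\in\Fon$ it has bounded increments; a discrete intermediate-value argument then shows $g$ visits every sufficiently long integer interval for arbitrarily large $n$. Choose $n_*$ so that the strip $\{m:h(n_*)+p\leq m<h(n_*)+p+e\}$ at time $n_*$ lies strictly to the left of $p-n_*s$, i.e.\ outside the forward light cone of $u$ at position $p$; then for $m$ in this strip, $\llbracket m+n_*r,m+n_*s\rrbracket$ is disjoint from $\llbracket p,p+|u|-1\rrbracket$, so $F^{n_*}(x)_m$ depends only on $x$ restricted to positions left of $u$. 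Given an arbitrary $x\in\gs$, I show $F^{n_*}(x)_0=a$ for a universal $a\in\A$; shift-invariance of $\gs$ then gives $F^{n_*}(x)_m=a$ for every $m$ (apply the same argument to $\s^m x$), so $F^{n_*}(\gs)$ reduces to the single configuration identically equal to $a$, establishing nilpotence. To obtain the equality, apply weak specification to the pair $(x_{\llbracket n_*r,n_*s\rrbracket},u)$ to produce $\tilde x\in\gs$ agreeing with $x$ on $\llbracket n_*r,n_*s\rrbracket$ and containing $u$ at position $p'=n_*s+1+n$ for some $n\leq N$ (the specification constant). When $p'$ falls in $\llbracket p-h(n_*)-e+1,p-h(n_*)\rrbracket$, $0$ lies in the shifted strip of $u$ at $p'$, so blocking forces $F^{n_*}(\tilde x)_0=a$ while locality forces $F^{n_*}(\tilde x)_0=F^{n_*}(x)_0$.

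The main obstacle is matching the admissible position $p'=n_*s+1+n$ (with $n\in\llbracket 0,N\rrbracket$) obtained from weak specification with the prescribed window of width $e$ determined by $g(n_*)$. To overcome it, enlarge the blocking width $e$ so that it exceeds $N$ plus the step-bound of $g$; if the original width is too small, replace $u$ by a longer blocking word obtained by concatenation with $\gs$-compatible fillers, in the style of Proposition~\ref{convexe_al}. The discrete intermediate-value property of $g$ then produces an $n_*$ for which the admissible and required windows for $p'$ overlap, giving the required $\tilde x$ and hence closing the argument.
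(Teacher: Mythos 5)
Your overall mechanism is the right one and is essentially the paper's (the wall generated by the blocking word eventually leaves the light cone of the word, and weak specification glues an arbitrary local pattern at bounded distance from a copy of the blocking word), but the key step contains a genuine error. When you glue $x_{\llbracket n_*r,n_*s\rrbracket}$ to $u$ and invoke the blocking property, what is forced at the site $\site{0}{n_*}$ is the letter at position $-p'-h(n_*)-p$ of the fixed word $W_{n_*}\in\A^e$ filling the wall at time $n_*$. Which letter of $W_{n_*}$ this is depends on the gap $k\leq N$ that weak specification happens to return, and that gap is existentially quantified, not chosen by you; different $x$ may therefore receive different letters of the wall, and you cannot conclude $F^{n_*}(x)_0=a$ for a single universal $a$. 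Indeed your intermediate conclusion that $F^{n_*}(\gs)$ is a single uniform configuration is false under weak specification alone: take $F=\mathrm{id}$ (so $r=s=0$) on the two-point orbit closure $\gs$ of $\cdots abab\cdots$, which is weakly specified but not specified; every word is blocking along $h:n\mapsto -n\prec h_0=h_{-s}$, the CA is $\gs$-nilpotent, yet $F^n(\gs)=\gs$ is never uniform, and the wall content is the non-uniform word $abab\cdots$. (The remark following Proposition~\ref{voisinage_al} makes exactly this distinction: uniformity of the limit set requires full specification.) The paper avoids the problem by fixing a $\s$-periodic $z\in[u]_0\cap\gs$, letting $\gs'$ be the finite subshift of some order $q$ generated by the orbit of $z$, and comparing $q$-windows of $F^n(x)$ with the corresponding windows of $F^n(z)$ inside the protected region; this yields only $F^n(\gs)\subseteq\gs'$, which is all that nilpotency requires. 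Your single-cell argument cannot be repaired without making this switch from a universal letter to membership of every $q$-window in $\La_{\gs'}(q)$.

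Two secondary points. First, the discrete intermediate-value claim is overstated: a sequence $g$ that is bounded below, unbounded above and has bounded increments need not visit every sufficiently long integer interval (e.g.\ $g(n)=C+n$ misses everything below $C$); it only visits intervals of length at least the step bound lying above some value it actually attains. This is fixable either by enlarging $e$ until the target window $\llbracket p+N+1,p+e\rrbracket$ reaches up to $g(0)$, or, more simply, by extending the word $x_{\llbracket n_*r,n_*s\rrbracket}$ arbitrarily far to the right inside $\La_{\gs}$ before applying weak specification, which lets you place the admissible window for $p'$ wherever needed and dispenses with the intermediate-value argument altogether. Second, your derivation of $\aln(\gs,F)=\Fon$ from $\gs$-nilpotency (finite range of $F^{n_0}$ forces it to be uniformly locally constant, plus uniform continuity of the finitely many earlier steps) is correct and fills in a step the paper leaves implicit.
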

\begin{proof} Let $u$ be a $(\N,\gs)$-blocking word \along{}
$h\in\Fon$ with $h_{-r}\prec h$ and width $e$. There exists $p\in
\llbracket 0,|u|-e\rrbracket$ such that $$\forall n\in\N, \forall x,y\in[u]_0\cap\gs,
F^n(x)_{\llbracket h(n)+p, h(n)+p+e-1\rrbracket}=F^n(y)_{\llbracket h(n)+p, h(n)+p+e-1\rrbracket}.$$

  Let $z\in\gs\cap [u]_0$ be a $\s$-periodic configuration. The
sequence $(F^n(z))_{n\in\N}$ is ultimately periodic of preperiod $m$
and period $t$. Denote by $\gs'$ the subshift generated by
$(F^n(z))_{n\in\llbracket m,m+t-1\rrbracket}$, $\gs'$ is finite since $F^n(z)$ is a
$\s$-periodic configuration for all $n\in\N$. Let $q$ be the order of
the subshift of finite type $\gs'$.

  Since $\gs$ is a weakly-specified subshift, there exists $N\in\N$
such that for all $w,w'\in\La_{\gs}$ there exist $k\leq N$ and
$x\in\gs$ a $\s$-periodic point such that $x_{\llbracket 0,|w|-1\rrbracket}=w$ and
$x_{\llbracket k+|w|,k+|w|+|w'|-1\rrbracket}=w'$. Let $n\in\N$ be such that $|u|+N-rn+q\leq
h(n) + p + e$ (it is possible since $h_{-r}\prec h$). We want to prove that
$F^n(\gs)\subset\gs'$.

  The set $\llbracket rn,sn\rrbracket$ is a neighborhood of $(\az,F^n)$. Let
$v\in\La_{\gs}((s-r)n+q)$. There exist $x\in\gs$ and $k\leq N$, such
that $x_{\rrbracket -\infty,|u|-1\rrbracket}=z_{\rrbracket-\infty,|u|-1\rrbracket}$ and
$x_{\llbracket |u|+k,|u|+k+|v|-1\rrbracket}=v$. Since $u$ is a $(\N,\gs)$-blocking word
\along{} $h$, the choice of $n$ implies that
$F^n(x)_{\llbracket|u|+N-rn,|u|+N-rn+q-1\rrbracket}=F^n(z)_{\llbracket|u|+N-rn,|u|+N-rn+q-1\rrbracket}$. One
deduces that the image of the function
$\overline{F^n}:\La_{\gs}(\llbracket rn,sn+q\rrbracket)\to \A^q$ is contained in
$\La_{\gs'}(q)$. One deduces that $F^n(\gs)\subset\gs'$ so $(\az,F)$
is $\gs$-nilpotent which implies that $\aln(\gs,F)=\Fon$.

  The same proof holds for $h\prec h_{-s}$.
\end{proof}

\begin{remark} If moreover $\gs$ is specified, the same proof shows
that there exists $\A_{\infty}\subset\A$ such that
$\la_F(\gs)=\{^{\infty}a^{\infty} : a\in\A_{\infty}\}$.
\end{remark}

\begin{ex}[Importance of the specification hypothesis in Proposition~\ref{voisinage_al}] 
  Consider $(\{0,1\}^{\az},F)$ such that $F(x)_i=x_{i-1}\cdot
  x_{i}\cdot x_{i+1}$. Let $f^-,f^+\in\Fon$ such that $f^-\precsim
  h_{-1}$ and $h_{1}\precsim f^+$. Define $\gs_{f^-,f^+}$ as the
  maximal subshift such that $\La_{\gs_f}\cap\{10^m 1^n : f^+(n)\geq
  m\}=\emptyset$ and $\La_{\gs_f}\cap \{1^n0^m1: -f^-(n)\geq m
  \}=\emptyset$. $\gs_{f^-,f^+}$ is a transitive $F$-invariant
  subshift and, according to its definition, one has $\{h\in\Fon: f^-
  \precsim h \precsim f^+\}\subset\alk(\gs,F)$. The intuition is that,
  even if blocks of $1$ disappear only at unit speed, they are spaced
  enough in $\gs_{f^-,f^+}$ so that no curve $h$ with $f^-\precsim
  h\precsim f^+$ travel fast enough to cross a block of $0$ before the
  neighboring block of $1$ has completely disappeared.
\end{ex}

\subsection{Properties of $\aalk(\gs,F)$}

In this section, we show that the set of curves along which a CA is
equicontinuous is very constrained. The first proposition shows that
the existence of two non-equivalent such curves implies nilpotency.

\begin{prop}\label{prop:equicontdir} Let $(\az,F)$ be a CA and $\gs\subset\az$ be a
  weakly-specifed subshift. If there exist $h_1,h_2\in\aalk(\gs,F)$
  such that $h_1\nsim h_2$ then $(\az,F)$ is $\gs$-nilpotent, so
  $\aalk(\gs,F)=\Fon$.
\end{prop}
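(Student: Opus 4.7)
My plan would be to show that uniform equicontinuity along a single curve $h$ already forces the iterates $F^n$, after composition with $\s^{h(n)}$, to behave like a CA of bounded radius \emph{independently of $n$}, and then to extract enough rigidity from two non-equivalent such curves to prove $\gs$-nilpotency. First, by uniform equicontinuity along $h_1$ and $h_2$, pick $N$ large enough so that $d_C(x,y)<2^{-N}$ guarantees $F^n(x)_{h_i(n)}=F^n(y)_{h_i(n)}$ for every $n\in\K$ and $i=1,2$. Defining $G^{(i)}_n:=\s^{h_i(n)}\circ F^n$, shift-commutation gives $G^{(i)}_n(x)_j=F^n(x)_{h_i(n)+j}=G^{(i)}_n(\s^j(x))_0$, which depends only on $x_{\llbracket j-N,j+N\rrbracket}$, uniformly in $n$. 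Thus the restriction of $G^{(i)}_n$ to $\gs$ is determined by a local rule $\La_\gs(2N+1)\to\A$, and the family $\{G^{(i)}_n:n\in\K\}$ takes only finitely many values.

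I would then perform a pigeonhole on the finite-valued map $n\mapsto(G^{(1)}_n,G^{(2)}_n)$, whose fibers partition $\K$ into finitely many classes. Setting $a_n=h_1(n)-h_2(n)$, the assumption $h_1\nsim h_2$ says that $|a_n|$ is unbounded on $\K$, so at least one fiber $S\subset\K$ carries an unbounded restriction of $a_n$ (else $a_n$ would be globally bounded). On this $S$ we have $G^{(1)}_n=H_1$ and $G^{(2)}_n=H_2$ constant, and rewriting $F^n=\s^{-h_i(n)}H_i$ on $\gs$ for $i=1,2$ yields the key functional equation
\[
	H_1(x)=\s^{a_n}(H_2(x))\quad\text{for every }x\in\gs\text{ and }n\in S.
\]

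Choosing $n,n'\in S$ with $a_n\ne a_{n'}$ (possible since $a_n$ is unbounded on $S$), we deduce $\s^{a_n-a_{n'}}H_2(x)=H_2(x)$; letting $n,n'$ range over $S$ produces a nontrivial subgroup $d\Z\subset\Z$ of $\s$-periods satisfied by every element of $H_2(\gs)$. Since there are only finitely many $\s^d$-periodic configurations on $\az$, $H_2(\gs)$ is finite; and as $G^{(2)}_n(\gs)=F^n(\gs)$ by shift-invariance of $\gs$, the set $F^n(\gs)$ is finite for $n\in S$. The $\s^d$-periodic configurations form a finite $F$-stable set, which forces $\Lambda_{\gs}(F)$ to be finite and $(\az,F)$ to be $\gs$-nilpotent.

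Once $\gs$-nilpotency is established, $F^{n_0}(\gs)$ is a finite set for some $n_0$; the fibers of the continuous map $F^{n_0}$ are then clopen in $\gs$, so one can pick $\delta$ so small that $d_C(x,y)<\delta$ forces $F^n(x)=F^n(y)$ for all $n\geq n_0$, while the finitely many $n<n_0$ are controlled by the individual uniform continuity of $\s^{h(n)}\circ F^n$. This yields uniform $(\K,\gs)$-equicontinuity along every $h\in\Fon$ and gives $\aalk(\gs,F)=\Fon$. The most delicate step is the pigeonhole extraction: one must verify that fixing both $G^{(1)}_n$ and $G^{(2)}_n$ on a subsequence does not flatten $a_n$, since otherwise the functional equation would degenerate into a trivial identity and no period would emerge.
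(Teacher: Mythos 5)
Your argument is correct, but it is genuinely different from the one in the paper --- and in fact more economical in its hypotheses. The paper's proof leans on weak specification twice: first to produce a $\s$-periodic configuration $z$ whose orbit generates a finite subshift $\gs'$ of some order $q$, and then to splice, for an arbitrary $x\in\gs$, a configuration $y$ agreeing with $x$ on a left half-line and with a shift of $z$ on a right half-line; because the two equicontinuity windows $\llbracket k-h_1(t_0),l-h_1(t_0)\rrbracket$ and $\llbracket k-h_2(t_0),l-h_2(t_0)\rrbracket$ are disjoint for large $|h_1(t_0)-h_2(t_0)|$, the block $F^{t_0}(x)_{\llbracket 0,q\rrbracket}$ is forced to equal the corresponding block of $F^{t_0}(\s^m(z))$, hence lies in $\La_{\gs'}$, giving $F^{t_0}(\gs)\subset\gs'$. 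You instead exploit only the compactness consequence of $Eq^{h_i}_{\K}(\gs,F)=\gs$ (uniform equicontinuity, exactly as in Theorem~\ref{classification}) to see that each $\s^{h_i(n)}\circ F^n$ restricted to $\gs$ is given by a local rule of radius $N$ independent of $n$; the pigeonhole on the finitely many pairs of induced local rules, combined with the unboundedness of $a_n=h_1(n)-h_2(n)$, then yields the identity $\s^{a_n-a_{n'}}\circ H_2=H_2$ on $\gs$ and forces $F^n(\gs)$ into the finite, $F$-stable set of $\s^d$-periodic configurations. Your worry about the pigeonhole ``flattening'' $a_n$ is correctly dispatched by your own finite-fiber observation. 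The net effect is that your proof never uses weak specification, so it establishes the proposition for an arbitrary subshift $\gs$; what the paper's splicing argument buys instead is the explicit identification of the limit dynamics with the orbit of a periodic point of $\gs$, which is in the spirit of the remark following Proposition~\ref{voisinage_al}. Two small points: the step ``there are only finitely many $\s^d$-periodic configurations'' together with $F$-stability of that set is exactly what upgrades ``$F^n(\gs)$ finite for one $n$'' to finiteness of $\Lambda_{\gs}(F)$, so keep that sentence; and in the case $\K=\Z$ your closing phrase ``the finitely many $n<n_0$'' is inaccurate (there are infinitely many), but there $F$ is bijective, so a finite image $F^{n_0}(\gs)$ forces $\gs$ itself to be finite and equicontinuity along every $h\in\Fon$ is immediate.
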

\begin{proof} 
  Because $\gs$ is weakly specified, there exists a $\sigma$-periodic
  configuration $z\in\gs$. The orbit $\{F^n(z)\}_{n\in\N}$ of $z$ is
  finite and contains only $\sigma$-periodic configurations. Let us
  consider $\gs'$ the subshift generated by this orbit. It is finite
  and therefore of finite type of some order $q$. From the definition
  of weak specificity, we also have $N\in\N$ such that for any
  configuration $x\in\gs$, there exists a word $w$ of length $n\leq N$
  such that the configuration $x_{\rrbracket -\infty, 0\rrbracket}wz_{\llbracket 0, +\infty\llbracket}$ is in
  $\gs$.
    
  We will now show that there exists $t_0\in\N$ such that for any
  configuration $x\in\gs$, $F^{t_0}(x)\in\gs'$.

  The $(\N,\gs)$-equicontinuity of $(\az,F)$ along $h_1$ and $h_2$
  implies that there exist $k, l\in \Z$, $k\leq l$, such that for all
  $x, x'\in\gs$, if $x_{\llbracket k,l\rrbracket}=x'_{\llbracket k,l\rrbracket}$ then for all $t\in\N$
  \[
  \begin{array}{rcl}
    F^t(x)_{\llbracket h_1(t), h_1(t)+q\rrbracket} & = & F^t(x')_{\llbracket h_1(t), h_1(t)+q\rrbracket}\\
    F^t(x)_{\llbracket h_2(t), h_2(t)+q\rrbracket} & = & F^t(x')_{\llbracket h_2(t), h_2(t)+q\rrbracket}
  \end{array}
  \]

  Since $h_1\nsim h_2$, there exists $t_0$ such that
  $|h_1(t_0)-h_2(t_0)| > (l-k+N)$. We will assume that
  $h_1(t_0)>h_2(t_0)$. For any configuration $x\in \gs$, by
  equicontinuity along $h_1$, $F^{t_0}(x)_{\llbracket 0,q\rrbracket}$ only depends on
  $x_{\llbracket k-h_1(t_0), l-h_1(t_0)\rrbracket}$ (not the rest of the configuration
  $x$), but by equicontinuity along $h_2$, $F^{t_0}(x)_{\llbracket 0,q\rrbracket}$ only
  depends on $x_{\llbracket k-h_2(t_0), l-h_2(t_0)\rrbracket}$.

  Because $\gs$ is weakly specified, for any configuration $x\in\gs$
  there exists a configuration $y\in\gs$ and $n\leq N$ such that (see
  Figure \ref{fig:equicontdir})
  \[ 
  \begin{array}{rcl}
    y_{\rrbracket -\infty, l-h_1(t_0)\rrbracket} & = & x_{\rrbracket-\infty, l-h_1(t_0)\rrbracket}\\
    y_{\llbracket l-h_1(t_0)+n, +\infty\llbracket} & = & z_{\llbracket 0, +\infty \llbracket}
  \end{array}
  \]

  \begin{figure}[htbp]
    \centering
    \includegraphics{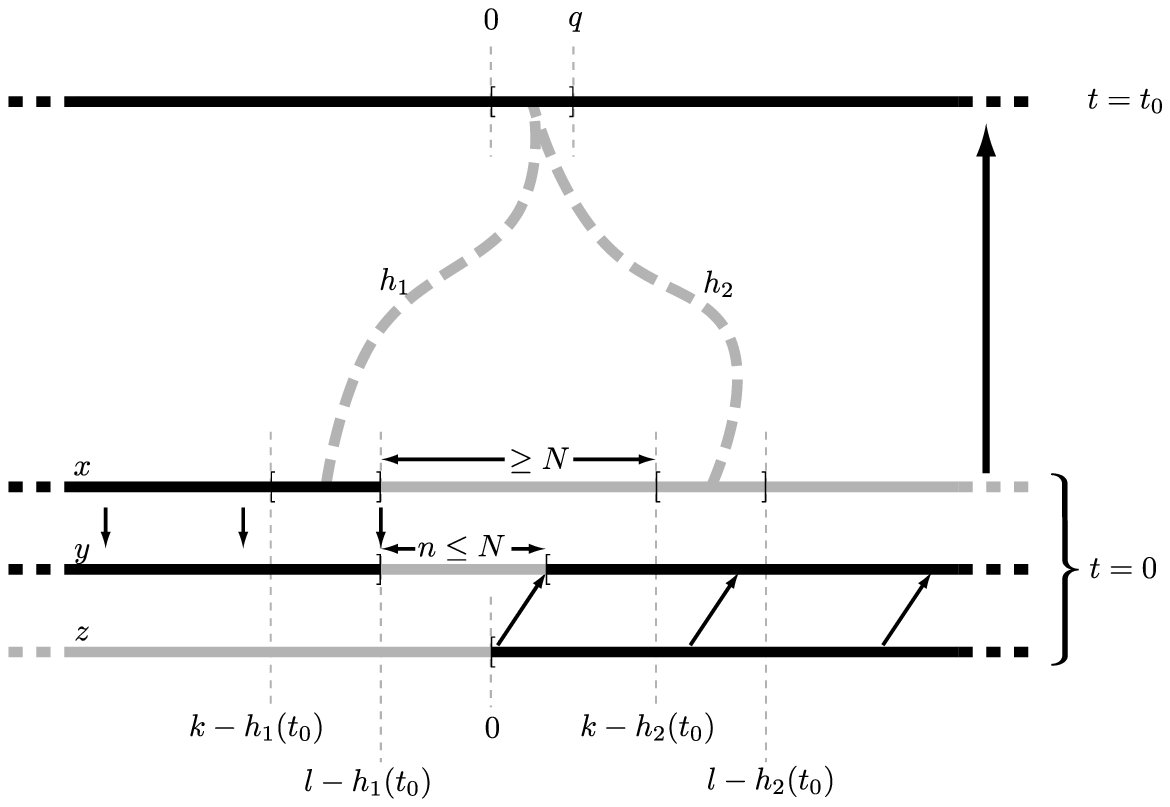}
    \caption{Construction of the configuration $y$ from a configuration $x$ and the
      periodic configuration $z$.}
\label{fig:equicontdir}
  \end{figure}
  Moreover, $\llbracket k-h_1(t_0), l-h_1(t_0)\rrbracket \subseteq\  \rrbracket-\infty,l-h_1(t_0)\rrbracket$
  and $\llbracket k-h_2(t_0), l-h_2(t_0)\rrbracket \subseteq \llbracket l-h_1(t_0)+n, +\infty\llbracket$,
  meaning that
  \[
  F^{t_0}(x)_{\llbracket 0,q\rrbracket} = F^{t_0}(y)_{\llbracket 0,q\rrbracket} = F^{t_0}(\sigma^m(z))_{\llbracket 0,q\rrbracket}
  \]
  where $m = -l+h_1(t_0)-n$.

  This shows that the factor $F^{t_0}(x)_{\llbracket 0, q\rrbracket}$ is in $\La_{\gs'}$
  (as a factor in the evolution of $\sigma^m(z)$). Because $F$
  commutes with the shift, we have shown that all factors of size $q$
  that appear after $t_0$ steps in the evolution of any configuration
  are in $\La_{\gs'}$ and since $q$ is the order of $\gs'$, it means
  that for all configuration $x$, $F^{t_0}(x)\in \gs'$. Because $\gs'$
  is finite, the CA is nilpotent.
\end{proof}

The next proposition shows that in the case of a unique curve of
equicontinuity (up to $\sim$), this curve is in fact equivalent to a
rational slope.

\begin{prop}\label{prop:equicontrational} Let $(\az,F)$ be a CA and $\gs\subset\az$ a
  subshift. If there exists $h\in\Fon$ such that $\aalk(\gs,F)=\{h'\in\Fon :
  h'\sim h\}$, then there exists $\alpha\in\Q$ such that $h\sim h_{\alpha}$.
\end{prop}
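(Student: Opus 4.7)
The plan is to use equicontinuity of $F$ along $h$ to extract a strong periodic structure on iterates of $F$ restricted to $\gs$, then leverage the uniqueness hypothesis to identify $h$ with a linear curve of rational slope.

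First, I would exploit equicontinuity of $F$ along $h$ at scale $\e=1/2$ to obtain an integer $r'\in\N$ such that, uniformly in $n\in\K$ and in $x\in\gs$, the value $(\s^{h(n)}F^n(x))_0$ depends only on $x_{\llbracket -r',r'\rrbracket}$. Writing $G_n=\s^{h(n)}F^n$ and using shift commutation, this exhibits $G_n|_\gs$ as being governed by a local rule $f_n:\La_\gs(\llbracket -r',r'\rrbracket)\to\A$. There are only finitely many such rules, so by pigeonhole there exist $n_1<n_2$ in $\K$ with $G_{n_1}|_\gs=G_{n_2}|_\gs$. Setting $Q=n_2-n_1$ and $\D=h(n_2)-h(n_1)$, this rewrites as $F^{n_1+Q}|_\gs=\s^{-\D}F^{n_1}|_\gs$, and composing with powers of $F$ (using $F\s=\s F$) propagates the relation to
\[
F^{n+Q}|_\gs=\s^{-\D}F^n|_\gs \qquad\text{for all } n\geq n_1,
\]
with a symmetric treatment of negative times via $F^{-1}$ when $\K=\Z$.

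Next, set $\alpha=\D/Q\in\Q$; I claim that $h_\alpha\in\aalk(\gs,F)$. For $m\geq n_1$, writing $m-n_1=kQ+r$ with $0\leq r<Q$ and iterating the periodic relation yields $F^m|_\gs=\s^{-k\D}F^{n_1+r}|_\gs$. A direct computation gives $h_\alpha(m)-k\D=\lfloor \alpha(n_1+r)\rfloor$, a quantity depending only on $r\in\llbracket 0,Q-1\rrbracket$, so
\[
\s^{h_\alpha(m)}F^m|_\gs=\s^{\lfloor \alpha(n_1+r)\rfloor}F^{n_1+r}|_\gs
\]
takes only $Q$ distinct values as $m$ ranges over $\{n_1,n_1+1,\ldots\}$. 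This finite family of continuous maps on the compact space $\gs$ is therefore uniformly equicontinuous; the finitely many remaining indices cause no issue, so $h_\alpha\in\aalk(\gs,F)$.

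Finally, the hypothesis $\aalk(\gs,F)=\{h'\in\Fon:h'\sim h\}$ then forces $h_\alpha\sim h$, which gives the conclusion with $\alpha=\D/Q\in\Q$. I expect the main technical obstacle to lie in the first step, namely in carefully transferring the uniformity of the equicontinuity estimate (over $n\in\K$ and $x\in\gs$) into the existence of a single effective neighborhood $\llbracket -r',r'\rrbracket$ serving every $G_n|_\gs$; this is where uniformity in $n$ of the $\e$--$\delta$ condition is essential. Handling the case $\K=\Z$ also requires additional care to propagate the periodic relation to negative times, since $\gs$ is not assumed $F$-invariant.
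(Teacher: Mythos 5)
Your proof is correct and follows essentially the same route as the paper: uniform equicontinuity along $h$ yields a finite window $\llbracket k,l\rrbracket$ determining $F^t(x)_{h(t)}$, pigeonhole on the finitely many induced local rules gives $t_1<t_2$ with $F^{t_1}|_\gs=\s^{h(t_2)-h(t_1)}\circ F^{t_2}|_\gs$, and $\alpha=\frac{h(t_2)-h(t_1)}{t_2-t_1}$ is then a rational direction of equicontinuity, forcing $h\sim h_\alpha$ by the uniqueness hypothesis. You merely spell out in more detail the step the paper leaves implicit, namely why the periodic relation makes $h_\alpha$ an equicontinuous direction (only finitely many maps $\s^{h_\alpha(m)}\circ F^m|_\gs$ occur) and how to propagate it to negative times when $\K=\Z$.
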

\begin{proof} Let $(\az,F)$ be a non-nilpotent CA. By definition of
  $(\N,\gs)$-equicontinuity \along{} $h$, there exist $k,l\in\Z$,
  $k\leq l$, such that for all $x, x'\in\gs$, if
  $x_{\llbracket k,l\rrbracket}=x'_{\llbracket k,l\rrbracket}$ then for all $t\in\N$ one has:
  \[
  F^t(x)_{h(t)}=F^t(x')_{h(t)}
  \]

  Thus the sequence $(F^t(x)_{h(t)})_{t\in\N}$ is uniquely determined
  by the knowledge of $x_{\llbracket k,l\rrbracket}$. For all $t\in\N$, consider the
  function
  \[
  \begin{array}{rrcl}
    f_t: & \La_{\gs}(\llbracket k,l\rrbracket) & \longrightarrow & \A \\ 
    & w & \longmapsto & F^t(x)_{h(t)} \textrm{ where } x\in [w]_{\llbracket k,l\rrbracket}\cap\gs
  \end{array}
  \]

  Because there are finitely many functions from $\La_{\gs}(\llbracket k,l\rrbracket)$ to
  $\A$, there exist $t_1, t_2\in \N$ such that $t_1<t_2$ and
  $f_{t_1}=f_{t_2}$.

  For any configuration $x\in\gs$, and any cell $c\in \Z$,
  \[
  F^{t_1}(x)_{h(t_1)+c}= f_{t_1}(x\llbracket k+c, l+c\rrbracket) = f_{t_2}(x\llbracket k+c, l+c\rrbracket) = F^{t_2}(x)_{h(t_2)+c}
  \]

  We therefore have $F^{t_1}(x) = \s^{h(t_2)-h(t_1)} \circ
  F^{t_2}(x)$ for all possible configurations $x\in\gs$. With $\alpha
  = \frac{h(t_2)-h(t_1)}{t_2-t_1}$, $h_\alpha$ is a direction of
  equicontinuity of $(\az,F)$.
\end{proof}

\subsection{Properties of $\bl(\gs,F)$}




The next proposition shows the link between expansivity and
equicontinuous properties.
\begin{prop}\label{prop:expequ} Assume $\K=\N$ or $\Z$. Let $(\az,F)$ be a CA, $\gs$ be
an infinite subshift. One has:
  $$\left(\blk^{r}(\gs,F)\cup\blk^{l}(\gs,F)\right)\cap\alk(\gs,F)=\emptyset.$$ 
  In particular, if $\blk(\gs,F)\ne\emptyset$ then
$\alk(\gs,F)=\emptyset$.
\end{prop}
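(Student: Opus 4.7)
The two cases $\blk^r(\gs,F)\cap\alk(\gs,F)$ and $\blk^l(\gs,F)\cap\alk(\gs,F)$ are symmetric, so I focus on the right-expansive one and argue by contradiction. Assume $h\in\blk^r(\gs,F)\cap\alk(\gs,F)$; fix a right-expansivity constant $\e_0>0$ and an equicontinuity point $x_0\in Eq^h_\K(\gs,F)$ along $h$.

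The key local step combines both hypotheses. Applying equicontinuity of $x_0$ at the target precision $\e_0/2$ yields $\dd>0$ with $B_\gs(x_0,\dd)\subseteq D^h_\gs(x_0,\e_0/2,\K)$. For any $y\in B_\gs(x_0,\dd)$, we have $d_C(\s^{h(n)}F^n(x_0),\s^{h(n)}F^n(y))<\e_0/2<\e_0$ for every $n\in\K$, so by the symmetry of $d_C$ both $y\in D^h_\gs(x_0,\e_0,\K)$ and $x_0\in D^h_\gs(y,\e_0,\K)$; in particular $x_0\in D^h_\gs(x_0,\e_0,\K)\cap D^h_\gs(y,\e_0,\K)$, which is therefore non-empty. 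Right-expansivity along $h$ then forces $y_{\llbracket 0,+\infty\llbracket}=(x_0)_{\llbracket 0,+\infty\llbracket}$. Choosing $k$ with $2^{-k}<\dd$, I obtain the local conclusion: every configuration of $\gs$ matching $x_0$ on the central window $\llbracket -k,k\rrbracket$ coincides with $x_0$ on the whole right half $\llbracket 0,+\infty\llbracket$.

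The main obstacle is to turn this local right-half determination into a contradiction with $\gs$ being infinite. The plan is to exploit the $\s$-invariance of both $\gs$ and $Eq^h_\K(\gs,F)$: the same analysis at every translate $\s^j(x_0)$ yields an analogous right-half determination at every shifted position. Since $\gs$ is infinite, the language $\La_\gs$ has unbounded complexity, so one can find two distinct configurations of $\gs$ agreeing on an arbitrarily long central factor but differing somewhere to the right; positioning such a pair near some $\s^j(x_0)$ via a suitable shift then directly contradicts the local conclusion. The ``in particular'' statement is immediate since $\blk(\gs,F)\subseteq\blk^r(\gs,F)\cup\blk^l(\gs,F)$.
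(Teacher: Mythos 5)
Your opening steps are correct and coincide with the paper's argument: right-expansivity together with the symmetry of the relation ``$y$ lies in the $\e_0$-tube of $x$'' forces any $\dd$-ball around an equicontinuity point $x_0$ to be contained in $\{y\in\gs : y_{\llbracket 0,+\infty\llbracket}=(x_0)_{\llbracket 0,+\infty\llbracket}\}$; the paper states this as the inclusion $\tube^{h}_{\gs}(x,\e,\K)\subset\{y\in\gs : y_i=x_i,\ \forall i\geq 0\}$ and then simply asserts that this set has empty interior. The point where you diverge is the only point where anything remains to be proved --- why no cylinder $[(x_0)_{\llbracket -k,k\rrbracket}]\cap\gs$ can be contained in that set --- and your proposed completion does not go through. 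From ``$\gs$ infinite'' one can indeed extract, for each $k$, \emph{some} word of $\La_{\gs}(2k+1)$ admitting two distinct right-infinite continuations in $\gs$ (if every word right-determined, the set $R$ of right halves $\{x_{\llbracket 0,+\infty\llbracket}: x\in\gs\}$ would be finite, and since every element of $\La_{\gs}(n)$ is a prefix of an element of $R$, the complexity of $\gs$ would be bounded and $\gs$ finite). But there is no way to ``position such a pair near some $\s^j(x_0)$'': the words that fail to right-determine need not occur among the central words $(x_0)_{\llbracket j-k_j,j+k_j\rrbracket}$ that your local conclusion concerns, and the $\s$-invariance of $Eq^{h}_{\K}(\gs,F)$ only lets the window $k_j$ grow with $j$, it does not let you choose which word sits in it.

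Concretely, let $\gs$ be the orbit closure of $z$ with $z_i=1$ iff $i\geq 0$, let $F=\s$ and $h=h_0$. Then $F$ is $(\N,\gs)$-right-expansive along $h_0$ (with $\e=1$ the tube of $x$ is exactly $\{y\in\gs : y_{\llbracket 0,+\infty\llbracket}=x_{\llbracket 0,+\infty\llbracket}\}$), yet $z$ is isolated in $\gs$ (the word $01$ at positions $-1,0$ pins the configuration down), hence is an equicontinuity point along every curve. So the conclusion cannot be reached from ``$\gs$ infinite'' alone: your local right-half determination holds at $z$ and at every $\s^j(z)$ with no contradiction, because every relevant window of $z$ right-determines its extension even though $\gs$ is infinite. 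What is missing is transitivity of $\gs$ (or at least the absence of isolated points), under which the argument does close: if $(x_0)_{\llbracket -k,k\rrbracket}$ right-determined, transitivity would force the common right half to be periodic and to contain every word of $\La_{\gs}$, contradicting infiniteness. The paper's one-line assertion that the tube has empty interior glosses over exactly the same point, so the comparison is not unflattering to you; but as written your final step is a plan that fails on the example above rather than a proof.
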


\begin{proof} Let $(\az,F)$ be $(\K,\gs)$-right expansive \along{} $h$
  with constant of expansivity $\e$. One
  has: $$\tube^{h}_{\gs}(x,\e,\K)\subset\{y\in\gs : y_i=x_i\ \forall
  i\geq 0\}.$$ Then the interior of $\tube^{h}_{\gs}(x,\e,\K)$ is
  empty. Thus $Eq^{h}_{\K}(\gs,F)=\emptyset$.

  Analogously, one proves
$\blk^{l}(\gs,F)\cap\alk(\gs,F)=\emptyset$. In the case
$\blk(\gs,F)\ne\emptyset$, one has
$\blk^{r}(\gs,F)\cup\blk^{l}(\gs,F)=\Fon$, so $\alk(\gs,F)=\emptyset$.
\end{proof}

\subsection{A dynamical classification along a curve}

\begin{theorem}\label{CuveClassification} Let $(\az,F)$ be a CA of
  neighborhood $\U=\llbracket r,s\rrbracket$. Let $\gs\subset\az$ be a weakly-specified
  subshift. Exactly one of the following cases hold:
  \begin{enumerate}
  \item[\textbf{C1.}] $\aaln(\gs,F)=\aln(\gs,F)=\Fon$. In this case
$(\az,F)$ is $\gs$-nilpotent, moreover
$\bln^{r}(\gs,F)=\bln^{l}(\gs,F)=\emptyset$.

  \item[\textbf{C2.}] There exists $\alpha\in[-s,-r]\cap\Q$ such that
$\aaln(\gs,F)=\aln(\gs,F)=\{h: h\sim h_{\alpha}\}$. In this case there
exist $m,p\in\N$ such that the sequence $(F^n\circ\s^{\lfloor \alpha
n\rfloor})_{n\in\N}$ is ultimately periodic of preperiod $m$ and
period $p$. Moreover, $\bln^{l}(F^m(\gs),F)=]-\infty,\alpha[$ and
$\bl^{r}(F^m(\gs),F)=]\alpha,+\infty[$.

  \item[\textbf{C3.}] There exist $h',h''\in\Fon$, $h'\prec h''$,
$h''\precsim h_{-r}$ and $h_{-s}\precsim h''$ such that $\{h :h'\prec
h\prec h''\}\subset\aln(\gs,F)\subset \{h : h'\precsim h \precsim
h''\}$. In this case
$\aaln(\gs,F)=\bln^{r}(\gs,F)=\bln^{l}(\gs,F)=\emptyset$.

  \item[\textbf{C4.}] There exists $h'\in\Fon$, $h_{-s}\precsim
h'\precsim h_{-r}$, such that $\aln(\gs,F)=\{h : h\sim h'\}$ and
$\aaln(\gs,F)=\emptyset$. In this case $\bln^{r}(\gs,F)$ and
$\bln^{l}(\gs,F)$ can be empty or not, but $\bln(\gs,F)=\emptyset$.

\item[\textbf{C5.}] $\aln(\gs,F)=\aalk(\gs,F)=\emptyset$ but
  $\bln(\gs,F)\not=\emptyset$.
    
\item[\textbf{C6.}] $\aln(\gs,F)=\aaln(\gs,F)=\bln(\gs,F)=\emptyset$ but
  $\bln^{r}(\gs,F)$ and $\bln^{l}(\gs,F)$ can be empty or not.
  \end{enumerate}
\end{theorem}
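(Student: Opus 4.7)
The plan is to perform a case analysis on the structure of $\aaln(\gs,F)$ and then on $\aln(\gs,F)$, reading off the expansivity sets from the propositions already proved. First I would split on $\aaln(\gs,F)$. If it contains two $\sim$-inequivalent curves, Proposition~\ref{prop:equicontdir} forces $(\az,F)$ to be $\gs$-nilpotent, so $\aaln=\aln=\Fon$; $\gs$-nilpotency then precludes any one-sided expansivity because orbits of distinct configurations eventually collapse into the same finite set. This is C1. Otherwise, if $\aaln(\gs,F)$ consists of a single $\sim$-class, Proposition~\ref{prop:equicontrational} gives $\aaln=\{h:h\sim h_\alpha\}$ with $\alpha\in\Q$, and since we are not in C1 Proposition~\ref{voisinage_al} forbids $h_\alpha\prec h_{-s}$ and $h_{-r}\prec h_\alpha$, yielding $\alpha\in[-s,-r]$. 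The equality $\aln=\aaln$ in C2 would be obtained by observing that any $h\in\aln$ admits a blocking word (Proposition~\ref{bloc-equicont}); concatenating it via weak specification with a blocking word for $h_\alpha$ produces, via Proposition~\ref{convexe_al}, a whole interval of classes in $\aln$, which via a density-and-periodicity argument on $\s$-periodic configurations forces $h\sim h_\alpha$.

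The ultimate periodicity of $(F^n\circ\s^{\lfloor\alpha n\rfloor})_n$ in C2 then follows from the finite-function argument already used in the proof of Proposition~\ref{prop:equicontrational}: uniform equicontinuity along $h_\alpha$ makes each such map determined uniformly, cell by cell, by a function $\La_\gs(\llbracket k,l\rrbracket)\to\A$, and only finitely many such functions exist, giving a preperiod $m$ and period $p$. On the periodic quotient $F^m(\gs)$ one then shows that any $h$ with $h_\alpha\prec h$ (resp.\ $h\prec h_\alpha$) lets one reconstruct the right (resp.\ left) half of a configuration from its trace along $h$ using the period, giving right- (resp.\ left-) expansivity; the endpoint $\alpha$ itself is excluded by the single-$\sim$-class property of $\aaln$ combined with Proposition~\ref{prop:expequ}, and Proposition~\ref{prop:order_on_F} extends the intervals to $\pm\infty$.

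If $\aaln(\gs,F)=\emptyset$ but $\aln(\gs,F)\neq\emptyset$, Proposition~\ref{voisinage_al} (combined with non-nilpotency) confines every $h\in\aln$ between $h_{-s}$ and $h_{-r}$, and Proposition~\ref{convexe_al} makes $\aln$ convex for $\precsim$; positive $\sim$-width yields C3 with extremal bounds $h'\prec h''$, vanishing $\sim$-width yields C4. In C3 the whole interval lies in $\alk$, so Proposition~\ref{prop:expequ} excludes both one-sided expansivities; in C4 only a single $\sim$-class is blocked and one-sided expansivity remains possible on either complementary half-line. When $\aln(\gs,F)=\emptyset$, the final split C5 vs.\ C6 is merely whether $\bln(\gs,F)\neq\emptyset$, with Proposition~\ref{prop:expequ} automatically ensuring compatibility.

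The main obstacle will be the quantitative content of C2: proving $\aln=\aaln$ at the single rational slope, extracting the precise preperiod/period pair $(m,p)$, and verifying the half-line structure of $\bln^{l}$ and $\bln^{r}$ on $F^m(\gs)$. The rest reduces to careful bookkeeping on the propositions already in place, together with the observation that the six cases exhaust the possibilities because they correspond to the four disjoint regimes for $(\aaln,\aln)$ (both equal $\Fon$; both equal a single $\sim$-class; $\aaln=\emptyset$ and $\aln$ nonempty, split into convex-width positive or zero; $\aln=\emptyset$), the last one being further split by the behaviour of $\bln(\gs,F)$.
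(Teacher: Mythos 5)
Your proposal is correct and follows essentially the same route as the paper: partition first on $\aaln(\gs,F)$ via Propositions~\ref{prop:equicontdir} and~\ref{prop:equicontrational} (giving C1, C2, and the rest), then partition the remainder by discussing $\aln(\gs,F)$ through its convexity (Proposition~\ref{convexe_al}, with Proposition~\ref{voisinage_al} confining it between $h_{-s}$ and $h_{-r}$) and $\bln(\gs,F)$, mutual non-emptiness being excluded by Proposition~\ref{prop:expequ}. You are in fact more explicit than the paper's own two-line proof, and the ``obstacles'' you flag for C2 ($\aln=\aaln$, the preperiod/period pair, the half-line structure of the one-sided expansivity sets on $F^m(\gs)$) are precisely the points the paper also leaves to the reader.
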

\begin{proof}
  First, by proposition~\ref{prop:equicontdir} and considering the possible
  values of $\aaln(\gs,F)$, we get a partition into: $C1$, $C2$, and $C'=C3\cup
  C4\cup C5\cup C6$. The additional property in class $C2$ is obtained by
  proposition~\ref{prop:equicontrational}.

  Then, inside $C'$, the partition is obtained by discussing on $\aln(\gs,F)$
  (proposition~\ref{convexe_al}) and $\bln(\gs,F)$, non-emptyness of both being
  excluded by proposition~\ref{prop:expequ}.
\end{proof}

\section{Equicontinuous dynamics: non-trivial constructions}
\label{sec:examples}

This section aims at showing through non-trivial examples that the
generalization of directional dynamics to arbitrary curve is pertinent.

\subsection{Parabolas}
\label{subsec:par}

\newcommand{\stw}{\vbox to 7pt{\hbox{\includegraphics{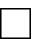}}}}
\newcommand{\sta}{\vbox to 7pt{\hbox{\includegraphics{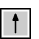}}}}
\newcommand{\str}{\vbox to 7pt{\hbox{\includegraphics{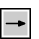}}}}
\newcommand{\stl}{\vbox to 7pt{\hbox{\includegraphics{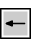}}}}
\newcommand{\stb}{\vbox to 7pt{\hbox{\includegraphics{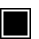}}}}

Let us define the function
\[ p: \left\{
  \begin{array}{rcl} \N & \rightarrow & \Z \\ x & \mapsto &
    \left\lfloor \frac{\sqrt{1+4(x+1)}-1}{2} \right\rfloor
  \end{array} \right.
\]
whose inverse is
\[ p^{-1}: \left\{
  \begin{array}{rcl} \N & \rightarrow & \Z \\ x & \mapsto & x(x+1)-1
  \end{array} \right.
\]

This whole subsection will be devoted to the proof and discussion of
the following result~:

\begin{prop}\label{prop:par} 
  There exists a cellular automaton $(\apz,\Fp)$ such that
  \[
  \aln(\apz,\Fp)= \{h\in\Fon : p\precsim h \precsim
  \operatorname{id}\}
  \] 
  where $\operatorname{id}$ denotes the identity function $n\mapsto
  n$.
\end{prop}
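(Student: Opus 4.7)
The plan is to build $(\apz,\Fp)$ around two distinguished species of \emph{robust wall particles}: a \emph{fast wall} $W_{\operatorname{id}}$ that propagates rigidly at slope~$1$, and a \emph{slow wall} $W_p$ equipped with an internal counter which forces it to remain at position $k$ for exactly $2k$ time steps before advancing to position $k+1$, so that its trajectory is equivalent to $p$ (recall $p^{-1}(k)=k(k+1)-1$). Robustness means that the transition of the wall cell, together with a thin guard strip around it, depends only on the wall state and its counter and not on ambient neighbors; the rule systematically overwrites any intruding state adjacent to the wall. This realization of the counter technique announced in the introduction is the principal technical hurdle: one must certify that, whatever configuration lies outside the wall at time~$0$, the wall continues to follow its intended trajectory forever, and in particular that the counter information survives arbitrary garbage pressing against the wall from either side.

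Once the construction is carried out, encode $W_{\operatorname{id}}$ in a short word $u_{\operatorname{id}}$ and a freshly initialized $W_p$ (carrying its full counter) in a short word $u_p$. By design, $u_{\operatorname{id}}$ is an $(\N,\apz)$-blocking word \along{} $\operatorname{id}$ and $u_p$ an $(\N,\apz)$-blocking word \along{} $p$ in the sense of Definition~\ref{def:blockingword}. Hence $\operatorname{id},p\in\aln(\apz,\Fp)$, and Proposition~\ref{convexe_al} immediately yields
\[
\{h\in\Fon : p\precsim h \precsim \operatorname{id}\}\subseteq\aln(\apz,\Fp).
\]

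For the reverse inclusion, since $\apz$ is transitive Proposition~\ref{bloc-equicont} reduces matters to excluding blocking words \along{} every $h$ outside this set. If $h\not\precsim\operatorname{id}$, the general upper bound $\cone{\Fp}{[u]_0}\subseteq\{\site{m}{n} : -nr\le m<|u|+ns\}$ recalled just after Definition~\ref{def:blockingword}, applied with the right-radius $s\le 1$ built into $\Fp$, directly prevents the strip $\{(m,n):h(n)\le m<h(n)+e\}$ from fitting inside the cone of consequences of any finite word. If $p\not\precsim h$, the argument is more delicate: I would classify by a case analysis on the local rule the rightward trajectories along which consequences can persist forever, showing that each such trajectory propagates either a $W_{\operatorname{id}}$ pattern or a $W_p$ pattern, so that (up to $\sim$) the only admissible rightward slopes are those of $\operatorname{id}$ and $p$, and any stationary or leftward ``wall-like'' candidate is destroyed in finite time by a designated eraser transition. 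Combining the two bounds gives the reverse inclusion and hence the desired equality.

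The main obstacle is thus the simultaneous achievement of two opposing design constraints: the wall particles must be strong enough to be robust to every possible context (so that $u_{\operatorname{id}}$ and $u_p$ really are blocking words), yet the automaton must be rigid enough that \emph{no other} persistent wall exists outside the equivalence classes of $\operatorname{id}$ and $p$ — otherwise Proposition~\ref{convexe_al} would enlarge $\aln(\apz,\Fp)$ beyond the claimed interval.
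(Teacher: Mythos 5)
There is a genuine gap, and it sits exactly at the step you yourself flag as the ``principal technical hurdle'': the slow wall $W_p$ ``equipped with an internal counter which forces it to remain at position $k$ for exactly $2k$ time steps'', with robustness guaranteed by ``a thin guard strip'' whose transition ``depends only on the wall state and its counter''. A counter that must reach values of order $2k$ for unbounded $k$ cannot be stored in a bounded-width strip over a finite alphabet, so this object does not exist as described. If you instead store the counter in unary over an unbounded region (which is what any working construction must do), the thin-guard-strip robustness argument collapses: the counter now occupies cells far from the wall, and you must separately protect that entire region from the arbitrary ambient configuration. The paper avoids this trap by \emph{not} building a robust parabola-following wall at all. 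It uses a single growing region of a distinguished state whose right edge always advances at speed $1$ (no state has priority over it from the right) and whose left edge can be eroded \emph{at most} at the parabolic rate, the ``counter'' being realized geometrically as a signal bouncing between a vertical axis and the left edge; the worst-case (fastest) erosion is then shown to be exactly the one produced by a canonical three-letter context, so the cone of consequences of the \emph{single} word $\stb$ already contains the whole region $\{\site{c}{t} : p(t)\le c\le t\}$, and no appeal to Proposition~\ref{convexe_al} or to a second wall is needed for the forward inclusion. Your two-wall-plus-convexity decomposition is a legitimate alternative architecture in principle, but the load-bearing component ($W_p$) is unrealizable in the form you specify.

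The reverse inclusion for $p\not\precsim h$ is also not yet an argument. ``Classify the rightward trajectories along which consequences can persist forever, showing each propagates either a $W_{\operatorname{id}}$ or a $W_p$ pattern'' is a restatement of the goal, and your fallback dichotomy (destroy ``stationary or leftward'' candidates) misses precisely the dangerous cases: curves that move rightward but strictly slower than $p$ (e.g.\ $h(n)=\lfloor n^{1/3}\rfloor$), along which a hypothetical blocking word would have to be excluded. Note also that a blocking word along $h$ need not contain any recognizable ``wall particle''; it only needs its cone of consequences to contain the strip along $h$, so a classification of particles does not immediately bound the admissible $h$. The paper's converse is a short global sandwich: given a blocking word $w$ along $h$, pad it to $vwv$ with a canonical context $v$; on the everywhere-$\stb$ background every site eventually carries the distinguished state, so every site of $\cone{\Fp}{[vwv]_0}$ with $t\ge|vwv|$ is forced to that state; on the everywhere-blank background those sites are confined to $p(t)+C_1\le z\le t+C_2$; intersecting the two constraints yields $p\precsim h\precsim\operatorname{id}$ directly (and subsumes your light-cone argument for $h\not\precsim\operatorname{id}$). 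Some argument of this global, two-background type is what your proposal is missing.
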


\begin{proof}

Let us describe such an automaton. We will work on the standard
neighborhood $\Up=\{-1,0,1\}$ and use the set of 5 states $\Ap=\{\stw,
\str, \stl, \sta, \stb\}$. The behavior of the automaton will be
described in terms of \emph{signals}~: a cell in state $\stw$ should
be seen as an \emph{empty cell} with no signal, whereas all other
states represent a given signal on the cell. There can only be one
signal at a time on a given cell.

Signals move through the configuration. A signal can move to the left,
to the right or stay on the cell it is (in which case we will say that
the signal moves up because it makes a vertical line on the space-time
diagram) as shown on Figure~\ref{fig:par:signals}. A signal can also
duplicate itself by going in two directions at a time (last case in
figure \ref{fig:par:signals}).

\begin{figure}[htbp] \centering
  \includegraphics{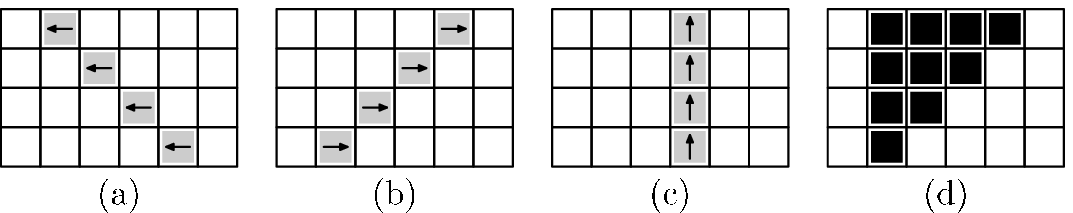}
  \caption{Signals moving left (a), right (b), up (c) and both up and
right (d).}
  \label{fig:par:signals}
\end{figure}

We will now describe how each signal moves when it is alone
(surrounded by empty cells) and how to deal with collisions, when two
or more signals move towards the same cell (figure
\ref{fig:par:std_ex} provides a space-time diagram that illustrates
most of these rules)~:

\begin{itemize}
\item the $\stb$ signal moves up and right. It has priority over all
signals except the $\str$ one (signals with lesser priority disappear
when a conflict arises)~;
\item the $\sta$ signal moves up. It has priority over all signals
except the aforementioned $\stb$ signal~;
\item the $\stl$ signal moves left until it reaches a $\sta$ signal,
at which point it becomes a $\str$ signal (it turns around) instead of
colliding into it~;
\item finally, the $\str$ signal moves right until it reaches a $\stb$
signal in which case it moves over it but turns into a $\stl$ signal
(and therefore from there it moves \emph{away} from the $\stb$). Not
only the $\str$ signal cannot go through a $\sta$ signal, as a
consequence of what has been stated earlier, but it cannot cross a
$\stl$ signal either (even if there was no real collision because the
two could switch places)~: it is erased by the $\stl$ signal moving in
the opposite direction.
\end{itemize}

\begin{figure}[htbp] \centering
  \includegraphics{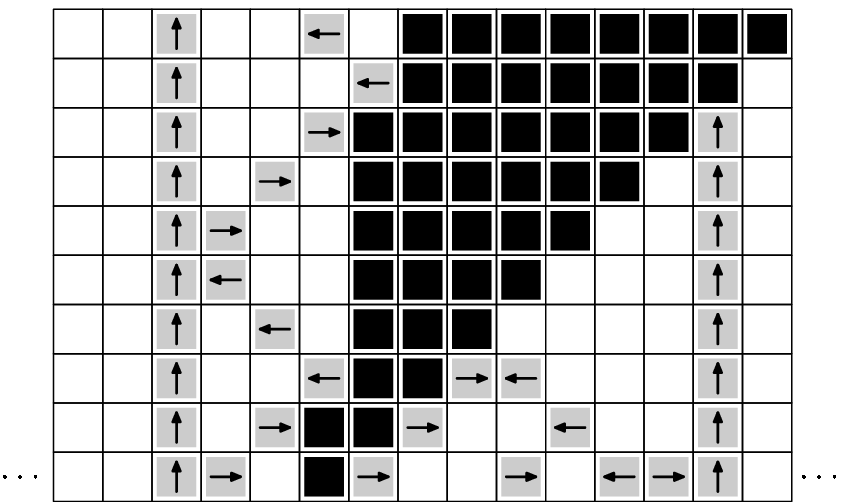}
  \caption{A space-time diagram of the described automaton on an
example starting configuration.}
  \label{fig:par:std_ex}
\end{figure}

The general behavior of the automaton can be described informally as
follows:
\begin{itemize}
\item $\stb$ states form connex segments that expand towards the right
and can be reduced from the left by $\str$ signals~;
\item $\sta$ signals create ``vertical axes'' on the space-time
  diagram~;
\item $\str$ and $\stl$ signals bounce back and forth from a vertical
$\sta$ axis (on the left) to a $\stb$ segment (on the right). The
$\sta$ border does not move but the $\stb$ on the right side is pushed
to the right at each bounce~;
\item $\stb$ signals can erase $\sta$ signals and by doing so
``invade'' a portion in which bouncing signals evolve. $\stb$ segments
can merge when the right border of one reaches the left border of
another.
\end{itemize}

We will now show that $\cone{\Fp}{[\stb]_0}$, the set of consequences
of the single-letter word $w=\stb$ according to this automaton, is
exactly the set of sites
\[\paraset=\{\site ct \st t\geq 0, c\in
  \llbracket p(t), t\rrbracket\}
\]

\fact{ $\paraset$ is exactly the set of sites in state
$\stb$ in the space-time diagram starting from the initial finite
configuration corresponding to the word $\sta\str\stb$ (all other
cells are in state $\stw$), as illustrated by
Figure~\ref{fig:par:parabola}.}\label{fact:par:cc}

\bfactprf It is clear from the behavior of the automaton that it takes
$2(n+1)$ steps for the left border of the $\stb$ segment to move from
cell $n$ to cell $(n+1)$. Conveniently enough, $p$ has the property
that $p^{-1}(n+1)=p^{-1}(n)+2(n+1)$.

\efactprf

\begin{figure}[htbp] \centering
  \includegraphics{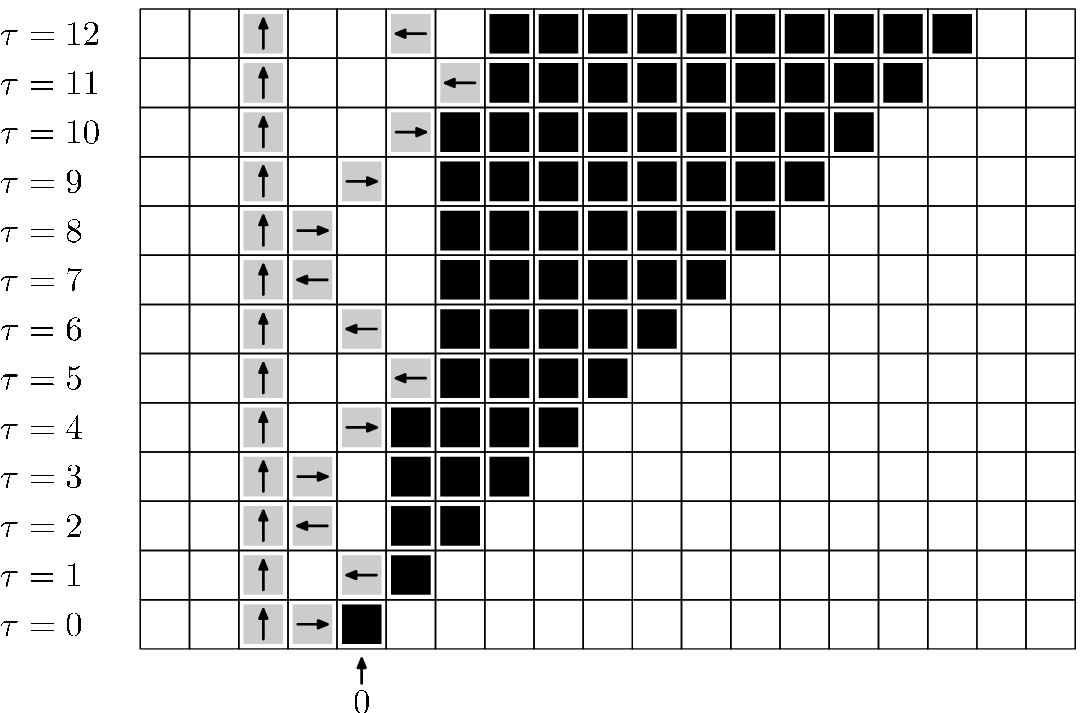}
  \caption{The set $\paraset$ is exactly the set of sites in
    state \protect\includegraphics[scale=0.75]{figures/dpst-5}.}\label{fig:par:parabola}
\end{figure}

From Fact~\ref{fact:par:cc} we show that all sites that are not in
$\paraset$ cannot be in the consequences of $w$ because if we start
from the uniformly $\stb$ configuration (which is an extension of $w$)
all states in the diagram are $\stb$ and hence these sites have
different states depending on the extension of $w$ used as starting configuration.

Let us prove that conversely, for whatever starting configuration that contains $w$ at the origin,
all sites in $\paraset$ are in state $\stb$.

\fact{If there exists a starting configuration $\mathcal{C}$ containing $w$ at the origin such
  that one of the sites $\site ct\in\paraset$ is in a state other than
  $\stb$, then there exists a \emph{finite} such starting configuration
  (one for which all cells but a finite number are in state $\stw$)
  for which the site $\site ct$ is in a state other than
  $\stb$.}\label{fact:par:finite} \bfactprf The state in the site
$\site ct$ only depends on the initial states of the cells in
$\llbracket c-t, c+t\rrbracket$. The finite configuration that
coincides with $\mathcal C$ on these cells and contains only $\stw$
states on all other cells has the announced property.  \efactprf

The $\stb$ signal tends to propagate towards the top and the
right. Since only the $\str$ signal has priority over the $\stb$ one
and because the former moves to the right, it cannot collide with the
latter from the right side, which means that nothing can hinder the
evolution of the $\stb$ signal to the right.

From now on, we will say that a connex segment of cells in state
$\stb$ (that we will simply call a $\stb$ segment) is \emph{pushed}
whenever a $\str$ signal bounces on its left border, and by doing so
erases the leftmost $\stb$ state of the segment.

\fact{ Starting from an initial finite
configuration, the time interval between two consecutive ``pushes'' of
the leftmost $\stb$ state (by a $\str$ signal) is exactly double the
distance between it and the first $\sta$ state to its left, if any.} \label{fact:par:leftmost}

\bfactprf All $\sta$ signals to the left of the leftmost $\stb$
segment are preserved. When a $\str$ signal pushes the $\stb$ state,
it generates a $\stl$ signal that moves left erasing all $\str$
signals it meets on its way. Therefore nothing can reach the $\stb$
state while the $\stl$ signal is moving. When it reaches the first
$\sta$ state (if there is one) and turns into a $\str$ signal, the
configuration is as follows~:
  \[
  \includegraphics{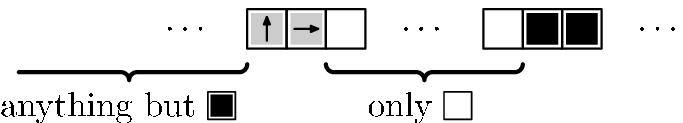}
  \] and nothing other than this newly produced $\str$ signal will
push the $\stb$ state. The time between the apparition of the $\stl$
signal after the first push until the push by the second $\str$ signal
is exactly double the distance between the $\stb$ and the $\sta$
states.

  \rmk If there are no $\sta$ states to the left of the $\stb$ then
there can be at most one push because the $\stl$ will never bounce
back and will erase all $\str$ signals before they reach the $\stb$
state.
\efactprf

\fact{If $c$ is the leftmost cell in state $\stb$ of
a finite starting configuration, then all sites in $(\paraset+c)$ are in
the state $\stb$.}\label{fact:par:left}

\bfactprf From Fact~\ref{fact:par:leftmost} we show that the
configuration corresponding to the word $\sta\str\stb$ illustrates the
fastest way to push a leftmost $\stb$ segment~: it is the
configuration where the $\sta$ signal is the closest possible to the
$\stb$ (while still having a bouncing signal between them) and for
which the first push happens at the earliest possible time.

This means, in conjunction with Fact~\ref{fact:par:cc}, that if $c$ is
the leftmost cell originally in state $\stb$ then all sites in
$(\paraset+c)$ are in state $\stb$ because its $\stb$ segment cannot
be pushed faster.  \efactprf

Fact~\ref{fact:par:left} can be extended to all cells in state $\stb$
by induction~:

\fact{If a cell $c$ is in state $\stb$ in a
finite starting configuration and that for all cells $c'<c$ initially
in state $\stb$ all sites in $(\paraset+c')$ are in state $\stb$, then all
sites in $(\paraset+c)$ are in state $\stb$.}  \label{fact:par:induction}
\bfactprf Let $c_1$ be the closest cell to the left of $c$ that is
initially in state $\stb$. Because the $\stb$ signal from $c_1$
propagates to the right at maximal speed, it can have no influence on
the behavior of the $\stb$ segment generated by $c$ before it has
actually reached it.

  This means that, until the two $\stb$ segments merge (at some time
$t_1$), the segment from $c$ behaves as if it were the leftmost one,
and therefore Fact~\ref{fact:par:left} applies and ensures that all
sites in
  \[ \{\site ct \st t<t_1, c\in\llbracket c+p(t), c+t\rrbracket\}
  \] are in state $\stb$.

  After the two segments merge, we know that all sites in
  \[ \{\site ct \st t\geq t_1, c\in\llbracket c_1+p(t),
c+t\rrbracket\}
  \] are in state $\stb$. These include all remaining sites in
$(\paraset+c)$ since $c_1<c$.
\efactprf

By Fact~\ref{fact:par:finite}, it is sufficient to show that for all
finite extensions of $w$ all sites in $\paraset$ are in the $\stb$ state.

We then proceed by induction to show that for any cell $c$ in state
$\stb$ on a finite initial configuration, all sites in $(\paraset+c)$
are in state $\stb$ (Fact~\ref{fact:par:left} is the initialization,
Fact~\ref{fact:par:induction} is the inductive step).

This concludes the proof that $\cone{\Fp}{[\stb]_0} = \paraset$ and
therefore
\[\{h\in\Fon : p\precsim h \precsim \operatorname{id}\}\subseteq \aln(\apz,\Fp)\]

For the converse inclusion, let $h\in\aln(\apz,\Fp)$ and suppose that
$w$ is a blocking word along $h$. Then the word $vwv$ is also a
blocking word along $h$ with $v=\sta\str\stb$. From the definition of
$\Fp$ we know that for any site $\site zt$ in the consequences of
$vwv$, with $t\geq |vwv|$, and any configuration $x\in[vwv]_0$ then
$\Fp^t(x)_z =\stb$ (because it is the case for the configuration
everywhere in state $\stb$ except on the finite portion where it is
$vwv$). Now, if we consider the sites in state $\stb$ generated by the
configuration everywhere in state $\stw$ except on the finite portion
where it is $vwv$, we have:
\[
\site zt \in\cone{\Fp}{[vwv]_0} \Rightarrow\ p(t)+C_1\leq z\leq \operatorname{id}(t)+C_2,
\] 
for some constants $C_1$ and $C_2$.

This completes the proof of Proposition~\ref{prop:par}.
\end{proof}

This shows that the notion of equicontinuous points following a
non-linear curve is pertinent. This was an open question of
\cite{Sablik-2008}.

An other open question of \cite{Sablik-2008} was to find a cellular
automaton such that $\daln(\az,F)$ has open bounds.

\begin{corollary}
  There exists a CA $(\az,F)$ such that $\daln(\az,F)$ has open
  bounds.
\end{corollary}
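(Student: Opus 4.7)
The plan is to use the CA $(\apz, \Fp)$ already constructed in Proposition~\ref{prop:par}. Since $\daln(\apz, \Fp) = \aln(\apz, \Fp) \cap \Dir$, it suffices to determine the set of real numbers $\alpha$ such that $h_\alpha$ lies in $\{h \in \Fon : p \precsim h \precsim \operatorname{id}\}$, and then check that the resulting set of slopes is not closed.

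I would handle the two order-relation conditions separately. First, $h_\alpha \precsim \operatorname{id}$ amounts to requiring $\lfloor \alpha n \rfloor - n$ to be bounded from above as $n$ ranges over $\N$, which clearly holds exactly when $\alpha \leq 1$ (for $\alpha > 1$ this difference diverges linearly). Second, $p \precsim h_\alpha$ requires $p(n) - \lfloor \alpha n \rfloor$ to be bounded from above; since $p(n)$ is of order $\sqrt{n}$ and hence tends to infinity, this fails whenever $\alpha \leq 0$ (there $\lfloor \alpha n \rfloor$ is nonpositive), whereas for any $\alpha > 0$ the linear lower bound $\alpha n - 1 \leq \lfloor \alpha n \rfloor$ eventually dominates $\sqrt{n}$ and the condition holds.

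Putting the two facts together yields $h_\alpha \in \daln(\apz, \Fp)$ if and only if $\alpha \in (0,1]$. The value $\alpha = 0$ is a bound of this interval but is not attained, so $\daln(\apz, \Fp)$ has an open bound, which gives the corollary. Since all the substantive work is already in Proposition~\ref{prop:par}, there is no real obstacle here; the argument is an elementary asymptotic comparison between the sublinear function $p$ and the family of linear functions $h_\alpha$.
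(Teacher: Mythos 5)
Your proposal is correct and follows exactly the paper's own route: take $F=\Fp$ from Proposition~\ref{prop:par} and observe that the set of slopes $\alpha$ with $p\precsim h_\alpha\precsim\operatorname{id}$ is precisely $]0,1]$, whose left bound is open. You merely spell out the elementary asymptotic comparison that the paper leaves implicit.
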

\begin{proof}
  Choosing $F=\Fp$, the result follows directly from
  Proposition~\ref{prop:par}: the set of slopes of straight lines lying
  between $p$ and $\operatorname{id}$ is exactly $]0,1]$.
\end{proof}


\subsection{Counters}
\label{subsec:counters}

\newcommand{\sti}{\vbox to 7pt{\hbox{\includegraphics{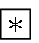}}}}

In this section we will describe a general technique that can be used
to create sets of consequences that have complex shapes.

The general idea is to build the set of consequences in a
``protected'' area in a cone of the space-time diagram (the area
between two signals moving in opposite directions) and making sure
that nothing from the outside can affect the inside of the cone.

We will illustrate the technique on a specific example that will
describe a CA for which the set of consequences of a single-letter
word is the area between the vertical axis and a parabola. The
counters construction will then also be used in section
\ref{sec:linear} to construct more complex sets of consequences.

\begin{prop}\label{prop:counters}
  There exists a cellular automaton $(\az_C,F_C)$ such that
  \[
  \aln(\az_C,F_C) = \{h\in\Fon : \monbb{0} \precsim h \precsim p\}
  \]
  where $\monbb{0}$ denotes the constant fuction $n\mapsto 0$.
\end{prop}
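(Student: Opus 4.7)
The plan is to apply the counter technique outlined at the start of this subsection: build a CA whose cone of consequences of a single seed state $\sti$ is exactly the region $\Pi=\{(m,n)\in\Z\times\N : 0\le m\le p(n)\}$ lying between the vertical axis through the seed and the parabola $p$. Once $\cone{F_C}{[\sti]_0}=\Pi$ (up to an additive constant) is established, the characterization of $\aln(\az_C,F_C)$ follows by the same reasoning as in the closing paragraphs of the proof of Proposition~\ref{prop:par}: for any $h$ with $\monbb{0}\precsim h\precsim p$ the word $v\sti v$, with a suitable padding $v$, is a blocking word along $h$; conversely a blocking word along some $h\in\aln(\az_C,F_C)$ forces the cone of consequences of a correspondingly padded word to contain a tube along $h$, which in turn forces $\monbb{0}\precsim h\precsim p$ by comparing two suitable uniform extensions.

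For the construction itself, I would use three functional states inside the cone --- a vertical \emph{wall} pinned at position $0$, a right-moving \emph{boundary} tracing out the curve $p$, and a ping-pong \emph{counter} bouncing between them --- together with a neutral background and a few auxiliary states. The seed $\sti$ would produce in one step a wall at position $0$ (whose local rule outputs the wall state regardless of its two neighbors, so that the wall is indestructible) together with a counter launched to the right. When the counter reaches the boundary at current position $k$ it would advance the boundary by one cell and bounce back toward the wall; when it reaches the wall it would bounce back to the right. A complete round trip then takes time $2k$, matching exactly the recurrence $p^{-1}(k+1)-p^{-1}(k)=2(k+1)$ that already appeared in the proof of Proposition~\ref{prop:par}, so that the boundary tracks the parabolic curve.

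The inclusion $\Pi\subseteq\cone{F_C}{[\sti]_0}$ would then be proved along the scheme of Facts~\ref{fact:par:finite}--\ref{fact:par:induction}: one reduces to finite initial configurations and shows by induction on time that the wall, the counter and the boundary all sit in the expected positions regardless of the rest of the configuration, so that inside $\Pi$ the state is determined by the seed alone. The reverse inclusion is obtained by exhibiting two extensions of $[\sti]_0$ (for instance, two distinct uniform backgrounds with neutral letters) and observing that they necessarily disagree on any prescribed site outside $\Pi$.

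The main technical obstacle, which is really what justifies the name \emph{counter technique}, is the robustness of the right-moving boundary. In Proposition~\ref{prop:par} the $\stb$ signal travels at the maximal CA speed $1$ and therefore cannot be caught up by anything coming from the right; here, by contrast, the boundary advances only like $\sqrt n$, so signals emanating from the context on the right repeatedly reach the protected zone. The local rules must therefore give strict priority to the wall, boundary and counter states, so that any alien state entering the cone from the right is systematically overridden by the propagation of these three and the internal counter is never disrupted. Once this priority system is designed correctly, the fact-by-fact inductive analysis goes through exactly as in the previous subsection.
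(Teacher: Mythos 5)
There is a genuine gap, and it sits exactly at the point you defer with ``once this priority system is designed correctly, the \ldots analysis goes through'': no priority system of the kind you describe can work, and the paper's proof exists precisely to replace it with something else. The context ranges over the \emph{full} shift, so an adversarial initial configuration may contain exact copies of every state of your construction --- walls, boundaries, counters --- placed arbitrarily. Your ``indestructible wall'' (a state whose rule outputs itself regardless of its neighbours) is already fatal: put one such wall state at position $5$ in the initial configuration next to the seed; it persists forever, the boundary can never advance past it, and sites $\langle m,t\rangle$ with $m>5$ and $p(t)\geq m$ fail to be consequences of the seed. More generally, a fake counter or fake boundary arriving from the right is indistinguishable, state by state, from the genuine one, so ``strict priority to wall, boundary and counter'' either lets the fake structure corrupt the real one or lets a fake structure in the context survive and disagree across extensions. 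The paper's resolution is not a priority order on states but a \emph{quantitative} invariant: the protective borders are unary counters (pairs of signals at speeds $1/4$ and $1/5$) encoding the age of the cone, the seed state $\sti$ is not producible by the rule, hence the genuine cone's counters are provably the youngest possible, and a collision protocol compares ages and always lets the youngest win (with a separate argument for the ``twin'' case of equal ages, where the two parabolas merge consistently). This age-comparison mechanism, together with the speed inequality $s_o\leq(1-s_i)/(s_i+3)$ that makes the comparison terminate in time, is the core of the proof and is absent from your proposal.

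A second, smaller gap concerns the reverse inclusion. Between the parabola and the protective borders the space-time diagram is \emph{not} a copy of the background: it is filled deterministically by the bouncing signal, the second parabola and the counter signals themselves, so two distinct uniform neutral backgrounds do \emph{not} disagree there --- those sites would wrongly be consequences of the seed in a single-layer automaton. The paper handles this with the transparency trick: an extra binary layer that evolves as the identity except where the main layer enters state $\stb$, so that every non-$\stb$ site retains its initial bit and can be toggled by the extension. You need this (or an equivalent device), not merely two neutral backgrounds.
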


As in the previous section, we will show that the consequences of a
single-letter word are exactly the set
\[ 
\{\site ct \st t\geq 0, c\in \llbracket 0,p(t)\rrbracket\}
\]

\subsubsection{General Description}

The idea is to use a special state $\sti$ that can only appear in the
initial configuration (no transition rule produces this state). This
state will produce a cone in which the construction will take
place. On both sides of the cone, there will be unary counters that
count the ``age of the cone''.

The counters act as protective walls to prevent external signals from
affecting the construction. If a signal other than a counter arrives,
it is destroyed. If two counters collide, they are compared and the
youngest has priority (it erases the older one and what comes
next). Because our construction was generated by a state $\sti$ on the
initial configuration, no counter can be younger (all other counters
were already present on the initial configuration).

The only special case is when two counters of the same age collide. In
this case we can ``merge'' the two cones correctly because both
contain similar parabolas (generated from an initial special state).

\subsubsection{Constructing the Parabola inside the Cone}

Inside the cone, we will construct a parabola by a technique that
differs from the one explained in the previous section in that the
construction signals are on the outer side.

The construction is illustrated by Figure \ref{fig:parabola_right}. We
see that we build two inter-dependent parabolas by having a signal
bounce from one to the other. Whenever the signal reaches the left
parabola, it drags it one cell to the right, and when it reaches the
righ parabola it pushes it two cells to the right.

\begin{figure}[htbp] \centering
  \includegraphics{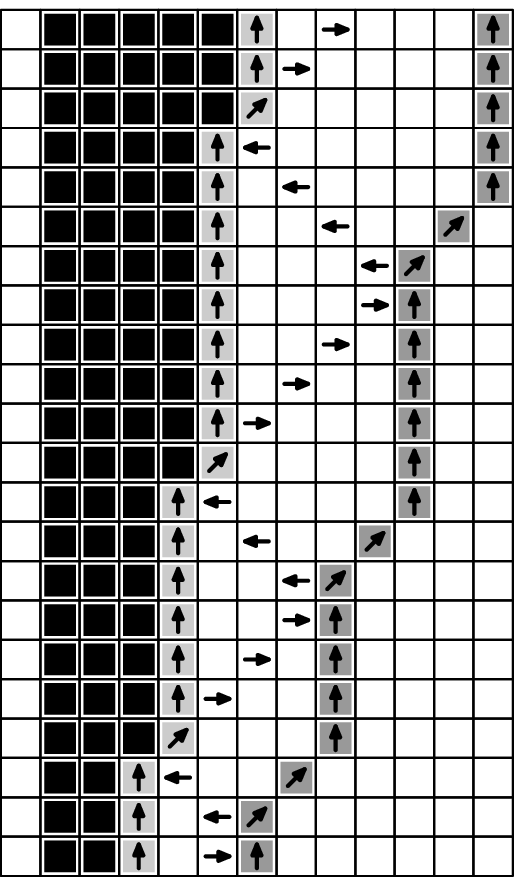}
  \caption{Construction of the parabola inside the cone.}
  \label{fig:parabola_right}
\end{figure}

It is easy to check that the left parabola advances as the one from
the previous section : when it has advanced for the $n$-th time, the
right parabola is at distance $n$ and therefore the signal will come
back after $2n$ time steps.

The $\stb$ state (that will be the set of consequences, as in the
previous section) moves up and right, but it is stopped by the states
of the left parabola.

\subsubsection{The Younger, the Better}

The $\sti$ state produces 4 distinct signals. Two of them move towards
the left at speed $1/4$ and $1/5$ respectively. The other two move
symmetrically to the right at speed $1/4$ and $1/5$.

Each couple of signals (moving in the same direction) can be seen as a
unary counter where the value is encoded in the distance between the
two. As time goes the signals move apart.

Note that signals moving in the same direction (a fast one and a slow
one) are not allowed to cross. If such a collision happens, the faster
signal is erased. A collision cannot happen between signals generated
from a single $\sti$ state but could happen with signals that were
already present on the initial configuration. Collisions between
counters moving in opposite directions will be explained later as
their careful handling is the key to our construction.

Because the $\sti$ state cannot appear elsewhere than the initial
configuration and counter signals can only be generated by the $\sti$
state (or be already present on the initial configuration), a counter
generated by a $\sti$ state is at all times the smaller possible one:
no two counter signals can be closer than those that were generated
together. Using this property, we can encapsulate our construction
between the smallest possible counters. We will therefore be able to
protect it from external perturbations: if something that is not
encapsulated between counters collides with a counter, it is
erased. And when two counters collide we will give priority to the
youngest one.

\subsubsection{Dealing with collisions}

Collisions of signals are handled in the following way:
\begin{itemize}
\item nothing other than an \emph{outer} signal can go through another
\emph{outer} signal (in particular, no ``naked information'' not
contained between counters);
\item when two \emph{outer} signals collide they move through each
other and comparison signals are generated as illustrated by Figure
\ref{fig:speeds};
\item on each side, a signal moves at maximal speed towards the
\emph{inner} border of the counter, bounces on it ($C$ and $C'$) and
goes back to the point of collision ($D$);
\item The first signal to come back is the one from the youngest
counter and it then moves back to the \emph{outer} side of the oldest
counter ($E$) and deletes it;
\item the comparison signal from the older counter that arrives
afterwards ($D'$) is deleted and will not delete the younger counter's
\emph{outer} border;
\item all of the comparison signals delete all information that they
encounter other than the two types of borders of counters.
\end{itemize}

\begin{figure}[htbp] \centering
  \includegraphics[height=6cm]{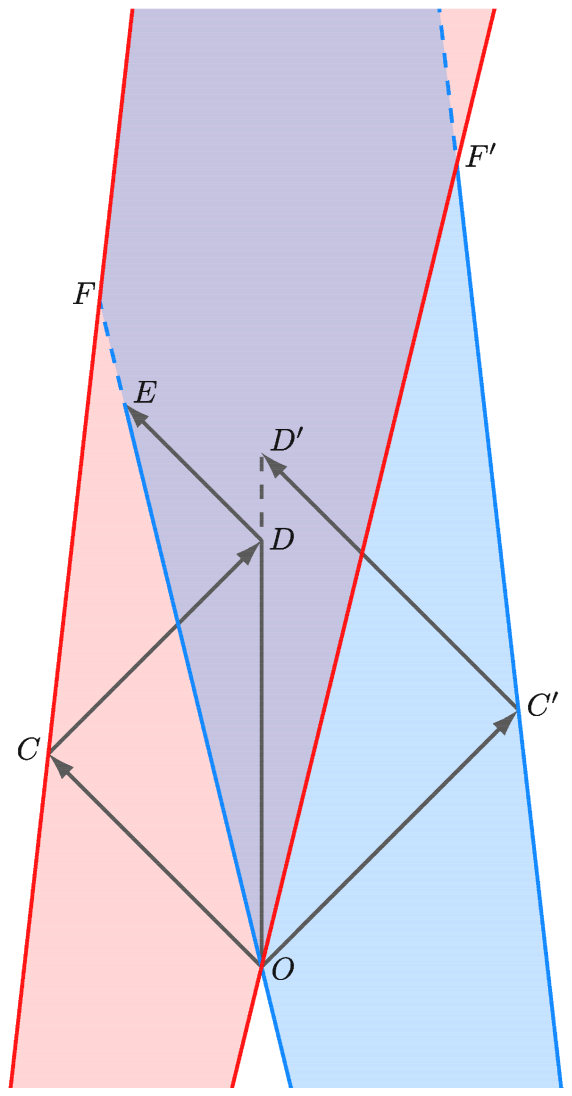}
  \caption{The bouncing signal must arrive (point $E$) before the
older counter moves through the younger one (point $F$).}
  \label{fig:speeds}
\end{figure}

\subsubsubsection{Counter Speeds} It is important to ensure that the
older counter's \emph{outer} border is deleted before it crosses the
younger's \emph{inner} border. This depends on the speeds $s_o$ and
$s_i$ of the \emph{outer} and \emph{inner} borders. It is true
whenever $s_o \leq \frac{1-s_i}{s_i+3}$. If the maximal speed is $1$
(neighborhood of radius $1$), it can only be satisfied if
\[ s_i<\sqrt{5}-2\simeq 0.2360
\] This means that with a neighborhood of radius 1 the \emph{inner}
border of the counter cannot move at a speed greater than $(\sqrt
5-2)$. Any rational value lower than this is acceptable. For
simplicity reasons we will consider $1/5$ (and the corresponding $1/4$
for the \emph{outer} border of the counter). If we use a neighborhood
of radius $k$, the counter speeds can be increased to $k/5$ and $k/4$.

\subsubsubsection{Exact Location}

Note that a precise comparison of the counters is a bit more complex
than what has just been described. Because we are working on a
discrete space, a signal moving at a speed less than maximal does not
actually move at each step. Instead it stays on one cell for a few
steps before advancing, but this requires multiple states.

In such a case, the cell on which the signal is is not the only
significant information. We also need to consider the current state of
the signal: for a signal moving at speed $1/n$, each of the $n$ states
represents an advancement of $1/n$, meaning that if a signal is
located on a cell $c$, depending on the current state we would
consider it to be exactly at the position $c$, or $(c+1/n)$, or
$(c+2/n)$, etc. By doing so we can have signals at rational
non-integer positions, and hence consider that the signal really moves
at each step.

When comparing counters, we will therefore have to remember both
states of the faster signals that collide (this information is carried
by the vertical signal) and the exact state in which the slower signal
was when the maximal-speed signal bounced on it. That way we are able
to precisely compare two counters: equality occurs only when both
counters are exactly synchronized.

\subsubsubsection{The \emph{Almost} Impregnable Fortress}

Let us now consider a cone that was produced from the $\sti$ state on
the initial configuration. As it was said earlier, no counter can be
younger than the ones on each side of this cone. There might be other
counters of exactly the same age, but then these were also produced
from a $\sti$ state and we will consider this case later and show that
it is not a problem for our construction.

Nothing can enter this cone if it is not preceded by an \emph{outer}
border of a counter. If an opposite \emph{outer} border collides with
our considered cone, comparison signals are generated. Because
comparison signals erase all information but the counter borders, we
know that the comparison will be performed correctly and we do not
need to worry about interfering states. Since the borders of the cone
are the youngest possible signals, the comparison will make them
survive and the other counter will be deleted.

Note that two consecutive opposite \emph{outer} borders, without any
\emph{inner} border in between, are not a problem. The comparison is
performed in the same way. Because the comparison signals cannot
distinguish between two collision points (the vertical signal from $O$
to $D$ in Figure \ref{fig:speeds}) they will bounce on the first they
encounter. This means that if two consecutive \emph{outer} borders
collide with our cone, the comparisons will be made ``incorrectly''
but this error will favor the well formed counter (the one that has an
\emph{outer} and an \emph{inner} border) so it is not a problem to us.

\subsubsubsection{Evil Twins}

The last case we have to consider now is that of a collision between
two counters of exactly the same age. Because the only counter that
matters to us is the one produced from the $\sti$ state (the one that
will construct the parabola), the case we have to consider is the one
where two cones produced from a $\sti$ state on the initial
configuration collide. These two cones contain similar parabolas in
their interior.

According to the rules that were described earlier, both colliding
counters are deleted. This means that the right side of the leftmost
cone and the left part of the rightmost cone are now ``unprotected''
and facing each other. From there, the construction of the two
parabolas will merge, as illustrated in Figure \ref{fig:merge}.

\begin{figure}[htbp] \centering
  \includegraphics{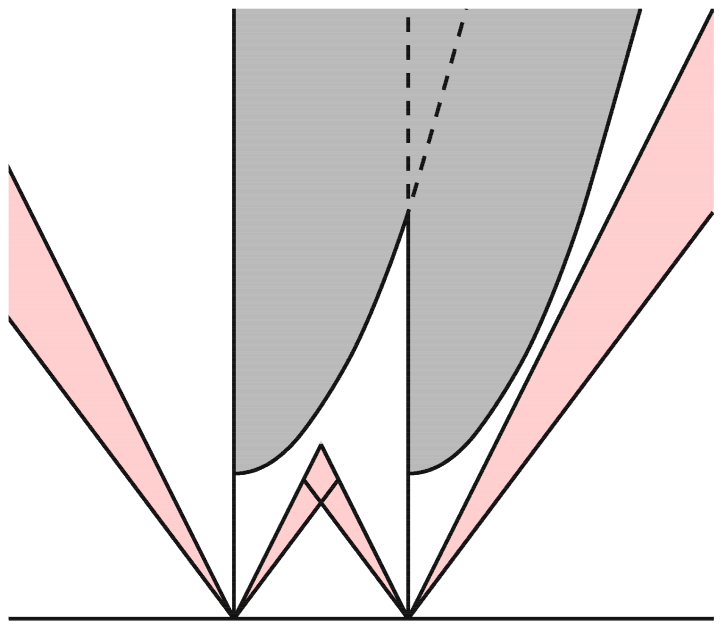}
  \caption{Two ``twin'' parabolas merging.}
  \label{fig:merge}
\end{figure} The key point here is the fact that the two merged
parabolas were ``parallel'' meaning that the one on the left would
never have passed beyond the one on the right. Because of this, for as
long as the rightmost parabola is constructed correctly, all states
that should have been $\stb$ because of the construction of the left
parabola will be $\stb$.

\subsubsection{The Transparency Trick}
\label{subsec:transparency}

We have shown that all sites that would be $\stb$ in the evolution of
a ``lonely'' $\sti$ state (all other initial cells being in the blank
$\stw$ state) will stay in the $\stb$ state no matter what is on the
other cells of the initial configuration. Now we have to show that no
other sites than these are in the consequences of the $\sti$ state.

To do so, we will use a very simple trick. We consider our automaton
as it was described so far, and add a binary layer with states $0$ and
$1$. This layer is free and independent of the main layer, except if
the state on the main layer is $\stb$ in which case the binary layer
state can only be $0$. More precisely, this layer is kept unchanged
except when entering into state $\stb$ on the main layer in which case
it becomes $0$.  Let's call $\Fc$ the final CA obtained.

Now, if we consider configurations where the main layer is made of a
single $\sti$ state on a blank configuration, we are guaranteed that
$\stb$ states cannot disappear once present in a cell. Hence, at any
time, any cell not holding the state $\stb$ contains the value from
the initial configuration in this additional layer.  This means that,
for such initial configurations, all sites can be changed except those
holding state $\stb$. This means that the consequences of the
single-letter word $(\sti,0)$ (this is a single state in the
two-layered automaton) are exactly
\[
\cone{\Fc}{[(\sti,0)]_0} = \{\site ct \st t\geq 0, c\in \llbracket
0,p(t)\rrbracket\}
\]

This concludes the proof of Proposition \ref{prop:counters}.\\

The counters construction used to protect the evolution of the
parabola gives $\Fc$ a special property: all equicontinuous points are
Garden-of-Eden configurations. To our knowledge, this is the first
constructed CA with this property.

\begin{corollary}
  \label{cor:gardenedenwall}
  There exists a CA $(\az,F)$ having (classical) equicontinuous points, but
  all being Garden-of-Eden configurations (\textit{i.e.}
  configurations without a predecessor).
\end{corollary}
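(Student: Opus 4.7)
The plan is to take $F=\Fc$, the cellular automaton constructed in the proof of Proposition~\ref{prop:counters}. Since $\monbb{0}\in\aln(\az_C,\Fc)$, this CA already has classical equicontinuous points (equicontinuity along the constant curve $\monbb{0}$ coincides with the standard topological notion), so it remains to verify that every such point is a Garden-of-Eden configuration.

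The key design feature of $\Fc$ is that the state $\sti$ is never produced by any transition of the local rule: by construction it can only appear in the initial configuration. Consequently, for any $x\in\az_C$ the image $\Fc(x)$ contains no cell in state $\sti$, so every configuration containing at least one occurrence of $\sti$ has no preimage and is a Garden-of-Eden configuration. The task therefore reduces to showing that every classical equicontinuous point of $\Fc$ contains the state $\sti$ at some position.

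For this last step I would apply Proposition~\ref{bloc-equicont} with $\gs=\az_C$ (which is trivially transitive): any equicontinuous point contains infinitely many occurrences of some blocking word $u$ along $\monbb{0}$. It suffices to show that any such $u$ must contain $\sti$. Suppose for contradiction that $u$ has no $\sti$. Compare two extensions of $u$ at the origin: (i) $u$ padded with blank states $\stw$ on both sides, and (ii) the same configuration but with one additional $\sti$ inserted far enough outside $u$. In (ii) the inserted $\sti$ generates a counter whose outer border propagates at speed $1/4$ towards $u$ and, after some finite time $T$, enters the vertical strip of width $e$ whose content $u$ is supposed to determine. By the ``smallest possible counter'' property used in section~\ref{subsec:counters}, no signal emitted by $u$ can be a counter of age smaller than this freshly generated one, since the latter starts with its two border signals co-located at time $0$ and no state other than $\sti$ achieves that. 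Hence the intruding counter wins all priority comparisons with $u$'s signals (erasing non-counter signals outright and defeating older counters in the comparison mechanism), reaches the strip, and modifies its content. The evolutions in (i) and (ii) thus differ inside the strip at times $\geq T$, contradicting the blocking property of $u$.

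The main technical obstacle is giving a clean justification of the ``smallest possible counter'' property when no $\sti$ is present in $u$: one must check that every pre-existing pair of inner/outer border signals appearing in $u$ is, at every time and in particular at the collision with the intruding counter, strictly older (or at worst ``twin'') than one freshly generated by $\sti$, using the $1/4$ versus $1/5$ speed separation of the two border signals. Once this is settled, the chain of implications equicontinuous point $\Rightarrow$ contains a blocking word $\Rightarrow$ contains $\sti$ $\Rightarrow$ Garden-of-Eden closes the proof.
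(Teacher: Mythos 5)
Your proposal is correct and follows essentially the same route as the paper: choose $F=\Fc$, reduce (via Proposition~\ref{bloc-equicont}) to showing that every blocking word must contain the state $\sti$, and derive a contradiction by placing an $\sti$ outside a putative $\sti$-free blocking word and observing that the freshly generated counter is the youngest possible, so it penetrates the word's consequences and alters them. The paper's own proof is exactly this argument, merely leaving implicit the step you make explicit, namely that $\sti$ is never produced by the rule and hence any configuration containing it is a Garden-of-Eden configuration.
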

\begin{proof}
  We choose $F=\Fc$. The corollary follows from the fact that any
  blocking word must contain the state $\sti$. To see this suppose
  that some blocking word $w$ do not contain the state $\sti$. Then,
  whatever $w$ is, it cannot produce a protective cone as ``young'' as
  the one generated by $\sti$. Therefore, if $c\in[w]_0$ contains the
  state $\sti$ somewhere, the outer signal generated by $\sti$ will
  reach the central cell at some time depending on the position of the
  first occurrence of $\sti$ in $c$, and, after some additional time,
  the central cell will become $\stb$ and stay in this state
  forever. This way, we can choose two configurations $c,c'\in[w]_0$
  such that the sequence of states taken by the central cell are
  different: this is in contradiction with $w$ being a blocking word.
\end{proof}

By combining the two constructions from Propositions \ref{prop:par}
and \ref{prop:counters} we can obtain a CA with only one direction (up
to $\sim$) along which the CA has equicontinuity points and such that
this direction is not linear. This is another example where the
generalization of directional dynamics to arbitrary curves is
meaningful.

\begin{corollary}
  There exists a CA $(\az,F)$ with $\aln(\az,F)= \{h : h\sim h_0\}$
  where $h_0$ is not linear.
\end{corollary}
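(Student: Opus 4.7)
The plan is to take $F$ to be the product CA $\Fp \times F_C$ acting on the alphabet $\Ap \times \A_C$, combining the automaton from Proposition~\ref{prop:par} with the automaton $F_C = \Fc$ from Proposition~\ref{prop:counters}. The intuition is that a curve along which the product is non-sensitive is exactly one along which both factors are non-sensitive, so the intersection of the two previously computed sets $\{h : p \precsim h \precsim \operatorname{id}\}$ and $\{h : \monbb{0} \precsim h \precsim p\}$ should collapse to the $\sim$-class of the non-linear curve $p$.

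The first step would be a general observation: for any two CAs $(\A_1^\Z, F_1)$ and $(\A_2^\Z, F_2)$, one has $\aln((\A_1\times\A_2)^\Z, F_1\times F_2) = \aln(\A_1^\Z, F_1) \cap \aln(\A_2^\Z, F_2)$. This follows because the Cantor metric on the product alphabet satisfies $B_{(\A_1\times\A_2)^\Z}((x_1,x_2),\dd) = B_{\A_1^\Z}(x_1,\dd) \times B_{\A_2^\Z}(x_2,\dd)$, and $\s^{h(n)} \circ (F_1\times F_2)^n$ acts coordinatewise. A pair $(x_1,x_2)$ is therefore equicontinuous along $h$ for $F_1\times F_2$ if and only if $x_1$ and $x_2$ are equicontinuous along $h$ for $F_1$ and $F_2$ respectively, giving $Eq^h_\N(F_1\times F_2) = Eq^h_\N(F_1)\times Eq^h_\N(F_2)$, which is non-empty iff both factors are.

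Applying this to $F = \Fp \times F_C$ and invoking Propositions~\ref{prop:par} and~\ref{prop:counters}, I would get $\aln(\az,F) = \{h : p\precsim h\precsim \operatorname{id}\} \cap \{h : \monbb{0}\precsim h\precsim p\}$. Any $h$ in this intersection must satisfy both $p\precsim h$ (from the first set) and $h\precsim p$ (from the second), forcing $h\sim p$; conversely, every $h\sim p$ belongs to both sets, so $\aln(\az,F)=\{h : h\sim p\}$. Setting $h_0 = p$, it remains to check that $h_0$ is non-linear: since $p(n)$ grows like $\sqrt{n}$, the difference $p(n) - \lfloor\alpha n\rfloor$ is unbounded for every $\alpha\in\R$ (tending to $+\infty$ for $\alpha\le 0$ and to $-\infty$ for $\alpha>0$), so $p\nsim h_\alpha$. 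There is no substantial obstacle here; the entire argument rests on the straightforward product-CA observation, on top of the two non-trivial constructions already established.
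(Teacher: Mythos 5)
Your proposal is correct and follows essentially the same route as the paper: the paper also sets $F=\Fp\times\Fc$, invokes the (stated as straightforward) identity $\aln(\az_G\times\az_H,G\times H)=\aln(\az_G,G)\cap\aln(\az_H,H)$, and concludes that the intersection of $\{h: p\precsim h\precsim \operatorname{id}\}$ and $\{h: \monbb{0}\precsim h\precsim p\}$ is $\{h: h\sim p\}$. Your additional details (the product of balls/tubes argument and the explicit check that $p\nsim h_\alpha$ for every $\alpha\in\R$) merely flesh out what the paper leaves implicit.
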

\begin{proof}
  It is straightforward that for any pair of CA $G,H$ we have:
  \[\aln(\az_G\times\az_H,G\times H) = \aln(\az_G,G)\cap\aln(\az_H,H)\]
  Let $F=\Fp\times\Fc$. By propositions \ref{prop:par} and
  \ref{prop:counters}, $\aln(\apz\times\az_C,F)= \{h : h\sim
  p\}$.
\end{proof}


\section{Equicontinuous dynamics along linear directions}
\label{sec:linear}

By a result from \cite{Sablik-2008}, we know that the set of slopes of
linear directions along which a CA has equicontinuous points is an
interval of real numbers. In this section we are going to study
precisely the possible bounds of such intervals.

\subsection{Countably enumerable numbers}

\begin{definition}
  A real number $\alpha$ is {\em countably enumerable (ce)} if there
  exists a computable sequence of rationals converging to $\alpha$.
\end{definition}

The previous definition can be further refined as follows:
\begin{definition}
  A real number $\alpha$ is {\em left (resp. right) countably enumerable (lce)
    (resp. rce)} if there exists an increasing (resp. decreasing) computable
  sequence of rationals converging to $\alpha$.
\end{definition}

\begin{remark}
  A real number that is both lce and rce is computable.
\end{remark}

See \cite{Spies-Weihrauch-Zheng-2000} for more details.\\

In the following we first prove that the bounds of the interval of
slopes of linear directions along which a CA has equicontinuous points
are computably enumerable real numbers. We will then give a generic
method to construct a CA having arbitrary computably enumerable
numbers as bounds for the slopes along which it has equicontinuity
points.

\subsection{Linear directions in the consequences of a word}

We consider in this section {\em computable} subshifts only, that is
subshifts for which we can decide whether or not a given word belongs
to it (in particular, all finite type and sofic subshifts are decidable).  For a word $u\in\gs$ and
a CA $(\az,F)$, we note
\[I_u=\{\alpha\in \R\st u\ is\ a\ blocking\ word\ of\ slope\
h_{\alpha}\ for\ F\}.\] $I_u$ is an interval and we note it
$|a_u,b_u|$. The bounds can either be open or closed.  We will show
that this bounds are lce (resp. rce). 

The proof uses the notion of \emph{blocking word along $h$ during a time $T$},
which intuitively is a word that doesn't let information go through
its consequences before time $(T+1)$. The definition can be adapted
from definition~\ref{def:blockingword} by replacing $\K$ by $\llbracket 0,T\rrbracket$, formally:
\[\cone{F}{\gs\cap[u]_p}\supset\left\{ \site{m}{n} \in\Z\times\llbracket 0,T\rrbracket: h(n)\leq m < h(n)+e \right\}.\]

Notice that if some word is blocking of slope $h$ during arbitrary
long time, it is clearly $\N$-blocking of slope $h$ too.

If time $T$ is fixed, the set of slopes for which a word $u$ is
blocking during time $T$ is a convex set. This is formalized by the
following Lemma which is a weakened version of
Proposition~\ref{convexe_al}.

\begin{lemma}
\label{lem:esayconvex}
  Let $F$, $u$, $T$ and $\alpha<\beta$ be such that $u$ is a blocking
  word for $F$ along $h_\alpha$ (resp. $h_\beta$) during time
  $T$. Then, for any $\gamma\in[\alpha,\beta]$, $u$ is also a blocking
  word along $h_\gamma$ during time $T$.
\end{lemma}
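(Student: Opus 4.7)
The plan is to adapt the chaining argument used in the proof of Proposition~\ref{convexe_al} to the case where a single word plays the role of both $u'$ and $u''$. Let $p_\alpha, p_\beta \in \Z$ be positions and $e \in \N$ a common width (the minimum of the two individual widths, since a blocking word of some width is also blocking for any smaller width at least as large as the one required by the definition) for which $u$ is blocking along $h_\alpha$ and $h_\beta$ during time $T$. I work with the cone $\cone{F}{\gs \cap [u]_0}$, so by shift-invariance the $\alpha$-wall sits at $\llbracket h_\alpha(n)-p_\alpha,\, h_\alpha(n)-p_\alpha+e-1\rrbracket$ at time $n$, and similarly the $\beta$-wall at $\llbracket h_\beta(n)-p_\beta,\, h_\beta(n)-p_\beta+e-1\rrbracket$.

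The first step is to show that $\cone{F}{\gs \cap [u]_0}$ contains, at every $n \in \llbracket 0, T\rrbracket$, the full ``convex hull'' of the two walls, i.e.\ the whole interval
\[
I_n = \llbracket \min(h_\alpha(n) - p_\alpha,\, h_\beta(n) - p_\beta),\ \max(h_\alpha(n) - p_\alpha,\, h_\beta(n) - p_\beta) + e - 1 \rrbracket.
\]
For this I would take arbitrary $x, y \in \gs \cap [u]_0$ and interpose an intermediate $z \in [u]_0$ equal to $x$ on $\rrbracket -\infty,-1\rrbracket$ and to $y$ on $\llbracket |u|, +\infty\llbracket$. The strong blocking property recalled right after Definition~\ref{def:blockingword} (no information can cross a wall), applied to the $\beta$-wall with the pair $(x,z)$ agreeing on $\rrbracket-\infty,0\rrbracket$, gives $F^n(x)_i = F^n(z)_i$ for $i \leq h_\beta(n) - p_\beta + e - 1$. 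Symmetrically, the same property applied to the $\alpha$-wall with $(z,y)$ agreeing on $\llbracket 0, +\infty\llbracket$ gives $F^n(z)_i = F^n(y)_i$ for $i \geq h_\alpha(n) - p_\alpha$. Chaining these inclusions, and exchanging the roles of the two walls in a second chaining, yields agreement of $F^n(x)$ and $F^n(y)$ on $I_n$.

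The second step is to exhibit an integer $p_\gamma$ for which the $\gamma$-wall $\llbracket h_\gamma(n)-p_\gamma,\,h_\gamma(n)-p_\gamma+e-1\rrbracket$ is contained in $I_n$ for every $n \in \llbracket 0,T\rrbracket$. Writing $\lambda = (\gamma-\alpha)/(\beta-\alpha) \in [0,1]$ and choosing $p_\gamma = \lfloor \lambda p_\beta + (1-\lambda) p_\alpha\rfloor$, the quantity $h_\gamma(n) - p_\gamma$ differs by a bounded amount from the convex combination $\lambda(h_\beta(n) - p_\beta) + (1-\lambda)(h_\alpha(n) - p_\alpha)$, which, since $h_\alpha(n) \leq h_\gamma(n) \leq h_\beta(n)$ for all $n\geq 0$, always lies between $h_\alpha(n) - p_\alpha$ and $h_\beta(n) - p_\beta$. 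Reading the inclusion back via shift-invariance to the original form of the definition (with $u$ placed at position $-p_\gamma$) shows that $u$ is a blocking word along $h_\gamma$ during time $T$.

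I expect the second step to be the main technical point: the discreteness introduced by the integer $p_\gamma$ and by the floor function in $h_\gamma$ produces a bounded rounding error between $h_\gamma(n) - p_\gamma$ and the ideal convex combination of the wall positions, and this error needs to be absorbed by enlarging $e$ by a fixed additive constant. Since the lemma does not prescribe a particular width, this modification is innocuous, and the chaining argument of the first step transfers without change.
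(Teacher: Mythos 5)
The paper's own proof here is the single word ``Straightforward'' (the lemma being presented as a weakened form of Proposition~\ref{convexe_al}), and your chaining argument through an intermediate configuration $z$ is precisely the natural adaptation of that proposition's proof to a single word carrying both walls; it is correct, including the observation that one of the two chainings always covers the full convex hull $I_n$ (the other may be empty when the two walls are disjoint). The one slip is at the end: the bounded rounding error between $h_\gamma(n)-p_\gamma$ and the ideal convex combination of the wall positions must be absorbed by \emph{shrinking} the width --- insetting the $\gamma$-wall inside $I_n$ by a fixed constant --- not by enlarging $e$, since one cannot widen a strip of consequences already established; one should then note that the inset width still has to exceed the lower bound of Definition~\ref{def:blockingword} (and, if a proper subshift $\gs$ is in play, that the glued configuration $z$ actually belongs to $\gs$), points the paper itself leaves implicit.
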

\begin{proof}
  Straightforward.
\end{proof}

\begin{prop}
  \label{prop:ulce}
  Let $\gs$ be a computable subshift and $(\az,F)$ a CA. For
  $u\in\gs$, with $I_u=|a_u,b_u|$, $a_u$ is lce and $b_u$ is rce.
\end{prop}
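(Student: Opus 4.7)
The plan is to approximate the interval $I_u$ from outside by a nested sequence of decidable intervals $I_u^T$ with computable rational endpoints. For each $T \in \N$, let $I_u^T$ denote the set of $\alpha$ such that $u$ is a blocking word along $h_\alpha$ during time $T$, in the finite-time variant of Definition~\ref{def:blockingword}. By Lemma~\ref{lem:esayconvex}, $I_u^T$ is convex, and obviously $I_u^{T+1} \subseteq I_u^T$. I first claim $I_u = \bigcap_T I_u^T$. The inclusion $\subseteq$ is immediate. For $\supseteq$, observe that the witnesses $(p,e)$ for blocking live in a finite set (the range of slopes is bounded by the radius of $F$, which bounds the admissible width $e$, and then $p \in \llbracket 0, |u|-e\rrbracket$); by pigeonhole, a single pair $(p,e)$ must witness blocking during time $T$ for infinitely many $T$, hence for every $T$ by the monotonicity of the constrained strips, which yields $\alpha \in I_u$.

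Next I would show each $I_u^T$ is an interval with rational endpoints that are computable from $T, u, F, \gs$. Restricting to $\alpha$ in the bounded range $[-r,s]$ (outside which no slope can be blocking), the tuple $\bigl(\lfloor \alpha n\rfloor\bigr)_{n=0}^T$ takes only finitely many values, partitioning the range into finitely many rational subintervals on each of which the restriction of $h_\alpha$ to $\llbracket 0,T\rrbracket$ is constant. For each such admissible tuple, and for each pair $(p,e)$ in the finite parameter set, checking the blocking property during time $T$ reduces to verifying equality of $\siteFs{m}{n}(x)$ across $x \in [u]_p \cap \gs$ at finitely many sites. Since $\siteFs{m}{n}(x)$ depends only on a bounded window of $x$ (of size controlled by $n \leq T$ and the radius of $F$), this amounts to checking finitely many finite patterns, all of which can be enumerated using the assumed computability of $\gs$. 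Combining this with the convexity of $I_u^T$, one obtains $I_u^T = [a_u^T, b_u^T]$ (possibly with open or half-open endpoints, chosen from the finite list of candidate rationals), with $a_u^T$ and $b_u^T$ computable rationals uniformly in $T$.

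Finally, $(a_u^T)_T$ is a computable non-decreasing sequence of rationals with $\sup_T a_u^T = a_u$, proving $a_u$ is lce; symmetrically $(b_u^T)_T$ is a computable non-increasing sequence of rationals converging to $b_u$, so $b_u$ is rce. The main obstacle is the equality $I_u = \bigcap_T I_u^T$: the non-trivial direction requires the pigeonhole argument above to extract a single pair $(p,e)$ serving as a uniform witness. Everything else is careful bookkeeping built on Lemma~\ref{lem:esayconvex} and the decidability of $\gs$.
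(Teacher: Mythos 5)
Your proposal is correct and follows essentially the same route as the paper: approximate $I_u$ by nested finite-time blocking intervals whose rational endpoints are computable (finite-time blocking reduces to finitely many pattern checks, decidable because $\gs$ is computable), and recover $a_u$ and $b_u$ as monotone limits using the convexity given by Lemma~\ref{lem:esayconvex}. The only notable difference is that you justify explicitly, by pigeonhole on the finitely many witnesses $(p,e)$, the implication that blocking for arbitrarily long times yields $\N$-blocking, a step the paper simply asserts as clear.
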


\begin{proof}
  We consider configurations in which $u$ is placed on the
  origin. For every $n\in\N$, the set of cells in the consequences of
  $u$ at time $n$ is computable since only a finite number of factors
  on finitely many configurations have to be considered (the factors
  can be computed because the subshift is computable).

 Consider $x_n$ the smallest integer such that $u$ is a blocking word
  during time $n$ of slope $h_{\frac{x_n}{n}}$. The sequence defined by $\alpha_1=x_1$ and $\alpha_{n+1}=max \left(\frac{x_{n+1}}{n+1},\alpha_n \right)$ is increasing and clearly computable from what was said above.  
  
  Now let $a\in I_u$. Because $u$ is a blocking word of slope
  $a$, it is also a blocking word of slope $a$ during time $n$,
  which means that $\forall n\in \N,\frac{x_n}{n}\leq a$, so $\forall
  n\in \N, \alpha_n\leq a$. Therefore, the sequence $(\alpha_n)_n$
  tends toward some limit $\alpha\leq a$. As it is true for any $a\in
  I_u$, we have $\alpha\leq a_u$. Suppose for the sake of
  contradiction that there is some $b$ with $\alpha\leq b<a_u$. Then,
  by Lemma~\ref{lem:esayconvex}, $b'=\frac{a_u+b}{2}$ is such that $u$
  is a blocking word of slope $b'$ for arbitrary long time (because
  $\alpha$ and any $a\geq a_u$ are), hence a $\N$-blocking word of
  slope $b'$. We get $b'\in I_u$ which is a contradiction since
  $b'<a_u$. Thus $a_u\leq \alpha$ and finally $a_u=\alpha$. We deduce
  that $a_u$ is lce.


  A symmetric proof shows that $b_u$ is rce. In fact the situation is
  not formally symmetric since we consider functions of the form
  $h_\alpha(n) = \lfloor\alpha n\rfloor$. However it is obvious that
  $h'_\alpha(n) = \lceil\alpha n\rceil$ is such that $h_\alpha\sim
  h'_\alpha$. Then Proposition~\ref{prop:order_on_F} allows to make a
  symmetric reasonning on functions of the form $h'_\alpha$ and still
  have a conclusion for function of the form $h_\alpha$ which are
  considered in the statement of the current Proposition.
\end{proof}

In the proof above, we actually showed that $a_u$ and $b_u$ were
directions of equicontinuity. So the set of linear directions of
equicontinuity is closed for a word.

\subsection{Bounds for $\dalk(\gs,F)$}
\begin{theorem}
\label{theo:linearbounds}
  Let $\gs$ be a computable subshift and $(\az,F)$ a CA. Let
  $\dalk(\gs,F)= \{h_{\alpha}:\alpha\in |\alpha',\alpha''|\}$.

  Both $\alpha'$ and $\alpha''$ are ce. Moreover, if
  $|\alpha',\alpha''|$ is left-closed, $\alpha'$ is lce, and if
  $|\alpha',\alpha''|$ is right-closed then $\alpha''$ is rce.

\end{theorem}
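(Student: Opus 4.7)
My plan is to build on Proposition~\ref{prop:ulce}, which for each word $u\in\La_\gs$ provides uniformly computable approximating sequences: an increasing sequence of rationals converging to $a_u$ and a decreasing sequence converging to $b_u$, both computable in $(u,n)$. Since $\dalk(\gs,F)=\bigcup_u I_u=|\alpha',\alpha''|$ with each $I_u=[a_u,b_u]$ closed (by the concluding remark of Proposition~\ref{prop:ulce}), we have $\alpha'=\inf_u a_u$ and $\alpha''=\sup_u b_u$. This reformulation will be the backbone of the argument.

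For the closed-endpoint cases the argument is immediate. Suppose $\alpha''\in|\alpha',\alpha''|$; then by Proposition~\ref{bloc-equicont} there is a blocking word $u^*$ along $h_{\alpha''}$, so $\alpha''\in I_{u^*}=[a_{u^*},b_{u^*}]$. Combining $\alpha''\leq b_{u^*}$ with the sup-inequality $\alpha''\geq b_{u^*}$ gives $\alpha''=b_{u^*}$, which inherits the decreasing computable approximation and is therefore rce. The left-closed case for $\alpha'$ follows symmetrically from the lce approximations of $a_{u^*}$.

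For the general ce case, I would enumerate all words $u_1,u_2,\ldots$ of $\La_\gs$ and, denoting by $\beta_N(u_n)$ the $N$-th term of the decreasing computable rational approximation of $b_{u_n}$ given by Proposition~\ref{prop:ulce}, set
\[
\gamma_N=\max_{1\leq n\leq N}\beta_N(u_n).
\]
This is a computable sequence of rationals, and since $\beta_N(u_n)\geq b_{u_n}$ we obtain $\gamma_N\geq\max_{n\leq N}b_{u_n}\to\alpha''$, so $\liminf\gamma_N\geq\alpha''$. The main obstacle is to establish $\limsup\gamma_N\leq\alpha''$: if infinitely many $N$ yielded $\gamma_N\geq\alpha''+\varepsilon$ with realizer $u_{n_N}$, then in the subcase where $(n_N)$ clusters at some fixed index $n^*$, monotone convergence $\beta_N(u_{n^*})\downarrow b_{u_{n^*}}\leq\alpha''$ gives a contradiction; the genuinely delicate subcase, where $n_N\to\infty$, would require either restricting the length of considered words to a slowly growing function $L(N)$ so the pigeonhole survives, or a compactness argument on $\az$ extracting from a sequence of ever-longer metastable witnesses a true blocking word along a slope exceeding $\alpha''$, which would contradict the definition of $\alpha''$ as the sup. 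A symmetric construction using the lce approximations of $a_u$ gives the ce statement for $\alpha'$.
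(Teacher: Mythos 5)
Your reformulation $\dalk(\gs,F)=\bigcup_u I_u$ and your treatment of the closed-endpoint case (pick $u^*$ blocking along the endpoint, identify the endpoint with $b_{u^*}$, inherit its rce approximation from Proposition~\ref{prop:ulce}) is exactly the paper's argument and is fine. The gap is in the general ce case, and it is not a technicality you can defer: the sequence $\gamma_N=\max_{1\leq n\leq N}\beta_N(u_n)$ provably fails to converge to $\alpha''$. The reason is that every sufficiently long word is trivially ``blocking during time $N$'' along essentially every slope in $[-s,-r]$ (its cone of guaranteed consequences during the first $N$ steps is a huge trapezoid when $|u|\gg N$), so as long as your enumeration keeps introducing words of length much larger than $N$ at stage $N$ --- which it must if it is to exhaust $\La_\gs$ --- the max is pinned at the extreme slope forever. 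Your two proposed repairs do not salvage this. Restricting to words of length at most $L(N)$ requires $L(N)$ to grow slowly enough that every word of length $\leq L(N)$ already has $\beta_N(u)\leq b_u+\varepsilon$; but the modulus of convergence of $\beta_N(u)\downarrow b_u$ admits no computable bound in $|u|$, so no computable unbounded $L$ can be guaranteed to work. And the compactness argument, if it succeeded, would extract from metastable witnesses a genuine blocking word along a slope exceeding $\alpha''$ whenever $\limsup\gamma_N>\alpha''$, which would show that $\alpha''$ is \emph{always} rce --- contradicting precisely the distinction the theorem draws between open and closed right endpoints (and the reachability results of the next subsection, which realize open bounds that are ce but not rce).

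The paper's route for the open case avoids taking a simultaneous max over all words. Since $\dalk(\gs,F)$ is an interval that is the union of the closed intervals $I_u$, for each $i$ one can \emph{choose} a word $u_i$ whose relevant endpoint $y_i$ of $I_{u_i}$ lies within $1/i$ of $\alpha''$; each $y_i$ is rce by Proposition~\ref{prop:ulce}, with approximating sequence $(y_{i,k})_k$, and a diagonal $(y_{i,i})_i$ (more carefully, a term $y_{i,k_i}$ chosen close enough to $y_i$) gives a computable sequence of rationals converging to $\alpha''$. The essential idea you are missing is this selection of a countable subfamily of words whose \emph{limit} endpoints already converge to the bound, so that the only remaining task is to diagonalize over a doubly-indexed computable array converging along one index --- rather than hoping that the finite-time data of all words at once organizes itself into a convergent sequence.
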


\begin{proof}
  We prove it for left bounds, the proofs are similar for right
  bounds.  There are two cases: the interval is either left-closed or
  left-open. The first case follows from Proposition \ref{prop:ulce}
  since there exists a word $u$ such that the left bound of slopes of
  $\dalk(\gs,F)$ is the left bound of $I_u$.

  In the case of an open bound, we produce a sequence converging to
  it.  Suppose $\dalk(\gs,F)=]\alpha',\alpha''|$, then for every
  $i\in\N$, there exist $u_i$ such that $I_{u_i}=|x_i,y_i|$ with
  $x_i\in ]\alpha',\alpha'+1/i]$. So these $x_i$ are lce and the
  sequence $(x_i)_i$ tends to $\alpha'$. For every $i\in\N$ let
  $(y_{i,k})_k$ be a rational sequence converging to
  $x_i$. $(y_{i,i})_i$ is a rational sequence converging to $\alpha'$,
  hence $\alpha'$ is ce.
\end{proof}

Here, we consider linear directions of equicontinuity for a CA, so the intervals of admissible directions can be open or closed.\\

In the case when there is a single linear equicontinuous direction, we
have the following corollary:

\begin{corollary}
  For $\gs$ a computable subshift and a CA $(\az,F)$, if there exists
  $\alpha \in\R$ such that $\dalk(\gs,F)= \{h_{\alpha}\}$ then
  $\alpha$ is computable.
\end{corollary}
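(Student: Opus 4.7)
The plan is to derive this corollary directly from Theorem~\ref{theo:linearbounds} together with the remark that a number which is both lce and rce is computable. The hypothesis $\dalk(\gs,F)=\{h_\alpha\}$ means that the interval $|\alpha',\alpha''|$ of admissible slopes appearing in Theorem~\ref{theo:linearbounds} degenerates to the single point $\alpha$, i.e.\ $\alpha'=\alpha''=\alpha$ and the interval is the closed singleton $[\alpha,\alpha]$.

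First I would observe that a singleton interval is simultaneously left-closed and right-closed: indeed, the slope $\alpha$ actually belongs to the set of slopes of linear equicontinuous directions, so both endpoints are attained. Applying the "left-closed" clause of Theorem~\ref{theo:linearbounds} to $\alpha'=\alpha$ yields that $\alpha$ is left computably enumerable, and applying the "right-closed" clause to $\alpha''=\alpha$ yields that $\alpha$ is right computably enumerable.

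Finally, I would invoke the remark stated in the previous subsection (a real number that is both lce and rce is computable) to conclude that $\alpha$ is computable. The only thing to check carefully is that the singleton case does fall under both the left-closed and right-closed hypotheses of the theorem; this is immediate from the assumption $h_\alpha \in \dalk(\gs,F)$, so there is no real obstacle to the argument. The corollary is essentially a specialization of the theorem combined with the classical fact on ce reals.
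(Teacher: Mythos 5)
Your argument is exactly the paper's own proof: treat the singleton as an interval that is both left-closed and right-closed, apply Theorem~\ref{theo:linearbounds} to get that $\alpha$ is both lce and rce, and conclude computability from the remark on reals that are both. No gap; the approaches coincide.
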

\begin{proof}
  As $\alpha$ is both a closed left and a closed right bound, it is
  left and right computably enumerable so it is computable.
\end{proof}

\subsection{Reachability}

\newcommand{\stu}{\includegraphics{figures/dpst-2}}
\newcommand{\stv}{\includegraphics{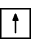}}

We prove here that any computably enumerable number is realized as a
bound for $\{\alpha|h_{\alpha}\in\dalk(\az,F)\}$ on some cellular
automaton $F$. The idea of the construction is to use the counters
described in Subsection \ref{subsec:counters} combined with
methods to obtain signals of computably enumerable slopes.

We will prove the following result:
\begin{prop}
\label{prop:reachability}
  For every ce number $0\leq \alpha\leq 1$, there exists a CA $(\az,
  F)$ and a word $u\in \A^*$ such that $I_u=]\alpha,1]$.
\end{prop}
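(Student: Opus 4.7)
The plan is to build on the counter construction of Subsection~\ref{subsec:counters}. I would work with a cellular automaton whose neighborhood is large enough so that the outer counter signals can travel at slopes arbitrarily close to~$1$, and so that the right outer signal is set exactly to slope~$1$. Inside the protected cone an additional layer simulates a Turing machine which reads the finite program packed inside the blocking word~$u$ and computes successively the rational approximants $\alpha_0, \alpha_1, \ldots$ of the given ce number~$\alpha$. This machine drives a \emph{left barrier} signal $L$ whose position $L(t)$ is a computable function of the $\alpha_n$, and the region between $L$ and the right outer border is progressively painted with the state~$\stb$. We then apply the transparency trick of Subsection~\ref{subsec:transparency}: an auxiliary binary layer is kept free on every cell except those whose main layer reads $\stb$, ensuring that the consequences of the seed word $u$ (containing $\sti$ together with the initial tape of the machine) are exactly the cells that are eventually in state~$\stb$.

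The schedule of $L$ is designed so that four estimates hold simultaneously. First, for every $\beta \in \,]\alpha, 1[\,$, eventually $L(t) \leq \lfloor \beta t \rfloor + O(1)$, so a strip of width~$1$ at a constant offset along $h_\beta$ lies entirely in the $\stb$ region and $u$ is blocking along $h_\beta$. Second, $\sup_t \bigl( L(t) - \lfloor \alpha t \rfloor \bigr) = +\infty$, so no fixed offset keeps a strip along $h_\alpha$ inside the $\stb$ region for all~$t$, whence $u$ is not blocking along~$h_\alpha$. Third, for $\beta < \alpha$, $L(t) - \lfloor \beta t \rfloor \to +\infty$, so $u$ is not blocking along $h_\beta$. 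Fourth, blocking along~$h_1$ is immediate from the right outer border (which delivers a $\stb$ strip of slope~$1$), and blocking fails for $\beta > 1$ by the light-cone constraint. Put together, these four facts yield $I_u = \,]\alpha, 1]$.

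The main obstacle is designing the schedule of~$L$ so that its average slope converges to~$\alpha$ while $L$ overshoots the line $c = \alpha t$ by an unbounded amount infinitely often. I would organise time into phases of slowly growing length: at phase~$n$, the machine has computed $\alpha_0, \ldots, \alpha_n$ and produces from them a local estimate $\widetilde{\alpha}_n$; $L$ is then advanced in a short burst with slope slightly above $\widetilde{\alpha}_n$, generating an overshoot that is made to grow with~$n$, and afterwards $L$ waits while the line $c = \alpha t$ catches up before the next phase begins. The counter protection from Subsection~\ref{subsec:counters} carries over unchanged: nothing from outside the cone can disturb $L$ or the Turing machine, since older incoming counters are destroyed by the comparison protocol, and twin counters (coming from another $\sti$ seed on the initial configuration) merge consistently with ours as in the proof of Proposition~\ref{prop:counters}.
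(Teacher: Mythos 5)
Your architecture is the same as the paper's: counters for protection, a Turing machine computing the approximants $\alpha_n$, a left barrier of asymptotic slope $\alpha$ bounding a black region whose right border has slope $1$, and the transparency trick for the converse inclusion. The paper implements the barrier by columns of width $2^i$ whose erosion signals realize the density $\frac{\alpha}{1-\alpha}$ (with a Doppler correction), whereas you let the machine drive the barrier $L$ directly through a phase schedule; that difference is inessential, although the mechanism by which a machine sitting near the origin actuates a barrier that is linearly far away is precisely the part the paper's column construction is designed to handle and you leave unspecified. The genuine gap is elsewhere, in the way you derive $I_u=\,]\alpha,1]$ from your four estimates.

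To make $u$ blocking along $h_\beta$ you need a width-$e$ strip at a \emph{single} offset $q$ lying in $\cone{F}{[u]_0}$ for \emph{all} $t\geq 0$, and you only verify the left-hand constraint $\beta t+q\geq L(t)$. There is also a right-hand constraint: the consequences are contained in $\{\site{m}{t} : m< |u|+t\}$ and at $t=0$ they are exactly the footprint of $u$, so $q\leq |u|-e$, a bound \emph{uniform in} $\beta$. Your estimate (1) only controls $L(t)-\lfloor\beta t\rfloor$ \emph{eventually}, so the admissible offsets must satisfy $q\geq\sup_{t}\bigl(L(t)-\beta t\bigr)$; and your estimate (2), $\sup_t\bigl(L(t)-\lfloor\alpha t\rfloor\bigr)=+\infty$, forces $\sup_{\beta>\alpha}\sup_t\bigl(L(t)-\beta t\bigr)=+\infty$. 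Hence for $\beta$ close enough to $\alpha$ no offset satisfies both constraints, $u$ is \emph{not} blocking along $h_\beta$, and what you actually obtain is a closed interval $[\beta^*,1]$ with $\beta^*>\alpha$. This is not a repairable detail of the schedule: by the paper's own Proposition~\ref{prop:ulce} together with Lemma~\ref{lem:esayconvex} and the pigeonhole argument on offsets (see the remark following that proof), $a_u$ always belongs to $I_u$, so no single word can have $I_u$ open at its left end. Openness at $\alpha$ can only be achieved for the union $\dalk(\az,F)=\bigcup_u I_u$ over an infinite family of words, as in the open-bound case of Theorem~\ref{theo:linearbounds}. (The paper's own sketch of Proposition~\ref{prop:reachability} is exposed to the same objection, but that does not rescue your argument.)
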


The idea consists in constructing an area of $\stb$ states (which will be the desired set of consequences) limited to
the right by a line of slope 1, and to the left by a curve that tends
to the line of slope $\alpha$. As in the construction of the parabola
from Subsection \ref{subsec:par}, the $\stb$ signal will move up and
right, and a specific signal will be able to turn a $\stb$ state into
a $\stw$ state. We will then send the correct density of signals to
get the right slope.

As $\alpha$ is ce, there exists a Turing machine that enumerates a
sequence of rationals $(\alpha_i)_{i \in \N}$ that tends to
$\alpha$.

One cell initializes the whole construction. It creates the $\stb$
area and starts a Turing machine on its left. This machine has to
perform several tasks.

\subsubsection{Sending signals}

First it creates successive columns of size $2^i,\ i\in\N$. Columns
are delimited by a special state and contain blank $\stw$ states. In
each column, a signal bouces back and forth from one border to the
other. The time needed to go from the right border of the column to
the left border and back again to the right border is $2^{i+1}$ (see
Figure \ref{fig:columns}).

The right border of a column can be either active ($\stu$ state) or
resting ($\stv$ state). If a column is activated, it will send a
signal to the right each time the bouncing signal reaches it
(i.e. every $2^{i+1}$ steps). This signal, when emitted, passes through
all other columns and continues until it reaches a $\stb$ state. The
$\stb$ state is erased (therefore pushing the border of the $\stb$
surface one cell to the right) and the signal disappears. If a column
is resting, no signal is emitted when the bouncing signal hits the
right border.

The Turing machine constructs and initializes the columns in order to
synchronize the internal signals: the signal in column $(i+1)$ hits
the right border at a time when the one in column $i$ hits its right
border, as shown in Figure~\ref{fig:columns} (this is possible because
the period of each signal is exactly double the period of the previous
signal).  When a signal is emitted by column $i$, it has to move
through all the previous columns. Because the columns are
synchronized, it will pass through the column $0$ at time $(n\times
2^{i+1}+2^{i}-1)$ for some $n$. This means that two signals emitted by
different columns cannot be on the same diagonal (in the space-time
diagram) and therefore cannot collide.

For every $i\in\N$, when the $i$ first columns are all created, the
machine computes $\alpha_i$ with precision $2^{-(i+1)}$. The machine then
activates only the columns $k\leq i$ such that the $k$-th bit of
$\alpha_i$ is $1$. At that time the density of signals emitted by the
columns is in $\lbrack \alpha_i-2^{-(i+1)},\alpha_i+2^{-(i+1)}\rbrack$.

Moreover, as the sequence $(\alpha_i)_i$ converges to $\alpha$, for any $j\in \N$ there exists $i_j$ such that: $\forall i\geq i_j$, $\alpha$ and $\alpha_i$ share their $j$ first bits. And at some time, the Turing machine computes $\alpha_{i_j}$. From then on, the $j$ first columns are in their final state since they represent the $j$ first bits of $\alpha_i$, $i\geq i_j$. So the process converges and as these bits are the $j$ first bits of $\alpha$ too, the constructed slope tends to the desired one.  

\begin{figure}[htbp]
  \centering
  \includegraphics[width=12cm]{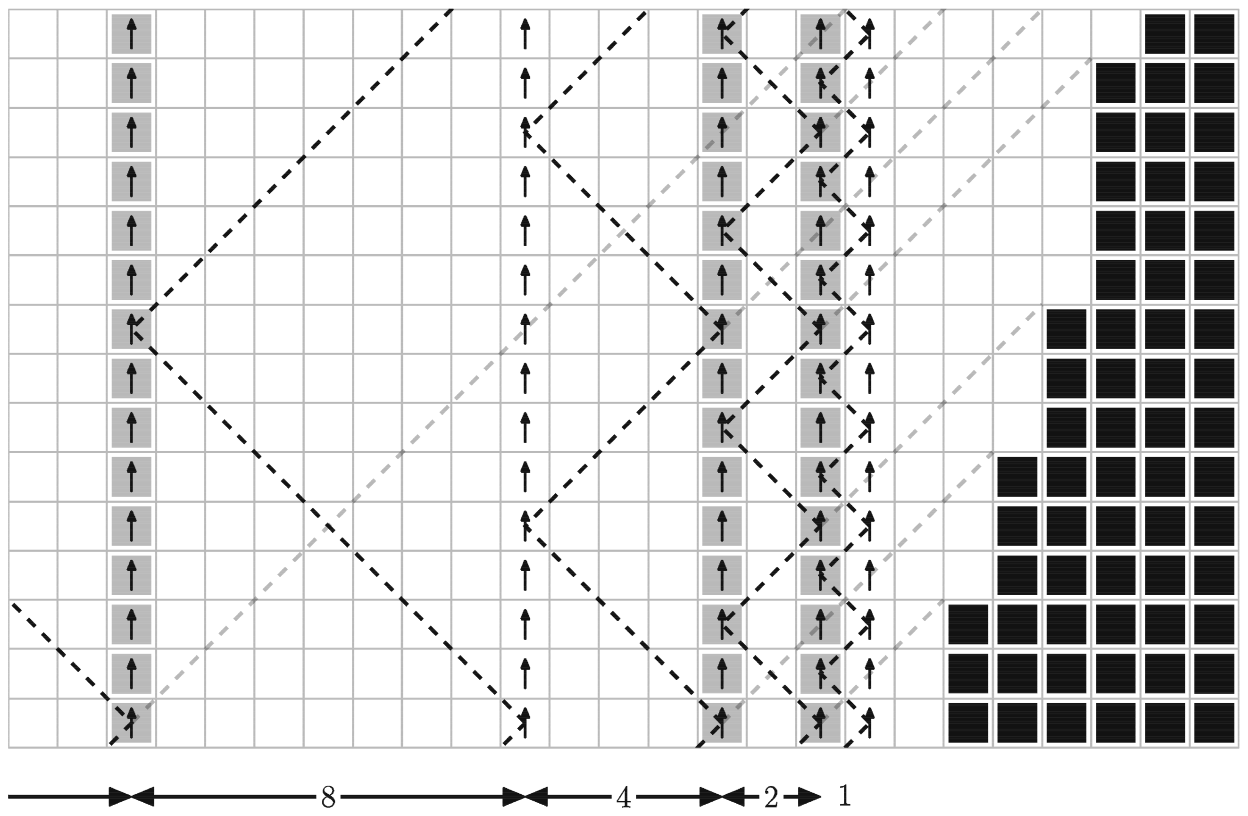}
  \caption{Columns and signals. Here the columns of width 2, 4 and 16
    are active and the signals that go through them push the black
    area at regular intervals.}
  \label{fig:columns}
\end{figure}

\subsubsection{Density of signals}
We here name density, the average number of signals emitted by a line or reaching a line during a timestep. 
Because $(\alpha_i)_i$ tends to $\alpha$, each column will be in a
permanent state (active or resting) after a long enough time. The
density of signals emitted (passing through a vertical line) by the columns will tend to $\alpha$ as
time passes. However, because signals have to reach a line that is not
vertical (the border of the $\stb$ surface), the density of signals
effectively reaching this line is less (a kind of Doppler effect). If we want to get a line of slope $\alpha$, $\alpha$ signals should reach the $\stb$ frontier at each timestep. As shown in Figure~\ref{fig:doppler}, if the density of emitted signals is some $\beta$, we have $\beta(1-\alpha)$ signals reaching the frontier, so we want $\beta(1-\alpha)=\alpha$.
And finally, we need to emit signals with a density
equal to $\frac{\alpha}{1-\alpha}$. Clearly, as $\alpha$ is ce,
$\frac{\alpha}{1-\alpha}$ is ce too so it is possible to emit the
correct density of signals.

\begin{figure}[htbp]
  \centering
  \includegraphics[width=4cm]{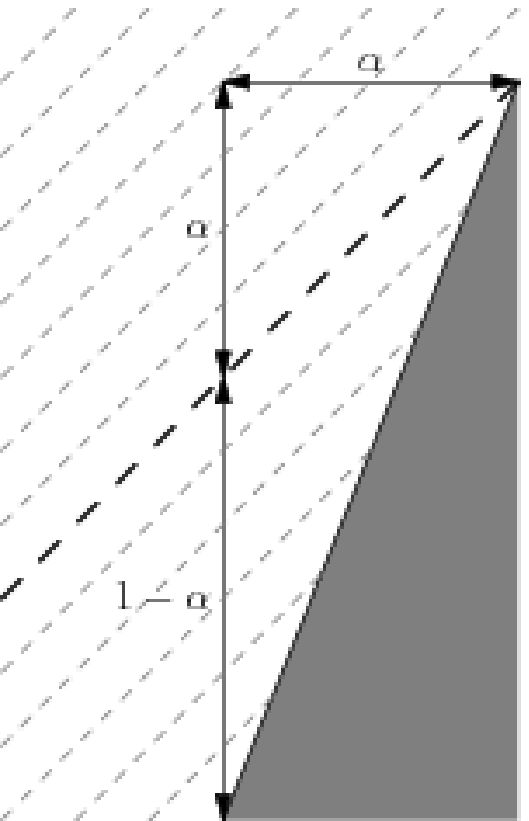}
  \caption{Signals approaching at speed 1 reach the frontier with a lesser density.}
  \label{fig:doppler}
\end{figure}

\subsubsection{Equicontinuity} 

If only the $\stb$ state can move through the borders of the columns
(and by doing so destroys them) the consequences of the initializing cell
are an area containing lines of slopes between $\alpha$ and $1$. We use again the transparency trick described in \ref{subsec:transparency}. The
left bound ($\alpha$) is not necessarily closed. If $\alpha$ is lce,
we have an increasing sequence converging to it and so the slope of
the curve bounding the $\stb$ area on the left is lower than
$\alpha$. So if $\alpha$ is lce, we can make a construction such that
the bound is closed.

\subsubsection{Right side}

It is possible to do a symmetric construction on the other side. We
can then have an area of $\stb$ states between the line of slope $-1$
and the line of slope $\alpha$. The construction must however be
slightly modified.

If we consider an automaton with a larger radius, we can have
``diagonal columns'' delimited by lines of slope 1. In this case, the
signals emitted by the columns move towards the left and they ``pull''
the black area instead of pushing it (in a way that is very similar to
the construction of the parabola in
Subsection~\ref{subsec:counters}). If we ``protect'' this construction
between counters as seen in Subsection~\ref{subsec:counters} it
becomes a set of consequences.

\subsubsection{Results}{}

With the previous constructions and cartesian products if necessary,
we have a sort of converse of Theorem~\ref{theo:linearbounds}:
\begin{theorem}
  Let $\alpha'\leq \alpha''$ be ce real numbers. There exists a CA $(\az,
  F)$ such that $\dalk(\az,F) = |\alpha', \alpha''|$. Moreover, if
  $\alpha'$ is lce the left bound can be closed: $\dalk(\az,F) =
  [\alpha', \alpha''|$. If $\alpha''$ is rce, the right bound can
  be closed.
\end{theorem}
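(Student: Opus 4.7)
The plan is to combine Proposition~\ref{prop:reachability} with its symmetric counterpart sketched at the end of Section~\ref{sec:linear}, and then take a cartesian product of cellular automata to intersect intervals. Since composing a CA with a shift $\s^k$ translates every equicontinuity slope by an integer $k$, and enlarging the neighborhood radius dilates the window of achievable slopes, it suffices, by an appropriate shifting and rescaling, to prove the result under the normalization $0 \le \alpha' \le \alpha'' \le 1$. From now on I assume this.

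First, I would apply Proposition~\ref{prop:reachability} to $\alpha'$ to obtain a CA $(\az_1, F_1)$ and a blocking word $u_1$ with $I_{u_1} = \,]\alpha', 1]$. Next I would carry out the symmetric ``Right side'' construction described at the end of Section~\ref{sec:linear} --- diagonal columns whose signals \emph{pull} the $\stb$-area to the left, the whole thing wrapped in the counter protection of Subsection~\ref{subsec:counters} --- to obtain $(\az_2, F_2)$ and a word $u_2$ with $I_{u_2} = [0, \alpha''[$. In both one-sided constructions, the closure of the bound is controlled by the monotonicity of the approximating ce sequence, as recalled in the ``Equicontinuity'' paragraph of Subsection~\ref{subsec:counters}: when $(\alpha_i)_i$ is \emph{increasing} toward $\alpha'$, every $\alpha_i$ stays strictly less than $\alpha'$, so the actual slope of the boundary curve built column by column is at most $\alpha'$ itself, making $\alpha'$ a legitimate blocking slope. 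Symmetrically when $\alpha''$ is rce.

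I would then form the product $F = F_1 \times F_2$ on the product alphabet $\A_1 \times \A_2$, and argue that $\dalk((\A_1\times\A_2)^{\Z}, F) = \dalk(\az_1,F_1) \cap \dalk(\az_2,F_2)$. The easy inclusion comes from pairing blocking words: if $v$ is blocking for $F_1$ along $h_{\alpha}$ and $w$ is blocking for $F_2$ along $h_{\alpha}$ (padded so lengths agree), then the letter-wise pair $(v,w)$ is blocking for $F$ along $h_{\alpha}$, which gives $h_\alpha \in \dalk(F)$ by Proposition~\ref{bloc-equicont}. Conversely, if $(v,w)$ is blocking for $F$ along $h_{\alpha}$, projecting each coordinate produces blocking words for $F_1$ and $F_2$ along $h_{\alpha}$, since freezing a site's value in the product freezes its value in each coordinate separately. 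Intersecting the two one-sided intervals then yields $\dalk(\az,F) = |\alpha',\alpha''|$ with the correct closure types on each side.

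The main obstacle is really Step~2: faithfully executing the symmetric ``Right side'' construction with the counter protection and verifying that the resulting interval is indeed $[0,\alpha''[$ (closed on the right when $\alpha''$ is rce). The normalization step and the cartesian-product argument are both soft; the delicate part is the signal engineering under the reversed geometry, together with checking that the Doppler-type density computation $\beta(1-\alpha) = \alpha$ transposes correctly to the diagonal columns that now pull rather than push the $\stb$-area.
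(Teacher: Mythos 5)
Your overall strategy is the paper's own: Proposition~\ref{prop:reachability} for the left bound, its mirror ``right side'' construction for the right bound, and a cartesian product to intersect the two intervals, with the closure of each bound controlled by the monotonicity of the approximating computable sequence. The normalization step and the product identity $\dalk(\az_1\times\az_2,F_1\times F_2)=\dalk(\az_1,F_1)\cap\dalk(\az_2,F_2)$ (which follows from $\cone{F_1\times F_2}{[(v,w)]_0}=\cone{F_1}{[v]_0}\cap\cone{F_2}{[w]_0}$) are both sound and consistent with what the paper does.

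There is, however, a genuine gap at the last step. Proposition~\ref{prop:reachability} produces a specific \emph{word} $u_1$ with $I_{u_1}=\,]\alpha',1]$, which only yields the inclusion $\{h_\beta : \beta\in\,]\alpha',1]\}\subseteq\dalk(\az_1,F_1)$. Your converse for the product correctly reduces to the factors (a blocking word $(v,w)$ for $F$ along $h_\beta$ projects to blocking words $v$ for $F_1$ and $w$ for $F_2$ along $h_\beta$), but the projected word $v$ need not be $u_1$, so ``intersecting the two one-sided intervals'' does not close the argument: you still must show that \emph{no} word whatsoever is blocking for $F_1$ along a slope outside $]\alpha',1]$, i.e.\ that $\dalk(\az_1,F_1)$ equals $I_{u_1}$ rather than merely containing it. This is precisely the point the paper flags (``it remains to prove that there is no other linear direction of equicontinuity'') and settles by a sandwich argument: if some $v$ were blocking along $h_\beta$ with $\beta$ outside the interval, consider $u_1vu_1$; the strip of consequences of $v$ would have to extend across the region of consequences of one of the flanking copies of $u_1$, whose exact shape is known from the construction, contradicting the fact that $u_1$ is blocking --- the same argument used at the end of the proof of Proposition~\ref{prop:par}. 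Without this (or an equivalent) argument, your proof establishes only one inclusion of the claimed equality $\dalk(\az,F)=|\alpha',\alpha''|$.
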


We use the CA constructed in \ref{prop:reachability}. For each set of directions, we had a word $u$ with  exactly the desired set of consequences. It remains to prove that there is no other linear direction of equicontinuity. This can be achieved by considering a blocking word $v$ along another linear direction, and using the same kind of arguments as in \ref{prop:par}. Then $uvu$ brings a contradiction  since the consequences of $v$ should extend outside the consequences of one of the $u$. Which is not possible since $u$ is blocking.


\section{Equicontinuous dynamics: constraints and negative results}
\label{sec:limitations}

This section aims at showing that some sets cannot be consequences of
any word on a cellular automaton. We know that the consequences of a
word $u$ cannot extend to cells that never receive any information
from $u$ (except for nilpotent CA). But there are other constraints, and we study them here. For
example, a natural idea is to put a word $u$, 2 or more times on the
initial configuration. Thus, the space-time diagram contains the
consequences of $u$ and a copy of them spatially translated. Clearly,
the sites in the intersection of the consequences with its copy's have
one unique state and so, we can get relations between the states of
different sites of the consequences. In this part we give some
conditions on a set of sites to be a potential set of consequences. \\

\subsection{States in a set of consequences}

Let $F$ be a CA, $u$ be a word and $n\ge |u|\in \N$. First, we
consider configurations $c \in \az$ containing a second occurrence of
$u$ translated of $n$ cells on the right, i.e. $\exists n\ge |u|$,
${c\in[u]_0}$ and ${\sigma^n(c)\in[u]_0}$. We denote by
$\mathcal{C}(u)$ the set of consequences $\cone{F}{[u]_0}$. Now,
suppose there exist two sites $\langle x,t \rangle$ and $\langle x+n,t
\rangle$ in $\mathcal{C}(u)$, for $x\in \Z$ and $t\in \N$. If the site
$\langle x,t \rangle$ is in the state $a\in \mathcal{A}$, then,
considering the translation of $\mathcal{C}(u)$, the site $\langle
x+n,t \rangle$ is in the state $a$. So the consequences
$\mathcal{C}(u)$ impose that the sites $\langle x,t \rangle$ and
$\langle x+n,t \rangle$ are in the same state. The following
proposition generalizes this simple idea.

\begin{prop} 
\label{prop:uniformsegment}
  If, for some $l\in \N$, the set of consequences of a word
  $u\in \mathcal{A}^l$ contains the sites $\langle x,t\rangle$ and
  $\langle x+2l,t\rangle$ for some $x\in \Z$ and $t\in \N$, then all
  the sites $\langle y,t\rangle$ ($y\in \Z$) such that $\langle y,t\rangle\in \mathcal{C}(u)$, are in
  the same state for any initial configuration of $[u]_0$.
\end{prop}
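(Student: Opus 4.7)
The plan is to exploit shift-commutation, $F\circ\sigma=\sigma\circ F$, to show that translating a consequence site horizontally by any $n\geq l$ preserves its forced value, provided the translated site also lies in $\mathcal{C}(u)$. With this translation symmetry in hand, two anchors at horizontal distance $2l$ cover every consequence site at time $t$: sites sufficiently to the right of $x$ inherit the value at $\langle x,t\rangle$, sites sufficiently to the left of $x+2l$ inherit the value at $\langle x+2l,t\rangle$, and the two anchors inherit from one another.

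Concretely, I would first denote by $\phi_u(m,t)$ the common value of $\siteFs{m}{t}(x)$ over $x\in[u]_0$ for each $\langle m,t\rangle\in\mathcal{C}(u)$. For any $n\geq l$ the cylinder $[u]_0\cap[u]_n$ is non-empty, so I can pick $x$ in it; then $\sigma^n(x)\in[u]_0$ as well, so for every $\langle m,t\rangle\in\mathcal{C}(u)$ one has $F^t(\sigma^n(x))_m=\phi_u(m,t)$, and shift-commutation rewrites this left-hand side as $F^t(x)_{m+n}$. If additionally $\langle m+n,t\rangle\in\mathcal{C}(u)$, then this same quantity equals $\phi_u(m+n,t)$ by definition applied to $x\in[u]_0$, giving the key identity
\[
  \phi_u(m+n,t)=\phi_u(m,t)\quad\text{whenever }\langle m,t\rangle,\langle m+n,t\rangle\in\mathcal{C}(u)\text{ and }n\geq l.
\]

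To finish, I would apply this identity to the hypothesis. For an arbitrary $\langle y,t\rangle\in\mathcal{C}(u)$, either $y\geq x+l$, in which case $n=y-x\geq l$ and the identity with anchor $\langle x,t\rangle$ gives $\phi_u(y,t)=\phi_u(x,t)$; or $y\leq x+l$, in which case $n=x+2l-y\geq l$ and the identity with anchor $\langle x+2l,t\rangle$ gives $\phi_u(y,t)=\phi_u(x+2l,t)$. The case $n=2l$ of the identity itself yields $\phi_u(x+2l,t)=\phi_u(x,t)$, so in every case $\phi_u(y,t)$ takes this same common value. The only point that needs care is that the translation must be at least $l$ to guarantee $[u]_0\cap[u]_n\neq\emptyset$, which is precisely why the hypothesis places two anchors at distance $2l$: every $y$ is then at distance at least $l$ from at least one of them. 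I do not expect a real obstacle — the argument is essentially a covering observation laid on top of shift-invariance.
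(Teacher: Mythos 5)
Your proof is correct and takes essentially the same route as the paper's: your key identity $\phi_u(m+n,t)=\phi_u(m,t)$ for $n\geq l$ (derived via shift-commutation from a configuration in $[u]_0\cap[u]_n$) is precisely the paper's preliminary observation about two occurrences of $u$ at distance $n\geq|u|$, and your concluding case split on whether $y$ lies at distance at least $l$ from $x$ or from $x+2l$ is the paper's argument verbatim. No gaps; you have merely spelled out the translation-invariance step that the paper leaves informal.
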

\begin{proof}
  For any $y\in \Z$, either $|y-x|\geq l=|u|$, or $|y-(x+2l)|\geq l=|u|$. So considering what was proved just above, $\langle y,t\rangle$ is in the same state as either $\langle x,t\rangle$ or $\langle x+2l,t\rangle$. And the same argument proves that these both sites share their state too.
\end{proof}

\subsection{Constraints coming from periodic initial configurations}

For $n\in \N$ and $u\in \mathcal{A}^n$, we now consider initial configurations
that are periodic of period $uv$ for some $v$. We use the fact that they lead to
an ultimately periodic space-time diagram. First we consider $v$ of length $0$,
we then have $u^{\Z}$ for initial configuration. We have a ultimately periodic
diagram, and so, we can tell that the states of the consequences follow a
ultimately periodic pattern.  The following lemma illustrates a particular case
where we can show that the consequences of a word $u$ are \emph{eventually spatially uniform}:
all sites in the consequences of $u$ at any given large enough time are in the
same state.

\begin{lemma}
  Suppose that for some word $u$ of length $n$, all the space-time diagrams with initial
  configurations in $S=\{(uv)^{\Z} : |v|=n^2\}$ have identical periodic
  part. Then the consequences of $u$ are eventually spatially uniform.
  \label{lem:uniformbyline}
\end{lemma}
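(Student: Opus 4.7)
The approach is to combine the hypothesis---which pins down a common asymptotic pattern for all configurations of $S$---with the translation argument behind Proposition~\ref{prop:uniformsegment}.

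First, I would fix $T_0 \in \N$ such that for every $v \in \A^{n^2}$ and every $t \geq T_0$ one has $F^t((uv)^{\Z}) = \pi_t$, the common periodic pattern provided by the hypothesis. This $\pi_t$ is spatially periodic of period dividing $|uv| = n + n^2$, and the sequence $(\pi_t)_{t \geq T_0}$ is temporally periodic. Since $(uv)^{\Z} \in [u]_0$ for every $v$, any site $\langle x, t \rangle \in \mathcal{C}(u)$ with $t \geq T_0$ is forced to hold the value $\pi_t(x)$, independently of which $c\in[u]_0$ is chosen.

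Next, I would establish a distance-$n$ coupling for sites in $\mathcal{C}(u)$, in the spirit of the reasoning behind Proposition~\ref{prop:uniformsegment}. Given $\langle y_1, t \rangle, \langle y_2, t \rangle \in \mathcal{C}(u)$ with $p := y_2 - y_1 \geq n = |u|$, pick any $c \in [u]_0 \cap [u]_p$ (possible because the two copies of $u$ do not overlap) and evaluate the state at $\langle y_2, t \rangle$ in two ways: through the $u$ placed at position $0$ it equals $\pi_t(y_2)$; through the shifted $u$ at position $p$ it must coincide, by translation invariance of the consequences, with the forced value at $\langle y_1, t \rangle$, namely $\pi_t(y_1)$. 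Hence $\pi_t(y_1) = \pi_t(y_2)$.

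Combining these two ingredients, whenever the set of sites of $\mathcal{C}(u)$ at time $t$ has diameter at least $n$, the transitive closure of the coupling relation forces all such sites to share the common state $\pi_t$, which settles all ``wide'' time slices. The main obstacle is the degenerate situation where this set is confined to a window of diameter smaller than $n$ for arbitrarily large $t$. Here the particular value $|v| = n^2$ should enter in an essential way: the combinatorial abundance of admissible $v$'s, the spatial period $n + n^2$ of $\pi_t$, and a careful comparison between two configurations of $S$ differing at well-chosen positions should force $\pi_t$ to be spatially constant on the residues modulo $n + n^2$ actually reached by $\mathcal{C}(u)$ at time $t$. I expect this last combinatorial step---linking the size $n^2$ of the admissible $v$'s to the rigidity of $\pi_t$---to be the main technical difficulty of the proof.
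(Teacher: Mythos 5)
Your proposal does not close the proof: you explicitly leave the decisive step open (``I expect this last combinatorial step\dots to be the main technical difficulty''), and that step is exactly where the content of the lemma lies. The coupling argument you import from Proposition~\ref{prop:uniformsegment} only identifies the states of two consequence sites at the same time whose distance is at least $n$ (in fact, to propagate by transitivity to \emph{all} sites of a time slice you need diameter at least $2n$, not $n$ as you claim: in a slice of diameter exactly $n$ only the two extremal sites get coupled). So the ``degenerate'' case of narrow time slices is not a corner case but the whole difficulty, and your sketch offers no mechanism to handle it.

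The paper resolves this in one stroke by a different and much shorter argument, which never uses the coupling at all: it picks two \emph{specific} elements of $S$, namely $v=u^{n}$ and $v=a(ua)^{n-1}$ for some $a\in\A$ (both of length $n^{2}$, which is the only role the exponent $n^{2}$ plays). These give the initial configurations $u^{\Z}$ and $(ua)^{\Z}$, whose space-time diagrams have spatial periods $n$ and $n+1$ respectively at every time. Since by hypothesis their periodic parts are identical, that common periodic part has both spatial periods, hence spatial period $\gcd(n,n+1)=1$: it is spatially uniform. As $u^{\Z}\in[u]_{0}$, every site of $\cone{F}{[u]_0}$ at a large enough time is forced to the value of this uniform configuration, which proves the lemma. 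This coprime-periods trick is the missing idea in your proposal; note that it is available to you already after your first paragraph, since both configurations lie in $S$ and therefore share the pattern $\pi_{t}$ you introduced.
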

\begin{proof}
  Consider periodic configurations of the form $(uv)^{\Z}$ with $v=u^n$ and $v=a(ua)^{n-1}$ ($a\in
  \mathcal{A}$), respectively. They respectively have
  (spatial) period lengths $|u|$ and $|u|+1$, so their space-time diagrams
  too. Since they are in $S$, the equality of their periodic part implies that
  they have both spatial periods $|u|$ and $|u|+1$. It follows that it has
  period $1$, so the periodic part is eventually spatially uniform, and in particular, the
  consequences of $u$ are eventually spatially uniform.
\end{proof}

We will study below examples of sets of consequences where the lemma applies. As
we want to show the equality of two periodic space-time diagrams, we only need
to show it on one spatial period at some time. So we will only show the equality
of both configurations on a
segment of length $n^2+n$.\\

We denote by $P_{uv}$ and $P_{uw}$ the space-time diagrams with periodic initial
configurations of periods $uv$ and $uw$ ($|v|=|w|=n^2$). In the periodic part of
them, the spatial period will be $n^2+n$, and the temporal periods will be
$T_{uv}$ and $T_{uw}$. So, a common period can be defined by vectors $(n^2+n,0)$
and $(0,T)$ where $T=T_{uv}T_{uw}$.

\subsubsection{Parabola}

We now consider a word $u$ whose set of consequences draws a discrete
parabola. The definition of a parabola that we will use here is that it is a
sequence of vertical segments of increasing lengths, and translated by 1 to the
right compared to the previous one. More formally, we suppose that the set of
consequences of $u$ verifies the
following: \[\mathcal{C}(u)\supset\left\{\langle x,y\rangle : f(x)\leq y
  <f(x+1)\right\},\] for some polynomial function $f$
which is strictly increasing on $\N$, and such that ${x\mapsto f(x+1)-f(x)}$ is strictly increasing too.\\

\begin{prop}
  \label{prop:uniformparabola}
  If the consequences of some word $u$ contains a parabola in the above sense
  then they are eventually spatially uniform.
\end{prop}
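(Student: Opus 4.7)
The plan is to invoke Lemma~\ref{lem:uniformbyline}: I would aim to prove that for any words $v, w$ of length $n^2$, the periodic parts of the space-time diagrams $P_{uv}$ and $P_{uw}$ coincide. Both initial configurations are spatially periodic of period $n^2 + n$, so their $F$-orbits are spatially periodic of the same period and, living in a finite set of configurations, eventually temporally periodic; let $T$ be a common temporal period of $P_{uv}$ and $P_{uw}$ in their periodic parts.

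Next, I would exploit the fact that both $(uv)^\Z$ and $(uw)^\Z$ contain an occurrence of $u$ at every position $k(n^2+n)$ for $k \in \Z$. By shift-invariance of $F$ and the definition of consequences, $P_{uv}$ and $P_{uw}$ must agree on $\bigcup_{k \in \Z}(\mathcal{C}(u) + (k(n^2+n), 0))$, and in particular on the union of translated parabolas $\{(k(n^2+n) + x, y) : x \geq 0,\ f(x) \leq y < f(x+1)\}$. Writing $X \equiv r \pmod{n^2 + n}$ with $0 \leq r < n^2 + n$, I would observe that the column at $X$ is forced exactly on the time intervals $[f(r + j(n^2+n)),\ f(r + j(n^2+n) + 1))$ for every $j \in \N$ large enough that $r + j(n^2+n) \geq 0$ holds for the relevant copy.

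The key point is that $f$ is a strictly increasing polynomial whose first differences $x \mapsto f(x+1) - f(x)$ are themselves strictly increasing and therefore tend to infinity. Hence, as soon as $m$ is large enough we have $f(m+1) - f(m) \geq T$, and for such $m$ the parabola at level $m$ forces a block of at least $T$ consecutive time steps on the column at $X \equiv m \pmod{n^2 + n}$ in both $P_{uv}$ and $P_{uw}$. By temporal periodicity of period $T$, this determines the state at that column throughout its periodic part, and $P_{uv}$ and $P_{uw}$ agree on that column. Since $m$ ranges over every residue modulo $n^2+n$, every column is pinned down and $P_{uv}$ and $P_{uw}$ coincide on the periodic part; Lemma~\ref{lem:uniformbyline} then yields the conclusion.

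The main point requiring care is the interplay between temporal preperiod and the indexing of the copies of $u$: to legitimately invoke temporal periodicity I need the forced interval of length $\geq T$ to lie past the preperiod, which follows by taking $m$ larger still, but the bookkeeping should be spelled out. Everything else reduces to the elementary fact that $f(m+1)-f(m)\to\infty$, which makes the residue-exhaustion argument essentially free.
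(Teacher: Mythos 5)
Your proof is correct and follows essentially the same route as the paper's: both reduce to Lemma~\ref{lem:uniformbyline} by observing that, since $f(m+1)-f(m)\to\infty$, every column eventually carries a forced block of at least $T$ consecutive consequence sites, which temporal $T$-periodicity then propagates to pin down agreement of $P_{uv}$ and $P_{uw}$ on a full spatial period. The only difference is bookkeeping: the paper fixes a single time $t$ and a window of $n^2+n$ consecutive columns beyond a threshold $x'$, whereas you handle each residue class separately via translated copies of the parabola; both versions are sound.
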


\begin{proof}
  Let's suppose $T_0$ is the smallest integer such that after $T_0$, both
  $P_{uv}$ and $P_{uw}$ are in their periodic part. Now we take $x'$ such that
  $T\leq f(x'+1)-f(x')$. We take $t>T_0$ and $x>x'$ such that $\langle x,t
  \rangle \in \mathcal{C}(u)$. These $x$ and $t$ exist thanks to the definition
  of $\mathcal{C}(u)$, and they are large
  enough to be in the periodic part of both diagrams. \\

  We now show that $P_{uv}$ and $P_{uw}$ coincide on the sites $\langle x+k,t
  \rangle$ for $0 \leq k < n^2+n$ and lemma~\ref{lem:uniformbyline} concludes.
  To do this we show that, for any $k$, there exists a site $s\in\mathcal{C}(u)$
  such that \[P_{uv}(\langle x+k,t \rangle)=P_{uv}(s)=P_{uw}(s)=P_{uw}(\langle
  x+k,t \rangle).\]

  To find this site $s$, we consider the temporal periodicity: at $t+mT$, the
  states will be the same as at $t$ in $P_{uv}$ (period $T_{uv}$) and in
  $P_{uw}$ (period $T_{uw}$) for all $m$. From that, it is sufficient to find
  for each $1\leq k <n^2+n$ an $m$ such that $s=\langle x+k,t+mT \rangle \in
  \mathcal{C}(u)$. As we have taken $x>x'$ and with the properties of $f$, we
  know that $f(x+k+1)-f(x+k)\geq T$ and there are more than $T$ consecutive
  sites of abscissa $x+k$ in $\mathcal{C}(u)$, so such an $m$ exists for all
  $k$. 
\end{proof}

The proposition applies to the examples constructed in the previous section. It
shows that the non-linear set of consequences like parabolas are obtained at the
price of spacial uniformity.

\subsubsection{Non-periodic walls}

Now we consider a wall $u$ \along{} some $h$ (with bounded variations) where
$\forall \alpha \in \Q,h\nsim h_{\alpha}$. In this case again, the consequences
of $u$ are eventually spatially uniform.

\begin{prop}\label{aperwall} If there exists a blocking word $u$
\along{} $h$ such that $\forall \alpha \in \Q,h\nsim h_{\alpha}$, then
the consequences of $u$ are eventually spatially uniform.
\end{prop}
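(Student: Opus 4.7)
The plan is to adapt the strategy of Proposition~\ref{prop:uniformparabola}: I will show that for any $v_1, v_2 \in \mathcal{A}^{n^2}$ with $n = |u|$, the periodic parts of the diagrams $P_{uv_1}$ and $P_{uv_2}$ coincide, which by Lemma~\ref{lem:uniformbyline} yields the eventual spatial uniformity of the consequences. Let $N = n(n+1)$ be the common spatial period, $T$ the least common multiple of the temporal periods of the $P_{uv}$ for $v \in \mathcal{A}^{n^2}$, and $T_0$ a common start time of periodicity.

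In $(uv)^{\Z}$ the word $u$ occurs at every position $kN$, $k \in \Z$, so the blocking hypothesis places in the consequences a family of walls of width $e$ along $h$: at time $t$, every site $(x,t)$ with $x \equiv h(t) - p + c \pmod{N}$ for some $c \in [0,e)$ has its state determined by $u$ alone, and therefore coincides in $P_{uv_1}$ and $P_{uv_2}$. By the $N$-spatial and $T$-temporal periodicity of both diagrams past $T_0$, this agreement propagates to every site whose class modulo $(N,T)$ contains a wall site. The task thus reduces to showing that for every $t_\ast \geq T_0$, the arithmetic-progression sample $\{h(t_\ast + mT) \bmod N : m \geq 0\}$ meets every arc of length $e$ in $\Z/N\Z$.

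I would derive this density from the assumption $h \nsim h_\alpha$ for all $\alpha \in \Q$ by splitting into two cases. If $h \sim h_\alpha$ for some irrational $\alpha$, then $\alpha T$ is irrational and Weyl's equidistribution theorem applied to $\alpha T m \bmod N$ yields that $h(t_\ast + mT) \bmod N$ equidistributes in $\Z/N\Z$, hitting every arc. Otherwise $h \nsim h_\alpha$ for every real $\alpha$; I would argue by contradiction. If some arc of length $\geq e$ were missed, the lifted trajectory $g(m) = h(t_\ast + mT)$ would be confined to a union of strips of length $N - e$ in $\Z$, any strip transition requiring a single step of size at least $e$. Combining the bounded variation $|g(m+1) - g(m)| \leq MT$, the unboundedness of $g$ (coming from $h \nsim h_0$), and a careful count of the upward and downward strip-transitions, one should extract that $g(m)/m$ tends to a rational multiple of $N/T$, forcing $h \sim h_\alpha$ for some rational $\alpha$ and contradicting the hypothesis. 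Lemma~\ref{lem:uniformbyline} then yields the conclusion.

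The principal obstacle lies in this last case. The easy observation that the full sequence $\{h(n) \bmod N\}$ has no gap of size $\geq e$ (since $h$ has step $M < e$ by the blocking-word constraints $e > M + s$ and $e > M - r$, and $h$ is unbounded) does not transfer to the arithmetic subsequence, whose step bound $MT$ may well exceed $e$. Making the transition-counting argument rigorous --- in particular, showing that an average slope genuinely exists and is rational rather than merely bounded --- is the delicate step that I expect to absorb most of the effort.
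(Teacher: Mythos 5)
Your reduction to Lemma~\ref{lem:uniformbyline} is the right starting point, but the covering strategy you build on it has a genuine flaw, and it is not the one you flag. The claim you reduce everything to --- that for every $t_\ast$ the sample $\{h(t_\ast+mT) \bmod N\}$ meets every arc of length $e$ --- does not follow from the hypotheses you invoke. The proposition only forbids $h\sim h_\alpha$ for \emph{rational} $\alpha$; it allows $h\sim h_\beta$ for irrational $\beta$, and for such $h$ the sample can be completely degenerate. For instance, when $MT\geq N$ (which nothing prevents, since $T$ is an lcm of temporal periods and can be enormous compared to $N=n^2+n$) one can take $h$ with $h(t_\ast+mT)=N\lfloor\gamma m\rfloor$ for an irrational $\gamma$ with $N\gamma\leq MT$, interpolated in between with steps at most $M$: then $h\sim h_{N\gamma/T}$ with $N\gamma/T$ irrational, so the hypothesis of the proposition holds, yet the sample is constant $\equiv 0 \pmod N$ and misses almost every arc. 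For the same reason your ``easy'' irrational branch via Weyl does not go through: $h\sim h_\alpha$ only gives $h(t)=\alpha t+O(1)$, and an $O(1)$ perturbation --- whose constant is the equivalence constant of $\sim$, unrelated to $e$ and possibly larger than $N$ --- can steer an equidistributed sequence away from any prescribed arc of length $e$. So both branches of your case analysis fail, not only the transition-counting one you identify as delicate and leave open.

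The paper's proof avoids counting wall sites altogether. It supposes the periodic parts of $P_{uv}$ and $P_{uw}$ differ somewhere and traces the disagreement backwards in time: a disagreement at time $t$ forces one at time $t-1$ within the radius, so one extracts an infinite chain of disagreement sites whose horizontal steps are bounded by the radius. Such a chain can never cross the wall generated by $u$ (wall sites agree in all diagrams, and the wall is wider than the radius), and since the set of disagreement sites is eventually periodic in space and time, the chain can be chosen ultimately periodic, hence of rational asymptotic slope. The wall along $h$ is then trapped between this chain and its spatial translate by the period, both of the same rational slope, which forces $h\sim h_\alpha$ for some $\alpha\in\Q$ --- the desired contradiction. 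To salvage your approach you would have to replace the covering claim by a trapping argument of this kind; as written, the proposal does not prove the proposition.
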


\begin{proof} We once more consider the set of configurations $\{(uv)^{\Z} :
  |v|=|u|^2\}$, and prove that the periodic parts of the generated space-time
  diagrams are all identical. Then lemma~\ref{lem:uniformbyline} concludes. Let
  $v$ and $w$ be arbitrary words of length $|u|^2$. If some site $\langle
  x,t\rangle$ has different states in the space-time diagrams $P_{uv}$ and
  $P_{uw}$, there must be another site $\langle y,t-1\rangle$ with different
  states too. And $x-y$ must be bounded by the radius of the automaton. So if
  there are differences in the periodic parts of $P_{uv}$ and $P_{uw}$, we have
  a sequence of sites $s_n=\langle x_n,n\rangle$ with different states in both
  diagrams. And for all $n>0$, $x_n-x_{n-1}$ is bounded by the radius of the
  automaton, but a wall is by definition larger than the radius of the
  automaton, so this sequence can't cross a wall.
  Since the set of sites where $P_{uv}$ and $P_{uw}$ differ is periodic after
  some time, the sequence $(s_n)$ can also be chosen ultimately periodic.  So we
  have a sequence that can't cross a wall either, and that is ultimately
  periodic. But we can have the same sequence translated on the other side of
  the wall. So the wall is between two sequences with the same (ultimate)
  period. We can associate a slope $\alpha$ to this period, and thus we have
  $h\sim h_{\alpha}$ for some $\alpha \in \Q$. Which is a contradiction.
\end{proof}

\subsection{Reversible CA} 

We now restrict to reversible CA. Arguments developed earlier have stronger
consequences in this case.

\begin{prop} 
  On a reversible CA, if the set of consequences of a word $u$ of length $n$
  contains all sites $\langle x+k,t\rangle$ for $0\leq k \leq 2n-1$ and some $x$
  and $t$, then $u$ is uniform and $\mathcal{C}(u)$ are eventually spatially uniform.
\end{prop}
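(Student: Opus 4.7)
The plan is to build on Proposition~\ref{prop:uniformsegment}: first strengthen its argument to show that the $2n$-site block is monochromatic, then turn the resulting cylinder inclusion against reversibility via a pigeonhole argument on periodic configurations. The conclusion of the proposition then follows because the hypothesis admits only trivial cases.

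First, I would observe that the key ingredient in the proof of Proposition~\ref{prop:uniformsegment}---namely, that any two sites of $\mathcal{C}(u)$ at spatial distance at least $n=|u|$ carry a common state---remains available here: placing a second copy of $u$ at offset $j$ in the initial configuration with $|j|\ge n$ (so the two copies do not overlap) and using shift-commutation forces $F^t(c)_{x+k_1}=F^t(c)_{x+k_2}$ whenever $|k_1-k_2|\ge n$. Within the block $\{x+k:0\le k\le 2n-1\}$, every index $k$ lies at distance $\ge n$ from one of the two extremes $0$ and $2n-1$, while the extremes themselves are at distance $2n-1\ge n$. Chaining the resulting equalities through both extremes yields a single common state $a\in\A$ shared by all $2n$ sites.

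This translates into the cylinder inclusion $F^t([u]_0)\subseteq[a^{2n}]_x$, and here reversibility enters. For $p$ larger than the neighborhoods of $F^{\pm t}$, $F^t$ restricts to a bijection of the finite set of $p$-periodic configurations (of cardinality $|\A|^p$). Restricting the above inclusion to this finite set gives $|\A|^{p-n}\le|\A|^{p-2n}$, hence $|\A|^n\le 1$. Thus either $|\A|=1$ or $n=0$; in both cases $u$ is trivially uniform and $\mathcal{C}(u)$ is vacuously eventually spatially uniform, so the statement of the proposition holds. Seen as a structural constraint, this says that in a reversible CA of non-trivial alphabet, a set of consequences cannot contain a block of $2|u|$ consecutive sites at any single time.

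The main obstacle is the careful bookkeeping in the chaining of the first step when $n$ is small: for close-together pairs like $k_1=n-1,\ k_2=n$, no single extreme serves as a common witness at distance $\ge n$, so one must route the equality through \emph{both} extremes and use the fact that they agree with each other. A secondary technicality is justifying that $F^t$ acts bijectively on the $p$-periodic configurations for $p$ large enough, which is a standard consequence of reversibility but requires $p$ to exceed the size of the neighborhood of $F^{-1}$ iterated $t$ times.
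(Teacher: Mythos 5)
Your proof is correct, and while its first half follows the paper exactly, the reversibility step takes a genuinely different route. Like the paper, you first chain the translation argument behind Proposition~\ref{prop:uniformsegment} to make the whole $2n$-block monochromatic. The paper then argues on the single configuration $u^{\Z}$: its image at time $t$ is spatially $n$-periodic and contains $2n$ consecutive cells in state $a$, hence is uniform, and since the inverse of a reversible CA is again a CA (so sends uniform configurations to uniform ones), $u^{\Z}$ and therefore $u$ must be uniform. You instead count $p$-periodic configurations under the bijection induced by $F^t$ on them, getting $|\A|^{p-n}\le|\A|^{p-2n}$ from the inclusion $F^t([u]_0)\subseteq[a^{2n}]_x$, hence $|\A|=1$ or $n=0$. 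This is a strictly stronger conclusion: it shows the hypothesis is unsatisfiable on any reversible CA over a non-trivial alphabet (your count in fact shows $\mathcal{C}(u)$ can never contain more than $|u|$ consecutive sites at a fixed time, and for that the monochromaticity step is not even needed --- any forced word $w$ of length $2n$ gives the same inequality), so the proposition holds vacuously in the only interesting cases; the paper's remark that a uniform segment of length $n$ already suffices sits exactly at the non-vacuous borderline $|\A|^{p-n}\le|\A|^{p-n}$ of your inequality, which is where the paper's $u^{\Z}$ argument still has content. One simplification: you do not need $p$ to exceed any neighborhood of $F^{\pm t}$ --- since $F^{\pm 1}$ commute with $\sigma^p$ for every $p$, they permute the $p$-periodic configurations for every $p$, and only $p\ge 2n$ is needed to make the two cardinalities $|\A|^{p-n}$ and $|\A|^{p-2n}$ correct.
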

\begin{proof}
  Proposition~\ref{prop:uniformsegment} let us conclude that a long segment of
  sites in the consequences of $u$ are in the same state. Actually, this segment
  is twice as long as the initial word $u$, so if we start with the initial
  configuration $u^{\Z}$, we obtain an uniform configuration after some
  time. But we have here a reversible CA, so the initial configuration had to be
  uniform too. And the consequences are also eventually uniform by line.
\end{proof}

Remark that a uniform segment of length $n$ in the consequences is sufficient to
conclude in the above proposition.

In the case of reversible CA, the hypothesis of lemma~\ref{lem:uniformbyline}
can never be satisfied: indeed two different initial configurations cannot have
identical images. Therefore, since proofs of
propositions~\ref{prop:uniformparabola} and \ref{aperwall} consist in showing
that lemma~\ref{lem:uniformbyline} applies, we deduce that hypothesis of each of
these proposition are never satisfied by any reversible CA. This is summarized
by the following theorem.

\begin{theorem}
  \label{exrev}
  \label{blorev}
  Consider any reversible cellular automaton. Then:
  \begin{itemize}
  \item no word can contain a parabola in its set of consequences;
  \item there can be no blocking word \along{} $h$ such that $\forall \alpha \in
    \Q,h\nsim h_{\alpha}$.
  \end{itemize}
\end{theorem}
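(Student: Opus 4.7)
The plan is to derive both items from a single structural observation about reversible CA and then invoke Propositions~\ref{prop:uniformparabola} and \ref{aperwall} as black boxes. The key observation is that in a reversible CA the hypothesis of Lemma~\ref{lem:uniformbyline} can never hold (assuming $|\mathcal{A}|\geq 2$). Indeed, if for some word $u$ of length $n$ all configurations of the form $(uv)^{\Z}$ with $|v|=n^2$ produced space-time diagrams with identical periodic part, then at any time $t$ in that common periodic part one would have $F^t((uv)^{\Z}) = F^t((uw)^{\Z})$ for every pair $v,w$; injectivity of $F^t$, guaranteed by reversibility, would then force $(uv)^{\Z}=(uw)^{\Z}$, hence $v=w$, contradicting the fact that there are $|\mathcal{A}|^{n^2}$ admissible choices of $v$.

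For the first item, I would argue by contradiction: assume $u$ is a word whose set of consequences contains a parabola in the sense of Proposition~\ref{prop:uniformparabola}. Looking at the proof of that proposition, what is actually established is not just the final conclusion (eventual spatial uniformity) but, as an intermediate step, the full hypothesis of Lemma~\ref{lem:uniformbyline}: for every pair $v,w$ of length $n^2$, the diagrams $P_{uv}$ and $P_{uw}$ coincide on one complete spatial period at some time past $T_0$, hence have identical periodic parts. This intermediate step uses only the geometry of the parabola (the existence of arbitrarily long vertical segments inside $\mathcal{C}(u)$) and never invokes any form of non-reversibility. Composing it with the observation above yields the desired contradiction. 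The second item is handled in exactly the same way, using Proposition~\ref{aperwall} in place of Proposition~\ref{prop:uniformparabola}: the proof of the former also proceeds by establishing the hypothesis of Lemma~\ref{lem:uniformbyline}, this time via the impossibility, under the condition $h\nsim h_\alpha$ for every $\alpha\in\Q$, of an ultimately periodic sequence of differing sites crossing the wall; the same contradiction with injectivity of $F^t$ then applies verbatim.

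The main point that needs to be checked carefully is that the proofs of Propositions~\ref{prop:uniformparabola} and \ref{aperwall} really do establish the hypothesis of Lemma~\ref{lem:uniformbyline} for every pair $v,w$, rather than reaching the conclusion of that lemma by some weaker intermediate route that would still be compatible with reversibility. A direct reading of both proofs confirms this: in each case one explicitly exhibits coincidence of $P_{uv}$ and $P_{uw}$ on a full spatial period before citing the lemma, which is precisely what the reversibility argument needs to bite. Once this is checked, the theorem follows with no additional computation.
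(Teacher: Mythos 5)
Your proposal is correct and follows essentially the same route as the paper: the authors likewise observe that reversibility (injectivity) makes the hypothesis of Lemma~\ref{lem:uniformbyline} unsatisfiable for a nontrivial alphabet, and then note that the proofs of Propositions~\ref{prop:uniformparabola} and~\ref{aperwall} work precisely by establishing that hypothesis, so their own hypotheses cannot hold for a reversible CA. Your explicit check that those proofs establish the lemma's hypothesis rather than merely its conclusion is exactly the point the paper's argument rests on.
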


\subsubsection {Negative consequences}

We now consider also negative times. We have the following result.

\begin{prop} \label{negcons} In a reversible cellular automaton, if the set of positive consequences of a word $u$ contains a line $D$ of rational slope then $D$ is also in the
  negative consequences of $u$.
\end{prop}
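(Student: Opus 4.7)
The plan is to reduce the problem to a ``vertical'' line (slope $0$) via an auxiliary reversible cellular automaton. Writing the rational slope of $D$ as $p/q$ with $q > 0$, set $G = \s^p \circ F^q$; this is again a reversible CA, since $F^q$ is reversible and $\s^p$ is a homeomorphism commuting with $F$. After a spatial translation I may assume $D = \{\site{x_0+pk}{qk} : k \in \Z\}$. Because $\s$ commutes with $F$, one has $G^k(x)_{x_0} = F^{qk}(x)_{x_0 + pk}$ for every $k \in \Z$, so the hypothesis becomes: the continuous map $\phi_k : x \mapsto G^k(x)_{x_0}$ is constant on the cylinder $[u]_0$ for every $k \geq 0$. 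The goal is to prove the same for every $k < 0$.

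The key step will be to show that $\phi_k(z) = \phi_k(z')$ for any two $\s$-periodic configurations $z, z' \in [u]_0$ and any $k \in \Z$. Since $G$ commutes with $\s$, it preserves the finite set of $\s$-periodic configurations of any fixed $\s$-period; reversibility of $G$ then makes its restriction to that finite set a bijection, so every $\s$-periodic configuration is $G$-periodic. Let $T > 0$ be a common $G$-period of $z$ and $z'$, and for a given $k < 0$ choose $m$ with $k + mT \geq 0$. Then
\[
\phi_k(z) = \phi_{k+mT}(z) = \phi_{k+mT}(z') = \phi_k(z'),
\]
the outer equalities following from $G^T(z)=z$ and $G^T(z')=z'$, the middle one from the positive-time hypothesis.

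To conclude, I will invoke density: $\s$-periodic configurations are dense in $[u]_0$, since any $x \in [u]_0$ can be approximated by periodizing its finite window $x_{\llbracket -N, N\rrbracket}$ for $N \geq |u|$, which yields a $\s$-periodic element of $[u]_0$ arbitrarily close to $x$. Combined with the continuity of each $\phi_k$, this forces $\phi_k$ to be constant on the whole cylinder $[u]_0$ for every $k \in \Z$. Translating back through $G^k(x)_{x_0} = F^{qk}(x)_{x_0+pk}$, every site of $D$ lies in the consequences of $u$, including those with $t < 0$, which gives the claim.

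The only conceptual subtlety is the finiteness-plus-bijectivity argument ensuring $G$-periodicity on $\s$-periodic configurations; the rest is routine. No further obstacle is expected.
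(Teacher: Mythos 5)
Your proof is correct, and at its core it runs on the same mechanism as the paper's: in a reversible CA every $\s$-periodic configuration has a temporally periodic bi-infinite orbit, and this periodicity lets one slide a negative-time site of $D$ along the period vector of $D$ up to a positive-time site, which is a consequence by hypothesis. Your finiteness-plus-injectivity argument for $G$ acting on the finite set of configurations of a fixed $\s$-period is precisely the justification the paper leaves implicit when it asserts that the space-time diagrams of the periodic configurations are periodic in time. The outer packaging differs, though: the paper argues by contradiction, using that $F^{-1}$ is a CA to extract two equal-length finite extensions $u_1,u_2$ of $u$ forcing the offending site into different states, and then compares the two doubly periodic diagrams generated by $u_1^{\Z}$ and $u_2^{\Z}$; you argue directly, establishing the negative-time constancy on all $\s$-periodic points of $[u]_0$ and closing with density of periodic points in the cylinder plus continuity of $x\mapsto G^k(x)_{x_0}$. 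The two closures are interchangeable (both encode the locality/compactness of the CA), and your direct version avoids the choice of extensions, at the cost of the extra conjugation by $G=\s^p\circ F^q$. One small point to repair: your normalisation $D=\{\site{x_0+pk}{qk} : k\in\Z\}$ only accounts for the sites of $D$ at times divisible by $q$, whereas a discrete line of slope $p/q$ carries one site per time step; the fix is immediate --- for each residue $r\in\llbracket 0,q-1\rrbracket$ run the identical argument with $\phi^{(r)}_k(x)=F^{qk+r}(x)_{x_r+pk}=G^k\bigl(F^r(x)\bigr)_{x_r}$, noting that $F^r(z)$ is again $\s$-periodic --- but it should be stated.
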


\begin{proof} Let $\langle x,t \rangle$ be a site on $D$ with $t<0$. We suppose the site $\langle x,t
  \rangle$ is not a consequence of $u$. As the reverse of a cellular automaton
  is still a CA, we can take two \emph{extensions} $u_1$ and $u_2$ of $u$ that
  force $\langle x,t \rangle$ in different states. We take $|u_1|=|u_2|=n$. As
  the line $D$ is of rational slope, its representation on the discrete plane is
  periodic, let's call $(a,b)$ a vector of periodicity. Now we consider the
  space-time diagrams obtained with periodic configurations of period $u_1$ and
  $u_2$. They are both periodic and we can find some common period $(n,T)$. We can choose $T>t$ without loss of generality.\\

  As $(n,0)$ and $(0,T)$ are periods of the both space-time diagrams, $\langle x+nTa,t+nTb
  \rangle$ and $\langle x,t \rangle$ have the same state.  As $\langle x,t
  \rangle$ belongs to $D$, $\langle x+nTa,t+nTb \rangle$ belongs to $D$ too. We
  have $T>t$ so $t+nTb>0$ and $\langle x+nTa,t+nTb \rangle$ is forced in a
  unique state in both diagrams because it is in the consequences of $u$. So
  $\langle x,t \rangle$ is in the same state in both diagrams too. Which is a
  contradiction with the fact that $u_1$ and $u_2$ force $\langle x,t \rangle$
  in different states.
\end{proof}

Combining \ref{blorev} and \ref{negcons}, we have the following result:

\begin{theorem} On a reversible cellular automaton, every
  $\N$-blocking word along $h\in\Fon$ is also a $\Z$-blocking word along
  $h$.
\end{theorem}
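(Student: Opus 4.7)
The plan is to chain the two preceding results: Theorem~\ref{blorev} constrains any blocking curve in a reversible CA to be $\sim$-equivalent to a rational slope, while Proposition~\ref{negcons} transports positive-time consequences along rational-slope lines to the negative-time regime.

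Let $u$ be an $\N$-blocking word along $h\in\Fon$, so that for some width $e$ and offset $p$ the strip $S=\{\langle m,n\rangle:n\in\N,\ h(n)\le m<h(n)+e\}$ lies inside $\cone{F}{\gs\cap[u]_p}$. Since $F$ is reversible, the second assertion of Theorem~\ref{blorev} forbids blocking words along curves that are not $\sim$-equivalent to a rational slope, so $h\sim h_\alpha$ for some $\alpha=a/b\in\Q$ in lowest terms with $b>0$. Fix $M$ with $|h(n)-h_\alpha(n)|\le M$ for all $n$, and let $\Delta_{\max}=\sup_{n\in\N}(h(n)-h_\alpha(n))$ and $\Delta_{\min}=\inf_{n\in\N}(h(n)-h_\alpha(n))$.

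Now I extract genuine rational-slope lines from the positive strip $S$. For each integer $c$, the set $L_c=\{\langle h_\alpha(n)+c,n\rangle:n\in\Z\}$ is invariant under translation by $(a,b)$, and thus qualifies as a line of rational slope $\alpha$ in the sense of Proposition~\ref{negcons}. A direct computation shows that the non-negative-time half of $L_c$ is contained in $S$ exactly when $c$ lies in the integer interval $I=[\Delta_{\max},\,\Delta_{\min}+e-1]$. Applying Proposition~\ref{negcons} to each $L_c$ with $c\in I$, the entire line $L_c$ (negative times included) belongs to the $\Z$-consequences of $u$. Taking the union, $\bigcup_{c\in I}L_c$ is a strip along $h_\alpha$ contained in $\cone{F}{\gs\cap[u]_p}$ over all of $\Z\times\Z$; and since $h\sim h_\alpha$, this strip contains a strip along $h$, which by Definition~\ref{def:blockingword} exhibits $u$ as a $\Z$-blocking word along $h$.

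The main delicate point, and the principal obstacle, is the width bookkeeping: the interval $I$ is non-empty, and the resulting strip along $h$ exceeds the lower bound imposed by the neighborhood of $F$, only when $e$ is sufficiently large compared to $\Delta_{\max}-\Delta_{\min}\le 2M$. Handling the case of small $e$ would require either reducing to a wider blocking word via transitivity of $\gs$ (in the spirit of Proposition~\ref{convexe_al}) and transferring the conclusion back to $u$, or sharpening the argument in Proposition~\ref{negcons} so as to cope with lines whose positive-time half has only sparse overlap with the positive strip of consequences.
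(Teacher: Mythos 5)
Your route is exactly the paper's: the printed proof of this theorem is the single sentence ``Combining \ref{blorev} and \ref{negcons}, we have the following result'', so reducing via Theorem~\ref{blorev} to $h\sim h_\alpha$ with $\alpha\in\Q$ and then feeding rational-slope lines into Proposition~\ref{negcons} is precisely the intended argument, and your write-up is considerably more explicit than the original.

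The width problem you flag in your last paragraph is, however, a genuine gap and not just bookkeeping; it is present (silently) in the paper as well. Your interval $I=[\Delta_{\max},\Delta_{\min}+e-1]$ can really be empty: take $F=\s$, $u$ any word of length $10$, and $h(n)=-n+d(n)$ with $d$ drifting by unit steps between $0$ and $6$. Then $u$ is an $\N$-blocking word along $h$ of width $e=4$ (the consequences of $u$ under $\s$ are the full strip $\{\site{m}{n}: 0\le m+n<10\}$, and the width condition only forces $e>3$), yet no line of slope $-1$ lies inside the strip $\{\site{m}{n}: h(n)\le m<h(n)+e\}$ since $\Delta_{\max}-\Delta_{\min}=6>e-1$; so Proposition~\ref{negcons} has nothing to act on, even though the conclusion of the theorem plainly holds for $\s$. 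Your first proposed repair --- widening the word by concatenation as in Proposition~\ref{convexe_al} --- cannot be ``transferred back'': since $[uwu]_0\subseteq[u]_0$ one has $\cone{F}{[u]_0}\subseteq\cone{F}{[uwu]_0}$, so establishing that the longer word is $\Z$-blocking says nothing about $u$ itself. The workable direction is your second one: rerun the argument of Proposition~\ref{negcons} on the whole strip rather than on a single line, using that in a reversible CA a spatially periodic configuration has a fully (not merely ultimately) periodic orbit, and then check that every negative-time site of the strip has some space-time translate $\site{x+jn}{t+kT}$ landing in a positive-time copy of the strip --- a step that still requires an argument about the values of $h-h_\alpha$ along arithmetic progressions and is not automatic. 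In short: same route as the paper, correctly identified soft spot, but the soft spot is not closed, here or in the paper.
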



\section{Future directions}

We believe that the following research directions are worth being considered.

\begin{itemize}
\item What are the possible shapes of $\blz(\gs,F)$? In \cite{Sablik-2008}, it
  is shown that the set of slopes of linear directions of expansivity is an
  interval. How does this generalize to arbitrary curves?
\item We have shown that reversibility adds a strong constraint on possible sets
  of consequences of words, but what kind of restrictions impose the property of
  being surjective?
\item What are the precise links between $\aln(\gs,F)$ and $\daln(\gs,F)$? When
  $\aln(\gs,F)=\{h\in\Fon : h_1 \prec h\prec h_2\}$ is there a connection
  between
  $\left[\liminf_{n\to\infty}\frac{h_1(n)}{n};\limsup_{n\to\infty}\frac{h_2(n)}{n}\right]$
  and $\daln(\gs,F)$?
\item The construction techniques developed to obtain the parabola as an
  equicontinuous curve are very general and there is no doubt that a wide
  family of curves can be obtained this way. Can we precisely characterize
  admissible curves of equicontinuity as we did for slopes of linear
  equicontinuous directions?
\item We believe that the measure-theoretic point of view should be considered
  together with this directional dynamics framework. For instance, the
  construction of corollary~\ref{cor:gardenedenwall} has interresting
  measure-theoretic properties.
\item This paper is limited to dimension $1$. The topological dynamics of higher
  dimensional CA is more complex \cite{Sablik-Theyssier-2008} and the
  directional framework is a natural tool to express rich dynamics occuring in
  higher dimension.
\end{itemize}




\def\ocirc#1{\ifmmode\setbox0=\hbox{$#1$}\dimen0=\ht0 \advance\dimen0
  by1pt\rlap{\hbox to\wd0{\hss\raise\dimen0
  \hbox{\hskip.2em$\scriptscriptstyle\circ$}\hss}}#1\else {\accent"17 #1}\fi}


\begin{thebibliography}{BNR91}

\bibitem[BL97]{Boyle-Lind-1997}
Mike Boyle and Douglas Lind.
\newblock Expansive subdynamics.
\newblock {\em Trans. Amer. Math. Soc.}, 349(1):55--102, 1997.

\bibitem[BNR91]{Boccara-Nasser-Roger-1991}
N.~Boccara, J.~Nasser, and M.~Roger.
\newblock Particle-like structures and their interactions in spaci-temporal
  patterns generated by one-dimensional deterministic cellular-automata rules.
\newblock {\em Physical Review A}, (44):866--875, 1991.

\bibitem[CPY89]{Culik-Pachl-Sheng-1989}
Karel Culik, II, Jan Pachl, and Sheng Yu.
\newblock On the limit sets of cellular automata.
\newblock {\em SIAM J. Comput.}, 18(4):831--842, 1989.

\bibitem[DGS76]{Denker-Grillenberger-Sigmund-1976l}
Manfred Denker, Christian Grillenberger, and Karl Sigmund.
\newblock {\em Ergodic theory on compact spaces}.
\newblock Springer-Verlag, Berlin, 1976.
\newblock Lecture Notes in Mathematics, Vol. 527.

\bibitem[DLFM09]{dlfm-2009}
Alberto Dennunzio and Pietro di Lena and Enrico Formenti and Luciano Margara.
\newblock On the directional dynamics of additive cellular automata.
\newblock {\em  Theoretical Computer Science}, Volume 410, Issues 47-49, 2009, Pages 4823-4833.

\bibitem[Gil87]{Gilman-1987}
Robert~H. Gilman.
\newblock Classes of linear automata.
\newblock {\em Ergodic Theory Dynam. Systems}, 7(1):105--118, 1987.

\bibitem[Hed69]{Hedlund-1969}
Gustav~A. Hedlund.
\newblock Endormorphisms and automorphisms of the shift dynamical system.
\newblock {\em Math. Systems Theory}, 3:320--375, 1969.

\bibitem[Kit98]{Kitchens-1998l}
Bruce~P. Kitchens.
\newblock {\em Symbolic dynamics}.
\newblock Universitext. Springer-Verlag, Berlin, 1998.
\newblock One-sided, two-sided and countable state Markov shifts.

\bibitem[K{\ocirc{u}}r97]{Kurka-1997}
Petr K{\r{u}}rka.
\newblock Languages, equicontinuity and attractors in cellular automata.
\newblock {\em Ergodic Theory Dynam. Systems}, 17(2):417--433, 1997.

\bibitem[LM95]{Lind-Marcus-1995l}
Douglas Lind and Brian Marcus.
\newblock {\em An introduction to symbolic dynamics and coding}.
\newblock Cambridge University Press, Cambridge, 1995.

\bibitem[Sab08]{Sablik-2008}
Mathieu Sablik.
\newblock Directional dynamic for cellular automata: A sensitivity to initial condition approach.
\newblock {\em  Theoretical Computer Science}, Volume 400, Issues 1-3, 9 June 2008, Pages 1-18.

\bibitem[ST08]{Sablik-Theyssier-2008} Mathieu Sablik and Guillaume Theyssier
  \newblock Topological Dynamics of 2D Cellular Automata.  \newblock {\em
    CIE'08}, Lecture Notes in Computer Science, Volume 5028, 2008, Pages
  523-532.

\bibitem[MT99]{Mazoyer-Terrier-1999}
Jacques Mazoyer and V\'eronique Terrier.
\newblock Signals in one-dimensional cellular automata.
\newblock {\em  Theoretical Computer Science}, Volume 217, Issues 1, 28 March 1999, Pages 53-80.

\bibitem [Wei00]{Spies-Weihrauch-Zheng-2000}
Klaus Ambos-Spies, Klaus Weihrauch and Xizhong Zheng.
\newblock Weakly computable real numbers
\newblock {\em Journal of complexity}, Volume 16, Issue 4, December 2000, pages 676-900.


\bibitem[Wol84]{Wolfram-1984}
Stephen Wolfram.
\newblock Universality and complexity in cellular automata.
\newblock {\em Phys. D}, 10(1-2):1--35, 1984.
\newblock Cellular automata (Los Alamos, N.M., 1983).

\end{thebibliography}
\end{document}